\newtheorem{Theorem}{Theorem}[part]
\newtheorem{Proposition}{Proposition}[part]
\newtheorem{Assumption}{Assumption}[part]
\newtheorem{Lemma}{Lemma}[part]
\newtheorem{Corollary}{Corollary}[part]
\newtheorem{Remark}{Remark}[part]
\newtheorem{Example}{Example}[part]
\makeatletter \@addtoreset{equation}{section}
\DeclareMathOperator{\sgn}{sgn}
\newcommand{\cA}{\mathcal{A}}
\newcommand{\cB}{\mathcal{B}}
\newcommand{\cC}{\mathcal{C}}
\newcommand{\cD}{\mathcal{D}}
\newcommand{\cF}{\mathcal{F}}
\newcommand{\cG}{\mathcal{G}}
\newcommand{\cH}{\mathcal{H}}
\newcommand{\cI}{\mathcal{I}}
\newcommand{\cL}{\mathcal{L}}
\newcommand{\cM}{\mathcal{M}}
\newcommand{\cR}{\mathcal{R}}
\newcommand{\cS}{\mathcal{S}}
\newcommand{\cT}{\mathcal{T}}
\newcommand{\cU}{\mathcal{U}}
\newcommand{\cV}{\mathcal{V}}
\newcommand{\cY}{\mathcal{Y}}
\newcommand{\cZ}{\mathcal{Z}}
\renewcommand{\P}{\mathbb{P}}
\renewcommand{\H}{\mathbb{H}}
\newcommand{\R}{\mathbb{R}}
\newcommand{\Z}{\mathbb{Z}}
\def \eproof{\hbox{ }\hfill$\Box$}
\newcommand{\ud}{\, \mathrm{d}}
\newcommand{\HYP}[1]
    {\ensuremath{({H#1} ) }}
\newcommand{\1}{{\bf 1}}
\newcommand{\set}[1]
    {\ensuremath{\{ #1 \}}}
\newcommand{\HP}[1] %L2DPDT sur 0,T
    {\ensuremath{\mathscr{H}^{#1}}}
\newcommand{\esp}[1]{\ensuremath{\mathbb{E} \! \left[#1\right] }}
\DeclareMathOperator{\diag}{diag}
\DeclareMathOperator{\Tr}{Tr}
\newcommand{\we}{{}^{\epsilon\!}w}
\newcommand{\ce}{{}^{\epsilon\!}c}
\title{A numerical scheme for the quantile hedging problem}
\author{Cyril B\'en\'ezet\thanks{UFR de Math{\'e}matiques \& LPSM, Universit{\'e} Paris Diderot, B{\^a}timent Sophie Germain, 8 place Aur{\'e}lie Nemours, 75013 Paris, France
({\tt benezet@lpsm.paris, chassagneux@lpsm.paris})},
Jean-Fran{\c{c}}ois Chassagneux\footnotemark[1], Christoph Reisinger\thanks{Mathematical Institute, University of Oxford, Woodstock Road, Oxford, OX2 6GG, United Kingdom ({\tt christoph.reisinger@maths.ox.ac.uk})}
  % Jean-Francois CHASSAGNEUX
%             \\\small  DÈpartement de MathÈmatiques,
%             \\\small  \small Universit{\'e} d'Evry Val d'Essone,  
%             \\\small  \small and CREST 
%              \\\small  \sf chassagneux@ensae.fr 
%		\and
%             Romuald ELIE
%             \\\small  \small CEREMADE, CNRS, UMR 7534,
%             \\\small  \small Universit{\'e} Paris-Dauphine,
%             \\\small  and CREST
%              \\\small  \sf elie@ensae.fr 
%		\and
%		Idris KHARROUBI
%             \\\small  LPMA,  CNRS, UMR 7599
%             \\\small  Universit\'e Paris 7, 
%             \\\small  and CREST
%             \\\small \sf kharroubi@ensae.fr
              }
\begin{document}
\maketitle

\begin{abstract}
We consider the numerical approximation of the quantile hedging price in a non-linear market. In a Markovian framework, we propose a numerical method based on a 
Piecewise Constant Policy Timestepping (PCPT) scheme coupled with a monotone finite difference approximation.
We prove the convergence of our algorithm combining BSDE arguments with the {Barles \& Jakobsen and Barles \& Souganidis} approaches for non-linear equations. In a numerical section, we illustrate the efficiency of our scheme by considering a financial example in a market with imperfections.
\end{abstract}

\vspace{5mm}

\noindent{\bf Key words:} {Quantile hedging, BSDEs, monotone approximation schemes} %BSDE with oblique reflections,  Switching problems.

\vspace{5mm}

%\noindent {\bf MSC Classification (2000):}  %93E20, 65C99, 60H30.
% \newpage

%!TEX root = main.tex

\section{Introduction}

In this work, we study the numerical approximation of the \emph{quantile hedging price} of a European contingent claim in a market with possibly some imperfections. The quantile hedging problem is a specific case of a broader class of approximate hedging problems. It consists in finding the minimal {initial endowment}
 of a portfolio that will allow the hedging a European claim with a given probability $p$ of success, the case $p=1$ corresponding to the classical problem of (super)replication. This approach has been made popular by the work of F\"ollmer and Leukert \cite{FL99} who provided a closed form solution in a special setting.
 
 The first PDE characterisation was introduced by \cite{bouchard2009stochastic} in a possibly incomplete market setting with portfolio constraints.
 Various extensions have been considered since this work: to jump dynamics \cite{moreau2011stochastic}; to the Bermudan case \cite{BBC16} and American case \cite{dumitrescu2017bsdes}; to a non-Markovian setting \cite{BER15, dumitrescu2016bsdes}; and to a finite number of quantile constraints \cite{bouchard2012stochastic}.
 
 Except for \cite{BBC16, bouchard2012stochastic}, all the aforementioned works are of a theoretical nature.
 The lack of established numerical methods for these problems is a clear motivation for our study. We now present in more detail the quantile hedging problem and the new numerical method we introduce and study in this paper.

On a complete probability space $\left(\Omega,  \cF, \P\right)$, we consider  a
$d$-dimensional Brownian motion  $(W_t)_{t \in [0,T]}$ and denote by $(\cF_t)_{t \in [0,T]}$ its natural filtration. We suppose that all the randomness comes from the Brownian motion and assume that $\cF = \cF_T$.

\noindent  Let $\mu:\R^d \to \R^d$, $\sigma:\R^d\to\cM_d(\R)$, where $\cM_d(\R)$ is the set of $d \times d$ matrices with real entries, $\textcolor{black}{f}:[0,T]\times\R^d
\times\R\times\R^d \to \R$ be Lipschitz continuous functions, with Lipschitz constant $L$. 
 
 \noindent For $(t,x,y) \in [0,T]\times \R^d \times \R$ and $\nu \in \H^2$, which denotes the set of
predictable square-integrable processes, we consider the solution
$(X^{t,x}, Y^{t,x,y,\nu})$ to the following stochastic differential
equations:
\begin{eqnarray*}
  X_s &=& x + \int_t^s \mu(X_u)\ud u + \int_t^s \sigma(X_u)\ud W_u,\\
  Y_s &=& y - \int_t^s f(u,X_u,Y_u,\nu_u)\ud u + \int_t^s \nu_u \ud W_u, \qquad s \in [t,T].
\end{eqnarray*}
%where $\nu$ is a control  process, $\nu$ in $\H^2$, which denotes the set of
%predictable square-integrable processes. 
%\textcolor{black}{add $f(t,x,0,0)=0$ and lipschitz continuity?}
In the financial applications we are considering, $X$ will typically represent the log-price of risky assets, the control process $\nu$ is the amount invested in the risky assets, and the function $f$ is non-linear to allow to take into account some market imperfections in the model. 
A typical financial example, which will be investigated in the numerical section, is the following:
\begin{Example}
{The underlying diffusion $X$ is a one-dimensional Brownian motion with constant drift $\mu \in \R$ and volatility $\sigma > 0$. There is a constant borrowing rate $R$ and a lending rate $r$ with $R\ge r$. } In this situation, the function $f$ is given by:
  \begin{align*}
   f(t,x,y,z) = -r y - \sigma^{-1}\mu z + (R-r)(y- \sigma^{-1} z)^-.
  \end{align*}
\end{Example}

\noindent 
The quantile hedging problem corresponds to the following stochastic control problem:
%called
%\emph{{\color{black} the} quantile
%  hedging {\color{black} problem}}:
{\color{black}  for $(t,x,p) \in [0,T] \times \textcolor{black}{\R^d} \times [0,1]$ find}
  % \emph{stochastic target with controlled loss}:
\begin{align} \label{eq:prob1} v(t,x,p) &:= \inf \left\{ y \ge 0:
    \exists \nu \in \H_2,\; \P\left(Y^{t,x,y,\nu}_T \ge
      g(X^{t,x}_T)\right) \ge p \right\}.
\end{align}

\noindent
{The main objective of this paper is to design a numerical procedure to approximate the function $v$ by discretizing an associated non-linear PDE
first derived in \cite{bouchard2009stochastic}. 
A key point in the derivation} of this PDE is to observe that the above problem can be reformulated as a classical stochastic target problem by introducing a new control process representing the conditional probability of success. To this end, for $\alpha \in \H^2$, we denote
\begin{align*}
P^{t,p,\alpha}_s := p + \int_t^s \alpha_s \ud W_s\;, \; t \le s \le T\;,
\end{align*}
and by $\cA^{t,p}$ the set of $\alpha$ such that $P^{t,p,\alpha}_\cdot \in [0,1] $. The problem \eqref{eq:prob1} can be rewritten as
\begin{align*}
v(t,x,p) &:= \inf \left\{ y \ge 0:
    \exists (\nu,\alpha) \in (\H_2)^2,\; Y^{t,x,y,\nu}_T \ge
      g(X^{t,x}_T)\1_{\set{P^{t,p,\alpha}_T>0}}  \right\}
\end{align*}
(see Proposition 3.1 in \cite{bouchard2009stochastic} for details).
In our framework, the above singular stochastic control problem admits a representation in terms of a non-linear expectation,
{generated by} a Backward Stochastic Differential Equation (BSDE), %see \cite{BER15},
\begin{align}\label{eq proba rep v}
v(t,x,p) = \inf_{\alpha \in \cA^{t,p}} \cY_t^\alpha
\end{align}
where $(\cY^\alpha,\cZ^\alpha)$ is the solution to
\begin{align*}
\cY^\alpha_s = g(X^{t,x}_T)\1_{\set{P^{t,p,\alpha}_T>0}} + \int_{s}^T f(s,X_s,\cY^\alpha_s,\cZ^\alpha_s) \ud s - \int_s^T \cZ^\alpha_s \ud W_s\;, \; t \le s \le T\;.
\end{align*}
The article \cite{BER15} justifies the previous representation and proves a dynamic programming {principle} for the  control problem in a general setting. In the Markovian setting, this would lead naturally to the following PDE for $v$ in $[0,T)\times\R^d\times(0,1)$:
\begin{align}\label{eq natural pde}
``- \partial_t \varphi + \sup_{a \in \R^d} F^a(t,x,\varphi,D\varphi,D^2\varphi) = 0 \," 
\end{align}
 where for  $ (t,x,y)\in [0,T]\times\R^d\times\R^{+}$,
% times\R^d\times\R\times\s^d\times\R^{d}\times\R,\,
$q:=\left(\begin{matrix}q^{x}\\q^p\end{matrix}\right)\in\R^{d+1}$ and $A:=\left(\begin{matrix}A^{xx}&A^{xp}\\A^{xp^\top}&A^{pp}\end{matrix}\right)\in\mathbb{S}^{d+1}$, $A^{xx} \in \mathbb{S}^{d}$, denoting $\Xi :=(t,x,y,q,A)$, we define
\begin{align}
F^a(\Xi) &:= -f(t,x,y,\mathfrak{z}(x,q,a)) - \cL(x,q,A,a)\;, \label{eq de Fa}
\end{align}
with
\begin{align}
\mathfrak{z}(x,q,a) &:= q^x \sigma(x) + q^p a\;, \label{eq de za}
\\
\cL(x,q,A,a) &:= \mu(x)^\top q^x+ \frac12 \Tr\left[\sigma(x)\sigma(x)^\top A^{xx} \right] + \frac{|a|^2}{2}A^{pp} + a^\top \sigma(x)^\top A^{xp}. \label{eq de La}
\end{align}
%The boundary condition are given by
%\textcolor{black}{choose formulation}.
%\\
The PDE formulation in \eqref{eq natural pde} is not entirely correct as the supremum part may degenerate and it would require using semi-limit relaxation to be mathematically rigourous. We refer to \cite{bouchard2009stochastic}, where it has been obtain in a more general context. 
We shall use an alternative PDE formulation to this ``natural'' one \eqref{eq natural pde}, which we give at the start of Section \ref{se-pcpt}.

\noindent Moreover, the value function $v$ {continuously}
satisfies the following boundary conditions in the $p$-variable:
\begin{align}
  v(t,x,0) &= 0 \;\text{ and }\;
  v(t,x,1) = V(t,x)\mbox{ on }[0,T] \times (0,\infty)^d, \label{eq bc in p}
  %
 % v(T,x,p) &=  g(x)1_{p \neq 0}\mbox{ on }(0,\infty)^d \times [0,1],
\end{align}
where $V$ is the super-replication price of the contingent
claim with payoff $g(\cdot)$. %\textcolor{magenta}{equation pour $V$?}

\noindent It is also known that $v$ has a discontinuity as $t \to T$. By definition, the terminal condition is
\begin{align*}
\R^d\times[0,1] \ni (x,p) \mapsto g(x)1_{p > 0} \in \R^+\;,
\end{align*}
but  {the values which are continuously attained are}
obtained by convexification \cite{bouchard2009stochastic}, namely 
\begin{align}
  v(T-,x,p) = pg(x) \mbox{ on } \R^d \times [0,1], \label{eq bc at T}
\end{align}
and we shall work with this terminal condition at $t=T$ from now on.
%\\
%From Theorem ?? in \cite{BC17}, we now that $v$ is the unique viscosity solution of \eqref{eq rep pde} with the boundary condition \eqref{eq bc in p}-\eqref{eq bc at T}.

%s is invertible, 
%The specific shape of the coefficient functions for the diffusion $\cX$ is motivated by financial modeling where each component of $\cX$ represents the price of a risky asset. In addition, we thus assume that the process $\cX^{t,x}$ almost-surely takes its values in $(0,\infty)^d$. In a financial setting, $\nu$ represents then the quantity of money invested in risky assets. 
%\\
%To guarantee that the above equations are well posed, we assume that:
%\begin{enumerate}
%\item the function $\mu_\cX$ and $\sigma_\cX$ are (weakly) differentiable with derivatives bounded and with compact support;
%\item the function $\mu_Y$ is Lipschitz-continuous with respect to $(t,x,y)$.
%\end{enumerate} 

\vspace{4pt}
\noindent 
To design the numerical scheme to  approximate $v$, we use the following strategy:
\begin{enumerate}
\item Bound and discretise the set where the controls $\alpha$ take their values.
\item Consider an associated Piecewise Constant Policy Timestepping (PCPT) scheme for the {control processes
%\sout{controls in the original problem, as in \cite{BJ07} and \cite{K00}}
}.
\item Use a monotone finite difference scheme to approximate in time and space the PCPT {solution} resulting from 1.\ \& 2.
\end{enumerate} 

\noindent
{%\fbox{Work in progress.}
The approximation of controlled diffusion processes by ones where policies are piecewise constant in time was first analysed by \cite{Kry99};
in \cite{K00}, this procedure is used in conjunction with Markov chain approximations to diffusion processes to
construct fully discrete approximation schemes to the associated Bellman equations and to derive their convergence order.
An improvement to the order of convergence from \cite{Kry99} was shown recently in \cite{jakobsen2019improved} using a refinement of Krylov's original, probabilistic techniques.

\noindent
Using purely viscosity solution arguments for PDEs, error bounds for such approximations are derived in \cite{BJ07}, which are weaker than those in \cite{Kry99} for the control approximation scheme, but improve the bounds in \cite{K00} for the fully discrete scheme.
In \cite{RF16}, using a switching system approximation introduced in \cite{BJ07}, convergence is proven for a generalised scheme where linear PDEs are solved piecewise in time on different meshes, and the control optimisation is carried out at the end of time intervals using possibly non-monotone, higher order interpolations.
An extension of the analysis in \cite{RF16} to jump-processes and non-linear expectations is given in \cite{DRZ18}.
}

\noindent Our first contribution is to prove that the approximations built in step 1.\ and 2.\ above are convergent {for the quantile hedging problem, which has substantial new difficulties compared to the settings considered in the aforementioned works}.
For this we rely heavily on the comparison theorem for the formulation in \eqref{eq rep pde} and we take advantage of the monotonicity property of the approximating sequences. The main new difficulties come from the non-linear form of the PDE {including unbounded controls}, and in particular the boundaries in the $p$-variable. {To deal with the latter especially, we} rely on some fine estimates for BSDEs to prove the consistency of the  {scheme including the strong boundary conditions} (see Lemma \ref{le perturb xi} and Lemma \ref{le cons 2}).

\noindent Our second contribution is to design the monotone scheme in step 3.\ and to prove its convergence. The main difficulties come here from the non-linearity of the {new term from the driver of the BSDE in the gradient} combined with
the degeneracy of the diffusion operator given in \eqref{eq de La}, and again the boundedness for the domain in $p$. In particular, a careful analysis of the consistency of the boundary condition is needed (see Proposition \ref{prop bound cond num}). %A REMETTRE\textcolor{magenta}{To Christoph: any other points to highlight, comments on the generic method?}

\noindent To the best of our knowledge, this is the first numerical method for the quantile hedging problem in this non-linear market specification. In the linear market setting, using a dual approach, \cite{BBC16} combines the solution of a linear PDE with Fenchel-Legendre transforms to tackle the problem of Bermudan quantile hedging. Their approach cannot be directly  adapted here due to the presence of the non-linearity. The dual approach in the non-linear setting would impose some  convexity assumption on the $f$ parameter and would require to solve fully non-linear PDEs. Note that here $f$ is only required to be Lipschitz continuous in $(y,z)$.
% \textcolor{red}{more:
%  on the In the case of a finite set of constraints, 
% PNL matching [explain the difference, constraint etc.] de bouchar and co?  Jiao-Tankov \cite{jiao2017hedging} \textcolor{red}{to check} proposed a discrete-time setting that allow? some computation in more regular problem than the one considered here.}
We believe that an interesting alternative to our method would be to extend the work of \cite{BBMZ09} to the non-linear market setting we consider here.
%\textcolor{magenta}{move to the end of intro, to show how much our technique differ: On the numerical side, there are few attempt to solve the problem.
 %In the linear market setting, using a dual approach [bbc16] combine PDE resolution with Fenchel-Legendre to tackle the problem of bermudean quantile hedging. In the case of a finite set of constraints, 
%PNL matching [explain the difference] de bouchar and co? 
%We chose here to directly approximate the highly non-linear PDE used to represent the quantile hedging problem (bouveret-chassagneux). We aim to work in non-linear market, so we cannot  rely on duality techniques. (possible different path by adapting the result of Boka etc.)}

%\textcolor{black}{This work opens the door to many interesting questions. We do not give any convergence rate, and it would be interesting to obtain these rates. Furthermore, this work heavily uses the Markovian structure of the problem and deals with the approximation of the value function through the PDE \eqref{eq rep pde}. The non-Markovian case of BSDEs with weak terminal conditions, introduced in \cite{BER15} is also left for further research. 
%Last, it is well-known that the function $v$ is convex in its third variable $p$, and duality methods as in \cite{BBC16} can be effective to deal with this problem.
%}

\vspace{4mm}
The rest of the paper is organised as follows. In Section \ref{se-pcpt}, we derive the control approximation and PCPT scheme associated with items $1.$ \& $2.$ above and prove their convergence. In Section \ref{se-monotone}, we present a monotone finite difference approximation which is shown to convergs to the semi-discrete PCPT scheme. In Section \ref{se-numerics}, we present numerical results {for a specific application and analyse the
observed convergence}. Finally, the appendix {contains some of the longer, more technical proofs and} collects useful background results used in the paper.

\paragraph{Notations} 
\noindent
$\diag(x)$ is the diagonal matrix of size $d$, whose diagonal is given by $x$.

\noindent
Let us denote by $\mathcal{S}$ the sphere in $\R^{d+1}$ of radius $1$ and by $\cD$ the set of vectors $\eta \in \cS$  such that their first component $\eta^1 = 0$. 
%\noindent 
For a vector $\eta \in \cS \setminus \cD$, we denote $\eta^\flat:=\frac1{\eta^1} (\eta^2,\dots,\eta^{d+1})^\top \in \R^d$.
%\noindent 
By extension, we denote, for $\cZ \subset \cS \setminus \cD$,
%\begin{align*}
$
\cZ^\flat := \set{ \eta^\flat \in \R^d \,|\, \eta \in \cZ}\;.
$
%\end{align*}

\vspace{4pt}
\noindent 
We denote by
$\cB\cC := L^\infty([0,T],\cC^0(\R^d\times[0,1]))$, namely the space of functions $u$ that are \textcolor{black}{ essentially bounded in time and continuous with respect to their space variable}.
%\vspace{4pt}
% \noindent \textcolor{red}{seems useless: We denote by
% $\cO = (0,T)\times\R^d\times(0,1)$ and $\partial_p \cO = [0,T)\times\R^d\times\set{0,1}$, $\partial_T \cO = \set{T}\times\R^d\times[0,1]$, $\partial_\star\cO = \partial_p \cO \bigcup \partial_T \cO$.
%   \begin{align*}
% \partial_p \cO &\bigcup \partial_T \cO
% \\&:=
% \\
% &\partial_\star \cO
%   \end{align*}
%   }
  %\noindent 
  The convergence in $\cC^0([0,T]\times\R^d)$ considered here is the local uniform convergence.

%%% Local Variables:
%%% mode: latex
%%% TeX-master: "main"
%%% End:

%\newpage
%\input{setup}

%\newpage
%!TEX root = main.tex
\section{Convergence of a discrete-time scheme}
\label{se-pcpt}

In this section, we design a Piecewise Constant Policy Timestepping  (PCPT) scheme which is convergent to the value function $v$ defined in \ref{eq proba rep v}.

% to \eqref{eq natural pde} or, more correctly  the 
%\eqref{eq rep pde}, where we
%now recall precisely the definition of the \emph{continuous} $\cH$ operator in \cite{BC17}: %\eqref{eq rep pde}:

Following \cite{BBMZ09}, it has been shown in \cite{BC17}, that the function  $v$ is  equivalently a viscosity solution of the following PDE
(see Theorems 3.1 and 3.2 in \cite{BC17}):
\begin{align}\label{eq rep pde}
  \cH(t,x,\varphi,\partial_t \varphi,D \varphi,D^{2} \varphi) = 0
 %\\ 
  %\nonumber &\sup_{\eta \in \cS_1} \; \eta_1^2 \Big[ -\partial_t \varphi - \mu_X(x)^\top D_x \varphi + f\left( t,x,\varphi, D_x\varphi^\top \diag(x) + \frac{\partial_p \varphi}{\eta_1} (\eta_2, \dots, \eta_{d+1}) \sigma^{-1}(x)\right) \\  &- \frac12 \Tr\left[ \sigma_X(x) \sigma_X(x)^\top D^{2}_{xx} \varphi \right] - \frac12 \sum_{i=2}^{d+1} \left(\frac{\eta_i}{\eta_1}\right)^2 \partial^2_{pp} \varphi - \frac{1}{\eta_1} (\eta_2, \dots, \eta_{d+1}) \sigma_X(x)^\top D^2_{xp} \varphi \Big] = 0,
\end{align}
in $(0,T)\times\R^d\times(0,1)$, where $\cH$ is a continuous operator %defined below, for which a comparison theorem hold (see Theorem 3.2 in \cite{BC17}):
%\cred{\fbox{Consider merging with informal version.}}
 \begin{align}\label{eq def H}
 \cH(\Theta) = \sup_{\eta \in \cS} H^\eta(\Theta)\;,
 \end{align}
where for  $ (t,x,y,b)\in [0,T]\times\R^d\times\R^{+}\times\R$,
% times\R^d\times\R\times\s^d\times\R^{d}\times\R,\,
$q:=\left(\begin{matrix}q^{x}\\q^p\end{matrix}\right)\in\R^{d+1}$ and $A:=\left(\begin{matrix}A^{xx}&A^{xp}\\A^{xp^\top}&A^{pp}\end{matrix}\right)\in \mathbb{S}^{d+1}$,
%Let us define, for $\eta \in \cS\setminus \cD$, $q=(q)$ and
and $\Theta := (t,x,y,b,q,A)$, we define
\begin{align}
H^\eta(\Theta) = (\eta^1)^2\left( -b \textcolor{black}{-} f(t,x,y,\mathfrak{z}(x,q,\eta^\flat)) - \cL(x,q,A,\eta^\flat) \right)\;,\;\text{ for } \eta \in \cS\setminus \cD\;.
\end{align}
Recall also the definition of $\cL$ and $\mathfrak{z}$ in \eqref{eq de za} and \eqref{eq de La}.  

This representation and its properties  are key in the proof of convergence. Loosely speaking, it is obtained by ``compactifying'' the set $\set{1}\times \R^d$ to the unit sphere $\cS$. A comparison theorem is shown in Theorem 3.2 in \cite{BC17}.

\vspace{4pt}
\noindent As partially stated in the introduction, we will work under the following assumption: % \HYP{}:
\begin{enumerate}
\item[\HYP{}(i)] The functions $b$, $\sigma$ are $L$-Lipschitz continuous and $g$ is bounded and $L_g$-Lipschitz continuous.
\item[(ii)] The function $f$ is measurable and for all $t\in[0,T]$, $f(t,\cdot,\cdot,\cdot)$ is $L$-Lipschitz continuous. For all $(t,x,z) \in[0,T]\times\R^d\times\R^d$, the function $y \mapsto f(t,x,y,z)$ is decreasing. Moreover,
\begin{align}\label{eq f00=0}
f(t,x,0,0) = 0.
\end{align}

\end{enumerate}
\noindent Under the above Lipschitz continuity assumption,  the mapping
$$ \cS\setminus \cD \ni \eta \mapsto H^\eta(\Theta) \in \R$$
extends continuously to $\cS$ by setting, for all $\eta \in \cD$,
\begin{align*}
H^\eta(\Theta) = -\frac12A^{pp}\;,
\end{align*}
see {Remark 3.1 in  \cite{BC17}}.

\begin{Remark}
(i) In \HYP{}(ii), the monotonicity assumption is not a restriction, as in a Lipschitz framework, the classical transformation $\tilde{v}(t,x,p) := e^{\lambda t}v(t,x,p)$ for $\lambda$ large enough allows to reach this setting; see Remark 3.3 in \cite{BC17} for details.
\\
(ii) The condition \eqref{eq f00=0} is a reasonable financial modelling assumption: It says that starting out in the market with zero initial wealth and making no investments will lead to a zero value of the wealth process.
\\
(iii)   Since  $f$ is decreasing and $g$ is bounded, it is easy to see that $|V|_\infty \le |g|_\infty$, where $V$ is the super-replication price.
\end{Remark}

%Let us define, for $\eta \in \cS\setminus \cD$ and
%$\Theta= (t,x,y,b,q,A)$
%\begin{align}
%H^\eta(\Theta) = (\eta^1)^2\left( -b + f(t,x,y,\mathfrak{z}(t,x,q,\eta^\flat)) - \cL(t,x,q,A,\eta^\flat) \right)
%\end{align}
%which extend continuously to $\cS$ by setting, for all $\eta \in \cD$,
%\begin{align*}
%H^\eta(\Theta) = -\frac12A^{pp}\;.
%\end{align*}

\subsection{Discrete set of control}
\label{sub-se disc control}
In order to introduce a discrete-time scheme which approximates the solution $v$ of 
\eqref{eq rep pde}--\eqref{eq bc at T}, 
%\eqref{eq rep pde}-\eqref{eq bc in p}-\eqref{eq bc at T}, 
we start by discretizing the set of controls $\cS$.

\noindent Let $({\cR}_n)_{n\ge 1}$ be an increasing sequence of closed subsets of $\cS\setminus\cD$ such that
%\begin{enumerate}
%\item $(\tilde{\cS}_n)$ is increasing in the sense $\tilde{\cS}_{n'} \subset \tilde{\cS}_n$ if $n'\le n$.
%\item 
\begin{align}\label{eq density Sn}
\overline{\bigcup_{n\ge 1} \cR_n} = \cS\;. 
\end{align}
%\item \textcolor{magenta}{$\sharp K_n < \infty$ and} for all $a \in K_n$, $|a| \le n$.
%\end{enumerate}
%
%\noindent We  define
%\begin{align*}
%\cH_n(\Theta) = \sup_{\eta \in \cR_n} H^\eta(\Theta)
%\end{align*}
%\textcolor{magenta}{
%Let us recall that $v$ is the unique viscosity solution to $\cH[v]=0$ with boundary conditions \eqref{bcond1}-\eqref{bcond2}-\eqref{bcond3}.
%}
%
%\vspace{2mm}
%
For $n \ge 1$, let $v_n: [0,T]\times\textcolor{black}{\R^d}\times[0,1] \to \R$ be
the unique continuous viscosity solution of the following PDE:
\begin{align}\label{pdeN-bis}
 \cH_n(t,x,\varphi,\partial_t\varphi,D\varphi,D^2\varphi)  = 0,
\end{align}
satisfying the boundary conditions
\eqref{eq bc in p}-\eqref{eq bc at T}, see Corollary \ref{co comp princ}. Above, the operator $\cH_n$ is naturally given by
\begin{align}
 \cH_n(\Theta) := \sup_{\eta \in \cR_n} H^\eta(\Theta)\;.
\end{align}

%We can now  state the following useful result.
\begin{Proposition}\label{pr conv bounding}
The functions $v_n$ converge to $v$ in $\cC^0([0,T]\times\R^d)$.
\end{Proposition}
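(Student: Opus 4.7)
The plan is to combine the monotonicity of the sequence $(v_n)$, induced by $\cR_n \subset \cR_{n+1}$ and hence $\cH_n \le \cH_{n+1} \le \cH$, with a classical viscosity stability argument and the comparison principle of Corollary \ref{co comp princ}. Since $v_n$ is a subsolution of $\cH_n = 0$ and $\cH_m \le \cH_n$ for $m \le n$, it satisfies $\cH_m(v_n,\cdots) \le 0$, so it is also a subsolution of $\cH_m = 0$; comparison with the supersolution $v_m$ of the same equation yields $v_n \le v_m$, and hence $(v_n)$ is non-increasing in $n$. The same reasoning applied to $v$---itself a subsolution of each $\cH_n = 0$ as $\cH_n \le \cH$---gives $v \le v_n$. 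Local uniform boundedness (e.g.\ $v \le v_n \le v_1$) then allows us to define $v_\infty := \inf_n v_n = \lim_n v_n \ge v$, which is upper semicontinuous as the infimum of continuous functions.

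The core of the argument is to show that $v_\infty$ is a viscosity subsolution of $\cH(\cdots) = 0$ in $(0,T) \times \R^d \times (0,1)$. Given a smooth test function $\varphi$ and a strict local maximum $(t_0,x_0,p_0)$ of $v_\infty - \varphi$, a standard argument using the continuity of each $v_n$, the monotone convergence $v_n \downarrow v_\infty$, and the upper semicontinuity of $v_\infty$ produces local maximisers $(t_n,x_n,p_n)$ of $v_n - \varphi$ with $(t_n,x_n,p_n) \to (t_0,x_0,p_0)$ and $v_n(t_n,x_n,p_n) \to v_\infty(t_0,x_0,p_0)$. At each $(t_n,x_n,p_n)$ the subsolution inequality reads $\cH_n(\Theta_n) \le 0$, where $\Theta_n$ bundles the arguments. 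For any $\eta \in \cS$, the density \eqref{eq density Sn} yields $\eta_n \in \cR_n$ with $\eta_n \to \eta$ (using the increasing nature of $(\cR_n)$); the pointwise bound $H^{\eta_n}(\Theta_n) \le \cH_n(\Theta_n) \le 0$ passes to the limit thanks to the joint continuity of $(\eta,\Theta) \mapsto H^\eta(\Theta)$, including its continuous extension across $\cD$ where $H^\eta(\Theta) = -\tfrac12 A^{pp}$, and gives $H^\eta(\Theta_0) \le 0$. Taking the supremum over $\eta \in \cS$ yields the required inequality $\cH(\Theta_0) \le 0$.

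The boundary data in \eqref{eq bc in p}--\eqref{eq bc at T} are identical for every $v_n$ and are therefore inherited by $v_\infty$, so Corollary \ref{co comp princ} applies and delivers $v_\infty \le v$; combined with $v_\infty \ge v$, this forces $v_\infty = v$. As $(v_n)$ is a non-increasing sequence of continuous functions converging pointwise to the continuous function $v$, Dini's theorem upgrades the convergence to locally uniform convergence on compact subsets of $[0,T]\times\R^d\times[0,1]$, which is the announced convergence in $\cC^0$. The main obstacle in carrying out this plan is the simultaneous presence of unbounded original controls and the degeneracy of $H^\eta$ on $\cD$; both are handled by the compactified-sphere formulation \eqref{eq def H} with its continuous extension across $\cD$, which is precisely what legitimises the density argument and the limit passage inside the supremum, as well as the form of comparison afforded by Corollary \ref{co comp princ}.
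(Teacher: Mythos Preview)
Your argument is essentially the paper's, with two cosmetic differences: you work with the monotone pointwise limit $v_\infty$ instead of the half-relaxed limits $\overline{v},\underline{v}$ (these coincide here thanks to monotonicity), and you carry out the subsolution stability argument by hand at the test-function level, whereas the paper packages the same step as ``$\overline{v}$ is a subsolution of $\underline{\cH}=0$'' together with the Hamiltonian identity $\underline{\cH}=\cH$ proved via density and Dini. Both routes hinge on the same density property \eqref{eq density Sn} and the continuous extension of $H^\eta$ across $\cD$.

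There is, however, one genuine slip. In the final step you invoke Corollary~\ref{co comp princ} to conclude $v_\infty\le v$. That corollary (and the underlying Proposition~\ref{pr appendix comp princ}) is stated only for a \emph{finite} control set $\cR\subset\cS\setminus\cD$; it covers the intermediate comparisons for $\cH_n$ but not the limit equation $\cH$, whose control set is the full sphere $\cS$. The comparison principle you need at the end is the one for \eqref{eq rep pde} itself, namely Theorem~3.2 in \cite{BC17}, which the paper cites explicitly. With that reference corrected your proof goes through.
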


%\begin{bluetext}
\begin{proof}
1. For $n' < n$, we observe that $v_{n'}$ is a super-solution of 
\eqref{pdeN-bis}
as $\cR_{n'} \subset \cR_n$. Using the comparison result of Proposition \ref{pr appendix comp princ}, we
obtain that $v_{n'}\ge v_n$. Similarly, using the comparison principle (\cite{BC17}, Theorem 3.2), we obtain that
$v_n \ge v$, for all $n\ge 1$.
\\
For all $(t,x,p) \in [0,T]\times(0,\infty)^d\times[0,1]$, let:
  \begin{align}
    \overline v(t,x,p) &= \lim_{j \to \infty} \sup \left\{v_n(s,y,q) : n \ge j \mbox{ and } \|(s,y,q)-(t,x,p)\| \le \frac 1 j \right\}, \\
    \underline v(t,x,p) &= \lim_{j \to \infty} \inf \left\{v_n(s,y,q) : n \ge j \mbox{ and } \|(s,y,q)-(t,x,p)\| \le \frac 1 j \right\}.
  \end{align} 
 From the above discussion, recalling that $v_1$ and $v$ are continuous, we have
 \begin{align*}
 v_1 \ge \overline{v} \ge \underline{v} \ge v\,,
 \end{align*}
 which shows that $\overline{v}$ and  $\underline{v}$ satisfy the boundary conditions \eqref{eq bc in p}-\eqref{eq bc at T}.
  
\noindent In order to prove the theorem, it is enough to show that
$\overline v$ is a viscosity subsolution
 of \eqref{eq rep pde} {and $\underline v$ is a viscosity supersolution (which follows similarly and is therefore omitted)}.
 The comparison principle (\cite{BC17}, Theorem 3.2) then implies that
$v = \overline v = \underline v$, and it follows from \cite{crandall1992user}, Remark 6.4 that the convergence 
$v_n \to v$ as ${n \to \infty}$ is uniform on every compact set.
Using  Theorem 6.2 in \cite{achdou2013hamilton}, we obtain that $\overline{v}$ is a subsolution to
\begin{align*}
\underline{\cH}(t,x,\varphi,\partial_t \varphi, D\varphi,D^2\varphi) = 0 \text{ on } (0,T)\times(0,\infty)^d\times(0,1)\;,
\end{align*}
where 
\begin{align*}
\underline{\cH}(\Theta)=\lim_{j \to \infty} \inf \left\{\cH_n(\Theta') : n \ge j \mbox{ and } \|\Theta-\Theta'\| \le \frac 1 j \right\}.
\end{align*}
In the next step, we prove that 
$\underline{\cH} = \cH$, which concludes the proof of the proposition.
\\
2. Let us denote by $\mathfrak{P}_n:\cS \rightarrow \tilde{\cS}_n$ the closest neighbour projection on the closed set  $\tilde{\cS}_n$. From \eqref{eq density Sn}, we have that $\lim_{n\rightarrow \infty}\mathfrak{P}_n(\eta)=\eta$, for all $\eta \in \cS$.
We also have that
\begin{align*}
\cH(\Theta) = H^{\eta^*}(\Theta)
\end{align*}
{for some $\eta^* \in {\rm argmax}_{\eta \in \cS} H^{\eta}(\Theta)$}
as $\cS$ is compact. Let us now introduce $\eta_n := \mathfrak{P}_n(\eta^*)$ and by continuity of $H$, we have
\begin{align*}
H^{\eta_n}(\Theta) \rightarrow \cH(\Theta)\;.
\end{align*}
We also observe that
\begin{align*}
H^{\eta_n}(\Theta) \le \cH_n(\Theta) \le \cH(\Theta)\;. 
\end{align*}
This proves the convergence $\cH_n(\Theta) \uparrow \cH(\Theta)$, for all $\Theta$.
As $\cH$ is continuous, we conclude by using Dini's Theorem that the convergence is uniform on compact subsets, leading to
$\underline{\cH}=\cH$.
\eproof
\end{proof}
%\end{bluetext}

\subsection{The PCPT scheme} 
%\label{conv3}
\label{sub-se pcpt schme}
From now on, we fix $n\ge 1$ and  $\cR_n$ the associated discrete set of control. 
%{\color{red}\fbox{Perhaps consider different name?}}
 For  $ (t,x,y)\in [0,T]\times\R^d\times\R^{+}$,
% times\R^d\times\R\times\s^d\times\R^{d}\times\R,\, 
$q\in\R^{d+1}$ and $A\in\mathbb{S}^{d+1}$, denoting $\Xi :=(t,x,y,q,A)$, we define
\begin{align*}
\cF_n(\Xi) = \sup_{a \in \cR_n^\flat} F^a(\Xi) \text{ with }  F^a(\Xi) := -f(t,x,y,\mathfrak{z}(x,q,a)) - \cL(x,q,A,a)\;.
\end{align*}
%where $\cR \subset \cS\setminus\cD$ with a \emph{finite} number of elements.

\noindent Following the proof of Corollary \ref{co comp princ}
in the appendix, we easily observe that $v_n$ is also the unique viscosity solution to
\begin{align}
-\partial_t \varphi + \cF_n(t,x,\varphi, D\varphi,D^2 \varphi) = 0 \text{ on } [0,T)\times \R^d \times (0,1) 
\label{eq pde more classic}
\end{align}
with the same boundary conditions \eqref{eq bc in p}--\eqref{eq bc at T}. 
\textcolor{black}{The above PDE is written in a more classical way and we will mainly consider this form in the sequel.}
Let us observe in particular that $K := \cR_n^\flat$ is a discrete subset of $\R^d$, such that \eqref{eq pde more classic} appears as a natural discretisation of  \eqref{eq natural pde} and will be simpler to study. %which simplifies greatly the study of \eqref{eq pde more classic} compared to \eqref{eq natural pde}.

%, $m \ge 1$ are fixed, and 
%for ease of
%notation, we denote $K := K_{n,m}$.\\

\vspace{2mm}
\noindent To approximate ${v}_n$, we  consider an adaptation of the PCPT scheme in  \cite{K00,BJ07}, and especially \cite{DRZ18}, to our setting, as described below.

\noindent
For $\kappa \in \mathbb{N}^*$, we consider  grids of the time interval $[0,T]$:
$$\pi = \{0 =: t_0< \dots<t_k<\dots< t_\kappa := T\},$$
and denote $|\pi|:=\max_{0\le k\le \kappa} (t_{k+1}-t_k)$.

\vspace{4pt}
\noindent For $0 \le t < s \le T$, $a \in K$ and
a continuous $\phi: \R^d \times [0,1] \to \R$, we denote by $S^a(s,t,\phi):[t,s]\times\R^d\times[0,1] \rightarrow \R$ the
unique solution of 
    \begin{align}
      \label{PDEaconst} -\partial_t \varphi  + F^a(t,x,\varphi,D \varphi, D^2 \varphi) = 0 \text{ on } [t,s) \times \R^d \times (0,1), \\
      \label{TermCond} \varphi(s,x,p) =  \phi(x,p) \quad &\mbox{ on } \R^d \times [0,1], \\
      \label{bCond1} \varphi(r,x,0) = B^0(t,s,\phi)(r,x), \quad \varphi(r,x,1) =B^1(t,s,\phi)(r,x)\quad &\mbox{ on } [t,s) \times \R^d.
    \end{align}
 
 %\begin{bluetext}   
 \noindent The function  ${B}^p(t,s,\phi)$ for $p \in \set{0,1}$ is solution to
 \begin{align}\label{eq pde zero control}
 -\partial_t \varphi + F^0(r,x,\varphi, D\varphi,D^2 \varphi) = 0 \text{ on } [t,s)\times \R^d \;,
 \end{align}
 with terminal condition  ${B}^p(t,s,\phi)(r,x)(s,x)=\phi(x,p)$. 
 %(resp. ${B}^p(t,s,\phi)(r,x)(s,x)=\phi(x,1)$).
 %\end{bluetext}

\noindent The solution to the PCPT scheme associated with the grid $\pi$ is then the
function $v_{n,\pi}:[0,T]\times \R^d\times[0,1]$ such that
\begin{align}
  \mathfrak{S}(\pi, t, x, p, v_{n,\pi}(t,x,p), v_{n,\pi}) = 0,
\end{align}
where for a grid $\pi$, $(t,x,p,y)\in [0,T]\times\R^d\times[0,1]\times\R^+$ and a function $u \in \cB\cC$,
\begin{align}
\label{ThScheme} 
\mathfrak{S}(\pi,t,x,p,y,u) = 
  \left\{
  \begin{array}{rl}
  y - \min_{a \in K} S^a\left(t_\pi^+,t_\pi^-, u(t_\pi^+, \cdot)\right)(t,x,p) & \mbox { if } t < T, \\
 y - \hat{g}(x)p & \mbox{ otherwise},
  \end{array} \right. 
\end{align}
with 
\begin{align}\label{eq notation grid}
t_\pi^+ := \inf \{r \in \pi \, | \, r > t\} \;\text{ and }
\;t_\pi^- := \sup \{r \in \pi \,|\, r \le t\}.
\end{align}
We will drop the subscript $\pi$ for brevity whenever we consider a fixed mesh.

\vspace{4pt}
\noindent Let us observe that the function $v_{n,\pi}$ can be alternatively described by the following backward
algorithm:
\begin{enumerate}
\item Initialisation: set $v_{n,\pi}(T,x,p) := {g}(x)p$,
  $x \in \R^d \times [0,1]$.
\item Backward step: 
%\begin{enumerate}[a.]
  %\item 
  For $k = \kappa-1,\dots,0$, compute $w^{k,a} := S^a(t_k,t_{k+1},v_{n,\pi}(t_{k+1},\cdot))$ and set
  \begin{align}\label{eq de w-pi-backward}
  v_{n,\pi}(\cdot) := \inf_{a \in K}w^{k,a}\;.  
  \end{align}
  
%    on $\R^d \times [0,1]$, for each $a \in K$, let
%    $w^{j,a}:[t_j,t_{j+1}]\times\R^d\times[0,1]\to\R$ be the solution
%    to the following PDE:
%    \begin{align}
       %        \label{PDEaconst} -\partial_t \varphi  - \widetilde\cL^a\varphi + \widetilde f\left(t,e^x,\varphi,\widetilde\nabla^a\varphi\right) = 0 \quad &\mbox{ on } [t_j,t_{j+1}) \times \R^d \times (0,1), \\
%      \label{TermCond} \varphi(t_{j+1},x,p) =  w_\pi(t_{j+1},x,p) \quad &\mbox{ on } \R^d \times [0,1], \\
%      \label{bCond1} \varphi(t,x,0) = 0, \quad \varphi(t,x,1) = \widetilde{V}(t,x) \quad &\mbox{ on } [t_j, t_{j+1}) \times \R^d.
%    \end{align}
%  \item Set
%   \begin{align}
%   \label{takemax}
%   w_\pi(t,x,p) = \min_{a \in K_{n,m}} w^{j,a}(t,x,p)  \quad &  \mbox{ on } [t_j,t_{j+1}) \times \R^d \times [0,1].
%  \end{align}
%  \end{enumerate}
\end{enumerate}
%\sout{More precisely,}
%For $0 \le t < s \le T$, $a \in K$ and
%$\phi: \R^d \times [0,1] \to \R$, we denote by $S^a(s,t,\phi)$ the
%unique solution of \eqref{PDEaconst} on $[t,s]$ satisfying to the
%terminal condition $S^a(s,t,\phi)(s,\cdot) = \phi$ on
%$(0,\infty)^d \times [0,1]$ and the boundary conditions
%\eqref{bCond1}.
%\\
%The solution to the scheme associated with the grid $\pi$ is then the
%function $w_\pi$ such that:
%\begin{align}
% \mathfrak{S}(\pi, t, x, p, w_\pi(t,x,p), w_\pi) = 0,
%\end{align}
%with:
%\begin{align}
%\label{ThScheme} 
%\mathfrak{S}(\pi,t,x,p,y,u) = 
%  \left\{
%  \begin{array}{rl}
%  \left(y - \min_{a \in K} S^a\left(t^+, t^-, u(t^+, \cdot)\right)(t,x,p)\right) & \mbox { if } t < T, \\
%  \left(y - \widetilde{g}(x)p\right) & \mbox{ otherwise},
%  \end{array} \right. 
%\end{align}
%where $t^+ := \inf \{r \in \pi \, | \, r > t\}$ and
%$t^- := \sup \{r \in \pi \,|\, r \le t\}$.

%\textcolor{red}{comment that for the scheme the boundary condition are indeed $0$ at $p=0$ and $V$ at $p=1$.}  

\begin{Remark} In our setting, we can easily identify the boundary values (of the scheme):
\begin{enumerate}[(i)]
\item At $p=0$, the terminal condition is $\phi(T,x)=0$ (recall that $v(T,x,p)=g(x)\1_{\set{p>0}}$), and this propagates through the backward iteration, so that $v_{n,\pi}(t,x,0)=0$ for all $(t,x)\in [0,T]\times\R^d$.
\item At $p=1$, the terminal condition is  $\phi(T,x)=g(x)$ and the boundary condition is thus given by $v_{n,\pi}(t,x,0)=V(t,x)$ for all $(t,x)\in [0,T]\times\R^d$, where $V$ is the super-replication price.
\end{enumerate}
\end{Remark}

\noindent The main result of this section is the following.

\begin{Theorem}\label{th conv disc time}
  The function $v_{n,\pi}$ converges to $v_n$ in $\cC^0$ as
  $|\pi| \rightarrow 0$.
\end{Theorem}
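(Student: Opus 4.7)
The plan is to apply the Barles--Souganidis framework adapted to PCPT schemes, working through the half-relaxed limits
\begin{align*}
\overline{v}_n(t,x,p) := \limsup_{\substack{|\pi|\to 0 \\ (s,y,q)\to(t,x,p)}} v_{n,\pi}(s,y,q), \qquad
\underline{v}_n(t,x,p) := \liminf_{\substack{|\pi|\to 0 \\ (s,y,q)\to(t,x,p)}} v_{n,\pi}(s,y,q).
\end{align*}
Once I show that $\overline{v}_n$ is a subsolution and $\underline{v}_n$ a supersolution of \eqref{eq pde more classic} and that both satisfy the boundary conditions \eqref{eq bc in p}--\eqref{eq bc at T} strongly, the comparison principle (Corollary \ref{co comp princ}) yields $\overline{v}_n \le v_n \le \underline{v}_n$; the reverse inequality $\underline{v}_n \le \overline{v}_n$ holds by construction, and Remark 6.4 of \cite{crandall1992user} upgrades this to local uniform convergence $v_{n,\pi} \to v_n$.

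Stability and monotonicity of the scheme follow from the BSDE representation of each local solver. For fixed $a \in K$ and terminal datum $\phi$, $S^a(t_{k+1}, t_k, \phi)(t_k, x, p)$ equals the $t_k$-value of the BSDE with forward state $(X,P)$ solving $dX = \mu(X)\ud u + \sigma(X)\ud W$ and $dP = a \cdot \ud W$ started at $(x,p)$, Lipschitz driver $f$, and terminal condition $\phi(X_{t_{k+1}}, P_{t_{k+1}})$; the boundary data at $p \in \{0,1\}$ are obtained from the auxiliary $a = 0$ BSDE defining $B^0$ and $B^1$. Classical $L^\infty$ estimates for Lipschitz BSDEs together with the normalisation $f(t,x,0,0) = 0$ give $\|S^a(t_{k+1},t_k,\phi)\|_\infty \le e^{L(t_{k+1}-t_k)}\|\phi\|_\infty$, which iterates along $\pi$ to the uniform bound $\|v_{n,\pi}\|_\infty \le e^{LT}\|g\|_\infty$. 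Monotonicity of $S^a$ in its terminal data is the BSDE comparison theorem and is preserved by the pointwise infimum in $a \in K$.

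The interior consistency estimate is a short Taylor expansion. Fix $(t,x,p) \in (0,T)\times\R^d\times(0,1)$ and a smooth test function $\phi$. Applying It\^o's formula to $\phi(r,X_r,P_r)$ on $[t_\pi^-, t_\pi^+]$ and comparing with the BSDE defining $S^a$, standard BSDE stability gives, as $\Delta t := t_\pi^+ - t_\pi^- \to 0$,
\begin{align*}
S^a(t_\pi^+, t_\pi^-, \phi(t_\pi^+,\cdot))(t,x,p) = \phi(t,x,p) + \Delta t\, \bigl(\partial_t \phi - F^a(t,x,\phi,D\phi,D^2\phi)\bigr)(t,x,p) + o(\Delta t),
\end{align*}
with remainder uniform in $a$ once the relevant values of $|a|$ are localised by the fact that $\cR_n$ is a closed subset of $\cS \setminus \cD$ and thus stays bounded away from the singular set $\cD$. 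Taking the infimum over $a \in K$ converts $-F^a$ into $\cF_n$ and delivers the standard Barles--Souganidis consistency at interior points.

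The hard part will be proving strong boundary conditions for the semi-relaxed limits, which the introduction flags as the main new technical difficulty. At $p = 0$ this is straightforward: since $f(t,x,0,0) = 0$, both $S^a(\cdot,\cdot,0)$ and $B^0(\cdot,\cdot,0)$ vanish identically, so an induction over time layers gives $v_{n,\pi}(t,x,0) \equiv 0$ and therefore $\overline{v}_n(t,x,0) = \underline{v}_n(t,x,0) = 0$. At $p = 1$ one instead has to show $\overline{v}_n(t,x,1) \le V(t,x) \le \underline{v}_n(t,x,1)$; the obstruction is that the scheme's lateral operator $B^1$ on each step solves only an auxiliary zero-control PDE, whereas $V$ itself is characterised through a fully non-linear BSDE. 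I would address this by representing $V$ through its own BSDE and comparing the layer-by-layer scheme value on slices $p = 1 - \varepsilon$ with that BSDE flow, bounding the defect with the fine a priori BSDE estimates indicated by Lemma \ref{le perturb xi} and Lemma \ref{le cons 2}, and letting $\varepsilon \to 0$ jointly with $|\pi| \to 0$. Once the boundary conditions hold strongly for the relaxed limits, the proof closes by the comparison principle applied to \eqref{eq pde more classic}.
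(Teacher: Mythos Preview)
Your overall architecture---Barles--Souganidis via half-relaxed limits, with monotonicity, stability and consistency verified and the argument closed by the comparison principle of Corollary~\ref{co comp princ}---is the paper's. The monotonicity and interior consistency sketches are essentially right, and your stability bound $\|v_{n,\pi}\|_\infty \le e^{LT}\|g\|_\infty$ is adequate even though the paper obtains the sharper $v_{n,\pi}\le V$ by first proving that $p\mapsto v_{n,\pi}(t,x,p)$ is nondecreasing.

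The genuine gap is the boundary analysis. First, a misunderstanding: at $p=1$ there is no obstruction. The superreplication price $V$ solves precisely the semilinear equation $-\partial_t V + F^{0}(\cdot,V,DV,D^2V)=0$ with terminal datum $g$, which is exactly \eqref{eq pde zero control}; hence $v_{n,\pi}(\cdot,\cdot,1)=V$ identically, by backward induction. Your $\varepsilon$-slice detour is unnecessary. Second, and more seriously, knowing the values of $v_{n,\pi}$ \emph{at} $p\in\{0,1\}$ and $t=T$ does not give $\overline{v}_n(t,x,0)=0$ or $\overline{v}_n(t,x,1)=V(t,x)$: the half-relaxed limits are taken over \emph{interior} points $(s,y,q)$ approaching the boundary, so you need a barrier controlling $v_{n,\pi}$ uniformly near the boundary. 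The paper supplies one by a simple sandwich. Fix any $\hat a\in K$ and let $\hat w$ be the continuous solution of $-\partial_t\varphi + F^{\hat a}(\cdot,\varphi,D\varphi,D^2\varphi)=0$ with boundary data $pV(t,x)$. Two short backward inductions using the comparison principle yield $v_n \le v_{n,\pi} \le \hat w$ on all of $[0,T]\times\R^d\times[0,1]$: the lower bound because each $w^{k,a}$ is a supersolution of \eqref{pdeN-bis}, the upper because the infimum over $K$ is dominated by the value for the single control $\hat a$. Since $v_n$ and $\hat w$ are continuous and both satisfy \eqref{eq bc in p}--\eqref{eq bc at T}, the half-relaxed limits inherit these conditions strongly. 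Your proposal does not provide such a barrier, and Lemmas~\ref{le perturb xi}--\ref{le cons 2} are interior consistency tools that do not fill this hole.
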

%\begin{bluetext}
\begin{proof}
1. We first check the consistency with the boundary condition. Let $\hat{a} \in K$ and $\hat{w}$ be the (continuous) solution of
\begin{align}\label{eq pde one a for F}
-\partial_t \varphi + F^{\hat{a}}(t,x,\varphi,D\varphi, D^2\varphi) = 0 \text{ on } [0,T)\times \R^d \times(0,1)
\end{align}
with boundary condition $v(t,x,p)=pV(t,x)$ on $[0,T]\times\R^d\times \set{0,1} \bigcup \set{T}\times \R^d \times [0,1]$.
\\
By backward induction on $\pi$, one gets that
\begin{align} \label{eq majo sup 1}
v_{n,\pi} \le \hat{w}\;. 
\end{align}
Indeed, we have $v_{n,\pi}(T,\cdot) =\hat{w}(T,\cdot)$. Now if the inequality is true at time $t_k$, $k\ge1$, we have,
using the comparison result for \eqref{eq pde one a for F}, recalling Proposition \ref{pr appendix comp princ}, that
$$w^{k,\hat{a}}(t,\cdot) \le \hat{w}(t,\cdot) \text{ for } t \in [t_{k-1},t_k]\;,$$
and thus \emph{a fortiori} $\hat{w}(t,\cdot) \ge v_{n,\pi}(t,\cdot)$, for $t \in [t_{k-1},t_k]$.
\\
We also obtain that  
\begin{align}\label{eq mino inf 1}
v_{n,\pi}(\cdot) \ge v_n(\cdot)
\end{align}
by backward induction. Indeed, we have $v_{n,\pi}(T,\cdot) = v_n(T,\cdot)$. Assume that the inequality is true at time $t_k$, $k\ge1$. We observe that $w^{k,a}$ is a supersolution of  \eqref{pdeN-bis}, namely the PDE satisfied by $v_n$. By the comparison result, this implies that $w^{k,a}(t,\cdot) \ge v_n(t,\cdot)$, for $t \in [t_{k-1},t_k]$. Taking the infimum over $a \in K$ yields then \eqref{eq mino inf 1}.
\\
Since 
\begin{align}
v_n \le \underline{w} \le \overline{w} \le \hat{w}\,,
\end{align}
where 
\begin{align*}
\overline{w}(t,x,p) &= \limsup_{(t',x',p',|\pi|) \rightarrow (t,x,p,0)} v_{n,\pi}(t',x',p')
\text{ and }
\underline{w}  = \liminf_{(t',x',p',|\pi|) \rightarrow (t,x,p,0)} v_{n,\pi}(t',x',p'),
\end{align*}
we obtain that $\underline{w}$ and $\overline{w}$ satisfy the boundary conditions \eqref{eq bc in p}--\eqref{eq bc at T}.  

\vspace{4pt}

\noindent 2.   We prove below that the scheme is
  \emph{monotone}, \emph{stable} and \emph{consistent}, see Proposition \ref{pr monotonicity}, Proposition \ref{pr stability} and Proposition \ref{cons-theo} respectively. Combining this with step 1.\ and Theorem 2.1 in
  \cite{BS91} then ensures the convergence in $\cC^0$ of $v_{n,\pi}$ to $v_n$ as
  $|\pi| \to 0$.  \eproof %\textcolor{red}{use $v_{n,\pi}$ instead of $w$??}
\end{proof}
%\end{bluetext}

\begin{Remark}
We prove the following properties by a combination of viscosity solution arguments and, mostly, BSDE arguments, where they appear more natural.
It should be possible to derive these results purely using PDE arguments using similar main steps as in \cite{BJ07} .
\end{Remark}

\begin{Proposition}[Monotonicity]\label{pr monotonicity}
  Let
  $u \ge v$ for $u,v \in \cB\cC$, $(t,x,p) \in [0,T]\times\R^d\times[0,1]$, $y \in
  \R$. We have:
  \begin{align}
    \mathfrak{S}(\pi,t,x,p,y,u) \le \mathfrak{S}(\pi,t,x,p,y,v).
  \end{align}
\end{Proposition}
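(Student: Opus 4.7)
My plan is to unpack the definition of $\mathfrak{S}$ and reduce monotonicity to the comparison principle for the single-control PDE \eqref{PDEaconst}--\eqref{bCond1}, which underlies the operator $S^a$.

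First, I would dispose of the terminal time case. When $t = T$, the definition gives $\mathfrak{S}(\pi,T,x,p,y,u) = y - \hat g(x)p = \mathfrak{S}(\pi,T,x,p,y,v)$, independent of the function argument, so the required inequality holds with equality. It remains to treat $t < T$, in which case $t_\pi^- \le t < t_\pi^+$.

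Next, I would establish the following monotonicity of the building block $S^a$: for every $a \in K$, every $0 \le t < s \le T$, and every $\phi_1, \phi_2 \in \cB\cC$ with $\phi_1 \ge \phi_2$, one has $S^a(s,t,\phi_1) \ge S^a(s,t,\phi_2)$ on $[t,s]\times\R^d\times[0,1]$. This is the key step. The argument is a direct application of the comparison principle for the semilinear PDE \eqref{PDEaconst} with Lipschitz Hamiltonian $F^a$ (using \HYP{}(ii) and Proposition \ref{pr appendix comp princ}): $S^a(s,t,\phi_1)$ is a supersolution and $S^a(s,t,\phi_2)$ a subsolution of the same equation, and their ordering at $p \in \{0,1\}$ and at time $s$ is inherited from the inequality $\phi_1 \ge \phi_2$. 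For the lateral boundary ordering we need monotonicity of the linear operators $B^0(t,s,\cdot)$ and $B^1(t,s,\cdot)$ defined by \eqref{eq pde zero control}; but these solve the same equation \eqref{eq pde zero control} with terminal data $\phi(\cdot,0)$ and $\phi(\cdot,1)$ respectively, so the same comparison principle applies to them and yields $B^p(t,s,\phi_1) \ge B^p(t,s,\phi_2)$ for $p\in\{0,1\}$. Feeding these boundary inequalities into the comparison for the interior PDE gives the claimed ordering of $S^a(s,t,\phi_1)$ and $S^a(s,t,\phi_2)$.

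Finally, I would assemble the statement. Given $u \ge v$, evaluating both at time $t_\pi^+$ gives $u(t_\pi^+,\cdot) \ge v(t_\pi^+,\cdot)$, hence by the step above, for every $a \in K$,
\begin{equation*}
S^a\bigl(t_\pi^+, t_\pi^-, u(t_\pi^+,\cdot)\bigr)(t,x,p) \ge S^a\bigl(t_\pi^+, t_\pi^-, v(t_\pi^+,\cdot)\bigr)(t,x,p).
\end{equation*}
Taking the infimum over $a \in K$ preserves the inequality, so
\begin{equation*}
\min_{a\in K} S^a\bigl(t_\pi^+, t_\pi^-, u(t_\pi^+,\cdot)\bigr)(t,x,p) \ge \min_{a\in K} S^a\bigl(t_\pi^+, t_\pi^-, v(t_\pi^+,\cdot)\bigr)(t,x,p),
\end{equation*}
and multiplying by $-1$ and adding $y$ gives exactly $\mathfrak{S}(\pi,t,x,p,y,u) \le \mathfrak{S}(\pi,t,x,p,y,v)$.

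The main obstacle, and really the only non-bookkeeping step, is the comparison for $S^a$ together with the boundary operators $B^0, B^1$. Everything else is formal. In a BSDE framework one could alternatively write $S^a$ as the $Y$-component of a BSDE with Lipschitz driver and obtain the same monotonicity from standard BSDE comparison, which is the route suggested by the remark preceding this proposition.
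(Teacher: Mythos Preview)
Your proposal is correct and follows essentially the same route as the paper: reduce to showing $S^a(t^+,t^-,u(t^+,\cdot)) \ge S^a(t^+,t^-,v(t^+,\cdot))$ for each $a\in K$, which the paper dispatches in one line by citing the comparison principle of Proposition~\ref{pr appendix comp princ}. Your version is simply more explicit, in particular spelling out that monotonicity of the boundary operators $B^0,B^1$ is needed and follows from the same comparison argument; the paper leaves this implicit.
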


%\begin{bluetext}
\begin{proof}
  Let $t < T, x \in \R^d, p \in [0,1]$. By definition of $v_{n,\pi}$, recalling \eqref{eq de w-pi-backward}, it is sufficient to prove that,
  for any $a \in K$, we have:
  \begin{align}
    S^a(t^+, t^-, u(t^+, \cdot))(t,x,p) \ge S^a(t^+,t^-, v(t^+, \cdot))(t,x,p).
  \end{align}
with $t^+, t_-$ defined in \eqref{eq notation grid}\;.
  But this follows directly from the comparison result given in Proposition \ref{pr appendix comp princ}.
%  \textcolor{red}{It can be shown (in appendix? beware what is the smoothness needed on the terminal condition so that is true?, quote theorem as well if possible)} that, for all
%  $t \in [0,T], x \in \R^d, p \in [0,1]$,
%  $S^a(t^+, t^-, u(t^+, \cdot))(t,x,p)$ admits the following
%  probabilistic representation:
%  \begin{align}
%    S^a(t^+, t^-, u(t^+, \cdot))(t,x,p) = Y^{t,x,p,a}_t,
%  \end{align}
%  where $Y^{t,x,p,a}$ is the first component of the solution to the
%  following BSDE with random terminal time:
%  \begin{align} \label{probrepmon} Y_s &= \left(u(\tau, X^{t,e^x}_\tau,
%      P^{t,p,a}_\tau) 1_{\tau = T} + V(\tau, X^{t,e^x}_\tau)
%      1_{\tau < T} 1_{P^{t,p,a}_\tau = 1} \right) \nonumber \\
%      & \hspace{3 cm} - \int_s^\tau f(u,
%    X^{t,e^x}_u, Y_u, \sigma^{-1}(X^{t,e^x}_u) Z_u) \ud u -
%    \int_s^\tau Z_u \ud W_u,
%  \end{align}
%  with
%  $\tau := \inf \{s \ge t : P^{t,p,a}_s \in \{0, 1\} \} \wedge t^+$,
%  {\color{magenta} recalling } $P^{t,p,a}_\cdot = p + a(W_\cdot - W_t)$.
%  \\
%  Similarly, $S^a(t^+, t^-, v(t^+, \cdot))(t,x,p)$ admits a
%  probabilistic representation of the form \eqref{probrepmon} with $v$
%  instead of $u$, with the same random terminal time. By the
%  Comparison Theorem for BSDEs with the same random terminal time
%  (\textcolor{red}{Corollary 4.4.2 in Darling-Pardoux 1992}), we
%  obtain that
%  $S^a(t^+, t^-, u(t^+, \cdot))(t,x,p) \ge S^a(t^+, t^-, v(t^+,
%  \cdot))(t,x,p)$. \\ 
\eproof
\end{proof}
%\end{bluetext}

\vspace{4pt}
\noindent We now study the stability of the scheme. We first show that
the solution of the scheme $v_{n,\pi}$ is increasing in its third
variable. {\color{black} This is not only an interesting property in its own right
which the piecewise constant policy solution inherits from the solution to
the original problem (\ref{eq:prob1}), but it also}
allows us to obtain easily a uniform bound for $v_{n,\pi}$,
namely the boundary condition at $p=1$.
\begin{Lemma} \label{lem inc p}%\textcolor{red}{for any grid in fact}
The scheme \eqref{ThScheme} has the
  property, for all $t \in [0,T]$ and $x \in \R^d$:
  \begin{align}
    v_{n,\pi}(t,x,q) \le v_{n,\pi}(t,x,p) \mbox{ if } 0\le q \le p \le 1.
  \end{align}
\end{Lemma}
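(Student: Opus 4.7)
The plan is a backward induction on the time grid $\pi = \{t_0, \ldots, t_\kappa\}$, showing that $v_{n,\pi}(t_k, x, \cdot)$ is nondecreasing on $[0,1]$. The base case $k = \kappa$ follows at once from $v_{n,\pi}(T, x, p) = g(x) p$ together with $g \ge 0$. For the inductive step, since \eqref{eq de w-pi-backward} gives $v_{n,\pi}(t_k, x, \cdot) = \inf_{a \in K} w^{k,a}(t_k, x, \cdot)$ and the pointwise infimum of nondecreasing functions is nondecreasing, it suffices to show that each $w^{k,a}(t_k, x, \cdot)$ is nondecreasing.

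For fixed $a \in K$, I would apply a shift-and-compare argument to the PDE \eqref{PDEaconst}. For $h \in (0,1]$, set $w_h(t,x,p) := w^{k,a}(t,x,p+h)$ on $[t_k,t_{k+1}] \times \R^d \times [0,1-h]$. Since the coefficients of \eqref{PDEaconst} carry no explicit $p$-dependence, $w_h$ still solves the same PDE on this subdomain. By the comparison principle (Proposition \ref{pr appendix comp princ}), $w_h \ge w^{k,a}$ on the whole cylinder once the inequality is verified on the parabolic boundary: (i) at $t = t_{k+1}$, it is the inductive hypothesis applied to $v_{n,\pi}(t_{k+1}, x, \cdot)$; (ii) at $p = 0$, it reduces to $w^{k,a}(t,x,h) \ge w^{k,a}(t,x,0) = 0$, i.e.\ to the a priori lower bound $w^{k,a} \ge 0$; (iii) at $p = 1-h$, it reduces to $V(t,x) = w^{k,a}(t,x,1) \ge w^{k,a}(t,x,1-h)$, i.e.\ to the a priori upper bound $w^{k,a} \le V$.

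The two a priori bounds are themselves obtained by a further comparison with $p$-independent sub-/super-solutions of \eqref{PDEaconst}: the zero function solves it thanks to the hypothesis $f(t,x,0,0)=0$ and the boundary identity $v_{n,\pi}(\cdot,\cdot,0) = 0$ recorded in the Remark; the super-replication price $(t,x,p) \mapsto V(t,x)$, viewed as constant in $p$, also solves \eqref{PDEaconst}, because $F^a$ reduces to $F^0$ whenever $q^p$, $A^{xp}$ and $A^{pp}$ vanish and $V$ is the viscosity solution of the $F^0$-equation, with Dirichlet trace matching through $v_{n,\pi}(\cdot,\cdot,1) = V$ from the Remark. I expect the main subtlety to lie here: because the right-hand Dirichlet trace is the nontrivial price $V$ rather than $0$, the induction genuinely has to carry the two-sided envelope $0 \le w^{k,a} \le V$ in parallel with the monotonicity statement, otherwise the shift argument at $p = 1-h$ does not close.
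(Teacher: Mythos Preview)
Your overall architecture is sound and genuinely different from the paper's proof. The paper does not use a shift-and-compare PDE argument; instead it invokes the BSDE representation of $S^a$ from Lemma~\ref{le prob rep wa}(ii), which rewrites both $w^{k,a}(t,x,p)$ and $w^{k,a}(t,x,q)$ as BSDEs on the fixed interval $[t,t_{k+1}]$ with terminal values $v_{n,\pi}(t_{k+1},X_{t_{k+1}},\tilde P^{t,p,a}_{t_{k+1}})$ and $v_{n,\pi}(t_{k+1},X_{t_{k+1}},\tilde P^{t,q,a}_{t_{k+1}})$. One then argues pathwise that $\tilde P^{t,p,a}_{t_{k+1}}\ge \tilde P^{t,q,a}_{t_{k+1}}$ (a short case analysis on which boundary the stopped process hits), and concludes by the classical BSDE comparison theorem. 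In particular, the paper never needs the two a~priori bounds $0\le w^{k,a}\le V$ that your shift argument relies on.

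There is, however, a genuine gap in your justification of the upper bound $w^{k,a}\le V$. You claim that $(t,x,p)\mapsto V(t,x)$ is a viscosity supersolution of \eqref{PDEaconst} ``because $F^a$ reduces to $F^0$ whenever $q^p$, $A^{xp}$ and $A^{pp}$ vanish''. This is only the statement for $p$-constant test functions. A smooth $\varphi$ touching the $p$-constant function $V$ from below at $(t_0,x_0,p_0)$ is forced to have $\partial_p\varphi=0$ and $\partial_{pp}\varphi\le 0$ there, but the cross derivative $A^{xp}=\partial_{xp}\varphi$ can be non-zero, and the term $-a^\top\sigma^\top A^{xp}$ in $F^a$ then has no sign. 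To close this you must actually use the degenerate rank-$d$ structure $\Sigma(a)=\binom{\sigma}{a^\top}\!\bigl(\sigma^\top\;a\bigr)$: from the subjet inequality one extracts the Schur-complement element $\tilde A^{xx}:=A^{xx}-A^{xp}(A^{xp})^\top/A^{pp}\in\cP^{2,-}_x V$ (when $A^{pp}<0$; the case $A^{pp}=0$ forces $A^{xp}=0$), and then
\[
\Tr[\Sigma(a)A]-\Tr[\sigma\sigma^\top\tilde A^{xx}]
=\frac{1}{A^{pp}}\bigl|\sigma^\top A^{xp}+A^{pp}a\bigr|^2\le 0,
\]
which is exactly what is needed. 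So the claim is true, but for the reason ``$\Sigma(a)$ is degenerate'', not for the reason you gave. Alternatively, you can bypass this entirely by proving $w^{k,a}\le V$ through Lemma~\ref{le prob rep wa}(ii) and BSDE comparison with terminal $V(t_{k+1},X_{t_{k+1}})$, at which point your approach and the paper's essentially merge. Your lower bound $w^{k,a}\ge 0$ is fine as stated, since $0$ is a \emph{classical} solution and no such issue arises.
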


\begin{proof}
  %Let $\pi = \{t_0 = 0 < t_1 < \cdots < t_\kappa = T\}$.  
  We are going
  to prove the assertion by induction on $k \in \{0, \dots, \kappa\}$.
  \\
  For $t = T = t_\kappa$ and every $x \in \R^d$, we have
  $(x,p)\mapsto v_{n,\pi}(T,x,p) := g(x)p$, which is an increasing function of $p$.
  \\
  Let $1\le k < \kappa-1$. Assume now that $v_{n,\pi}(t,x,\cdot)$ is an increasing function for all
  $t \ge t_{k+1}$ and $x \in \R^d$. We show that $v_{n,\pi}(t,x,\cdot)$ is
  also increasing for $t \in [t_k, t_{k+1})$ and $x \in \R^d$.
  \\
  Let $0 \le q \le p \le 1$. By the definition of $v_{n,\pi}$ in \eqref{eq de w-pi-backward}, 
  %Since
  %$w_\pi(t,x,p) = \min_{a \in K} S^a(t_{k+1}, t_k, w_\pi(t_{k+1},
  %\cdot))(t,x,p)$ and
  %$w_\pi(t,x,q) = \min_{a \in K} S^a(t_{k+1}, t_k, w_\pi(t_{k+1},
  %\cdot))(t,x,q)$, 
  it is sufficient to show that for each $a \in K$, we
  have, for $(t,x)\in[0,T]\times\R^d$,
$$ w^{k,a}(t,x,q) \le w^{k,a}(t,x,p)\;.$$
  %$$S^a(t_{k+1},t_k, w_\pi(t_{k+1},\cdot))(t,x,q) \le S^a(t_{k+1},t_k,
  %w_\pi(t_{k+1},\cdot))(t,x,p).$$
  \\
 From Lemma \ref{le prob rep wa}(i) in the appendix, these two quantities admit a
  probabilistic representation  with two different random terminal times
  \begin{align}
    \tau^q &= \inf \{s \ge t : P^{t,q,a}_s \in \{0,1\}\} \wedge t_{k+1}, \\
    \tau^p &= \inf \{s \ge t : P^{t,p,a}_s \in \{0,1\}\} \wedge t_{k+1}.
  \end{align}
  However, using Lemma \ref{le prob rep wa}(ii), we can write probabilistic representations with BSDEs with
  terminal time $t_{k+1}$: we have that
  $S^a(t_{k+1},t_k,w_\pi(t_{k+1},\cdot))(t,x,p) = \tilde
  Y^{t,x,p,a}_t$, where $\tilde Y^{t,x,p,a}_t$ is the first component
  of the solution of the following BSDE:
  \begin{align}
    Y_s = \textcolor{black}{v_{n,\pi}(t_{k+1}, X^{t,x}_{t_{k+1}}, \tilde P^{t,p,a}_{t_{k+1}})} + \int_s^{t_{k+1}} f(u, X^{t,x}_u, Y_u,  Z_u) \ud u - \int_s^{t_{k+1}} Z_u \ud W_u,
  \end{align}
 % \textcolor{red}{EXpliquer pourquoi !!! forme specifique barriere haute prix de sur-replication qui satisfait une backward aussi => en appendice?}
  where $\tilde P^{t,p,a}$ is the process defined by:
  \begin{align}
    \tilde P^{t,p,a}_s = p + \int_t^s a 1_{\set{u \le \tau^p}} \ud W_u,
  \end{align} 
  and a similar representation holds for
  $S^a(t_{k+1},t_k,w_\pi(t_{k+1},\cdot))(t,x,q)$.
  \\
  It remains to show that
  \begin{align}
  \label{whatineq}
   v_{n,\pi}(t_{k+1},X^{t,x}_{t_{k+1}},\tilde P^{t,p,a}_{t_{k+1}})
  \ge v_{n,\pi}(t_{k+1},X^{t,x}_{t_{k+1}},\tilde
  P^{t,q,a}_{t_{k+1}}).
  \end{align}
  If this is true, the classical comparison theorem for BSDEs (see e.g.\ Theorem 2.2 in \cite{el1997backward}),
  concludes the proof.\\
  First, we observe that $P^{t,p,a}_{\tau_p} \ge P^{t,q,a}_{\tau_p}$.
  On $\set{\tau_p = T}$,  \eqref{whatineq} holds straightforwardly
  by the  induction hypothesis.
  On $\set{\tau_p < T}$, if $P^{t,p,a}_{\tau_p} =1$ then $P^{t,p,a}_{T} =1$ and
   \eqref{whatineq} holds 
  by induction hypothesis, as $P^{t,q,a}_{T} \le1$; 
  if $P^{t,p,a}_{\tau_p} =0$ then \emph{a fortiori}
$P^{t,q,a}_{\tau_p} =0$ and $P^{t,p,a}_{T} =P^{t,q,a}_{T} =0$, which concludes the proof.
\eproof
\end{proof}

\begin{Proposition}[Stability]  \label{pr stability}
  The solution to scheme \eqref{ThScheme} is bounded. %\textcolor{red}{namely...}
\end{Proposition}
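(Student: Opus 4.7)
The plan is to obtain the bound $\|v_{n,\pi}\|_\infty \le |g|_\infty$ as an immediate consequence of the monotonicity of $p \mapsto v_{n,\pi}(t,x,p)$ proved in Lemma \ref{lem inc p}, together with the explicit identification of the two boundary slices $p=0$ and $p=1$ recorded in the Remark preceding this Proposition. No new viscosity machinery is required; the work is essentially bookkeeping on the backward algorithm.

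First, applying Lemma \ref{lem inc p} with $q=0$ and with $p=1$ yields, for every $(t,x,p) \in [0,T]\times\R^d\times[0,1]$,
\begin{align*}
v_{n,\pi}(t,x,0) \le v_{n,\pi}(t,x,p) \le v_{n,\pi}(t,x,1),
\end{align*}
so the problem reduces to bounding $v_{n,\pi}$ on these two boundary slices. For the lower slice, I would argue by backward induction on $k \in \{0,\dots,\kappa\}$ that $v_{n,\pi}(t,x,0) \equiv 0$: at $t=T$ the terminal datum is $g(x)\cdot 0 = 0$, and at each induction step the value at $p=0$ is given by $B^0(t_k, t_{k+1}, v_{n,\pi}(t_{k+1},\cdot))$, which solves \eqref{eq pde zero control} with zero terminal condition; since $f(t,x,0,0)=0$ by \HYP{}(ii), the zero function is a classical solution and uniqueness (Proposition \ref{pr appendix comp princ}) forces $B^0 \equiv 0$. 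For the upper slice, an analogous backward induction gives $v_{n,\pi}(t,x,1) = V(t,x)$: the terminal value $g(x)$ matches $V(T,x)$, and on each subinterval $[t_k, t_{k+1}]$ the boundary value satisfies \eqref{eq pde zero control} with terminal datum $V(t_{k+1},\cdot)$, which is also solved by $V$ itself, so by uniqueness the two agree throughout.

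Combining the two boundary identifications with the monotonicity inequality and recalling that $|V|_\infty \le |g|_\infty$ (Remark (iii) in the hypotheses), one concludes
\begin{align*}
0 \le v_{n,\pi}(t,x,p) \le V(t,x) \le |g|_\infty \qquad \text{for all } (t,x,p) \in [0,T]\times\R^d\times[0,1],
\end{align*}
which is the desired stability bound. The only point deserving care is the uniqueness invoked for the PDEs defining $B^0$ and $B^1$, which is provided by the comparison principle recalled in the appendix; everything else is direct.
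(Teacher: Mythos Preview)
Your proof is correct and follows essentially the same approach as the paper: use the monotonicity in $p$ from Lemma \ref{lem inc p} together with the boundary identifications $v_{n,\pi}(\cdot,0)=0$ and $v_{n,\pi}(\cdot,1)=V$ (which the paper states in the Remark just before the Proposition). The paper's own proof is a single line, $v_{n,\pi}(t,x,p)\le v_{n,\pi}(t,x,1)=V(t,x)$; you have simply spelled out the justification of the boundary values more carefully and added the explicit lower bound $0$, which the paper leaves implicit.
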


\begin{proof}
  For any $\pi$ and any $(t,x,p) \in [0,T]\times\R^d\times[0,1]$, we
  have $v_{n,\pi}(t,x,p) \le v_{n,\pi}(t,x,1) =  V(t,x)$.  \eproof
\end{proof}

\vspace{4pt}
\noindent To prove the consistency of the scheme, we will need the two following lemmata.

\begin{Lemma}\label{le perturb xi} For $0 \le \tau \le t \le \theta \le T$, $\xi \in \R$, and $\phi \in \cC^{\infty}([0,T]\times \R^d \times [0,1])$, the following holds
\begin{align*}
|S^a(\tau,\theta,\phi(\theta,\cdot)+\xi)(t,\cdot) - S^a(\tau,\theta,\phi(\theta,\cdot))(t,\cdot) - \xi |_\infty \le C|\theta-t| |\xi| \,.
\end{align*}
\end{Lemma}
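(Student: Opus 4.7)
The plan is to invoke the BSDE representation of $S^a$ provided by Lemma \ref{le prob rep wa} and reduce the statement to a standard $L^\infty$-estimate for linear BSDEs via linearisation.

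First, I would write $S^a(\tau, \theta, \phi(\theta, \cdot))(t, x, p) = Y_t$ where $(Y, Z)$ solves, on $[t, \theta]$,
\begin{align*}
Y_s = \phi(\theta, X^{t,x}_\theta, \tilde{P}^{t,p,a}_\theta) + \int_s^\theta f(u, X^{t,x}_u, Y_u, Z_u) \ud u - \int_s^\theta Z_u \ud W_u,
\end{align*}
with $\tilde{P}^{t,p,a}$ the process $P^{t,p,a}$ stopped at its first exit from $(0,1)$. Crucially, this formulation automatically encodes the Dirichlet data at $p \in \{0,1\}$: once $\tilde{P}$ is stopped at $0$ or $1$, the BSDE runs with $X$ alone and recovers $B^0$ or $B^1$ by uniqueness for the linear BSDE associated with the PDE \eqref{eq pde zero control}. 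The analogue $(Y', Z')$ associated with the shifted data $\phi(\theta, \cdot) + \xi$ satisfies the same BSDE with terminal value augmented by $\xi$.

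Next, with $\delta Y := Y' - Y$ and $\delta Z := Z' - Z$, linearising $f$ thanks to \HYP{}(ii) gives
\begin{align*}
\delta Y_s = \xi + \int_s^\theta (A_u \delta Y_u + B_u \cdot \delta Z_u) \ud u - \int_s^\theta \delta Z_u \ud W_u,
\end{align*}
with predictable coefficients satisfying $|A_u| \le L$ and $|B_u| \le L$. The substitution $\eta_s := \delta Y_s - \xi$ (so that $\eta_\theta = 0$) then yields
\begin{align*}
\eta_s = \int_s^\theta (A_u \eta_u + B_u \cdot \delta Z_u + A_u \xi) \ud u - \int_s^\theta \delta Z_u \ud W_u,
\end{align*}
in which the only inhomogeneity is the forcing term $A_u \xi$, bounded by $L|\xi|$.

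Finally, a Girsanov change of measure with drift $B$ (well-defined since $|B| \le L$) absorbs the $B_u \cdot \delta Z_u$ term under an equivalent probability $\Q$; taking conditional expectations and applying Gronwall's inequality delivers $|\eta_s| \le L|\xi|(\theta - s)\, e^{L(\theta - s)}$, uniformly in $(x, p)$. Setting $s = t$ produces the claimed bound. The only real bookkeeping obstacle I anticipate is verifying that the Dirichlet data $B^0, B^1$ shift coherently under the perturbation by $\xi$; however this is built into the BSDE representation via the terminal value $\phi(\theta, X_\theta, \tilde{P}_\theta)$, so the argument reduces cleanly to the elementary linear-BSDE estimate above without any separate boundary analysis.
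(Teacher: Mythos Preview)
Your proof is correct and follows the same overall strategy as the paper: invoke the BSDE representation of $S^a$ from Lemma \ref{le prob rep wa}(ii) (which, as you rightly note, absorbs the boundary data at $p\in\{0,1\}$ into the terminal value via the stopped process $\tilde P^{t,p,a}$), shift one of the two BSDEs by the constant $\xi$, and estimate the resulting difference equation whose only source term is bounded by $L|\xi|$.

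The only divergence is in the last step. The paper introduces $\Gamma:=Y+\xi$, observes that $(\Gamma,Z)$ solves the BSDE with driver $f_\xi(t,x,y,z):=f(t,x,y-\xi,z)$, and then applies a standard $L^2$ energy estimate to $\Delta:=\Gamma-\hat Y$ with forcing $\delta f_s=f_\xi-f$, $|\delta f_s|\le L|\xi|$; since $\Delta_t$ is deterministic this yields the $|\cdot|_\infty$ bound. You instead linearise $f$ to obtain a genuinely linear BSDE for $\eta=\delta Y-\xi$, remove the $\delta Z$-term by Girsanov, and read off the pointwise bound from the explicit linear-BSDE formula together with Gronwall. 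Both routes are standard and equally short here; your version has the minor advantage of delivering the pointwise estimate directly, without passing through an $L^2$ bound.
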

%\begin{bluetext}
\begin{proof} We denote $w=S^a(\tau,\theta,\phi(\cdot))$ and $\tilde{w}=S^a(\tau,\theta,\phi(\cdot)+\xi)$. Using Lemma \ref{le prob rep wa}, we have that, for
$(t,x,p) \in [\tau,\theta]\times\R^d\times[0,1]$,
\begin{align*}
w(t,x,p) = Y_t  \; \text{ and } \; \tilde{w}(t,x,p) = \hat{Y}_t 
\end{align*}
where $(Y,Z)$ and $(\hat{Y},\hat{Z})$ are solutions to, respectively,
\begin{align*}
Y_r &= \phi(X^{t,x}_\theta,\tilde{P}^{t,p,a}_\theta) + \int_r^T f(s,X^{t,x}_s, Y_s, Z_s) \ud s - \int_r^\theta Z_s \ud W_s\;,\; t\le r \le \theta\,,
\\
\hat{Y}_r &= \phi(X^{t,x}_\theta,\tilde{P}^{t,p,a}_\theta) + \xi + \int_r^T f(s,X^{t,x}_s, \hat{Y}_s, \hat{Z}_s) \ud s - \int_r^\theta \hat{Z}_s \ud W_s\;,\; t\le r \le \theta\,.
\end{align*}
Denoting $\Gamma := Y + \xi$ and $f_\xi(t,x,y,z)=f(t,x,y-\xi,z)$, one observes then that $(\Gamma,Z)$ is the solution to
\begin{align*}
\Gamma_r &= \phi(X^{t,x}_\theta,\tilde{P}^{t,p,a}_\theta) + \xi + \int_r^T f_\xi(s,X^{t,x}_s, \Gamma_s, {Z}_s) \ud s - \int_r^\theta {Z}_s \ud W_s\;,\; t\le r \le \theta\,.
\end{align*}
%\begin{align*}
%    \cY_\cdot &= \Lambda  - \int_\cdot^\theta \textcolor{red}{f(s,
%    X^{t,x}_s, \cY_s, \cZ_s)} \ud s -
%    \int_\cdot^\theta \cZ_s \ud W_s\;,
%\end{align*}
%with $\Lambda := \phi(X^{t,x}_\theta,\tilde{P}^{t,p,a}_\theta)$ (resp. $\Lambda := \phi(X^{t,x}_%\theta,\tilde{P}^{t,p,a}_\theta) + \xi$). Denoting 
Let $\Delta := \Gamma - \hat{Y}$, $\delta Z = Z-\hat{Z}$ and
$ \delta f_s = f_\xi(s,X^{t,x}_s, \Gamma_s, {Z}_s) -  f(s,X^{t,x}_s, \Gamma_s, {Z}_s)$, for $s \in [t,\theta]$. We then get
\begin{align*}
\Delta_r := \int_r^\theta\left( f(s,X^{t,x}_s, \Gamma_s, {Z}_s)- f(s,X^{t,x}_s, Y_s, {Z}_s) + \delta f_s\right) \ud s - \int_r^\theta \delta Z_s \ud W_s\;.
\end{align*}
Classical energy estimates for BSDEs \cite{el1997backward,chassagneux2018obliquely} lead to
\begin{align}\label{eq nrj estim bsde}
\esp{\sup_{r \in [t,\theta]}|\Delta_r|^2} \le C\esp{\int_t^\theta |\Delta_s \delta f_s| \ud s}\;.
\end{align}
Next, we compute
\begin{align*}
\int_t^\theta |\Delta_s \delta f_s| \ud s \le \frac1{2C}\sup_{s \in [t,\theta] } |\Delta_s|^2 + 2C\left(\int_t^\theta |\delta f_s| \ud s\right)^2\;.
\end{align*}
Combining the previous inequality with \eqref{eq nrj estim bsde}, we obtain
\begin{align*}
\esp{\sup_{r \in [t,\theta]}|\Delta_r|^2} \le 4C^2\esp{\left(\int_t^\theta |\delta f_s| \right)^2}\;.
\end{align*}
Using the Lipschitz property of $f$, we get from the definition of $f_\xi$,
\begin{align*}
|\delta f_s| \le L \xi \,,
\end{align*}
which eventually leads to
\begin{align}
\esp{\sup_{r \in [t,\theta]}|\Delta_r|^2} \le C|\theta-t|^2 \xi^2\;
\end{align}
and concludes the proof. \eproof
\end{proof}
%\end{bluetext}

\begin{Lemma}\label{le cons 2}Let $0\le \tau < \theta \le T$ and $\phi \in \cC^{\infty}([0,T]\times \R^d \times [0,1])$.
For $(t,x,p)\in [\tau,\theta)\times \R^d \times (0,1)$,
\begin{align*}
\phi(t,x,p) - S^a(\tau,\theta,\phi(\theta,\cdot))(t,x,p)- 
(\theta-t)G^a\phi(t,x,p)
= o(\theta-t)\;.
\end{align*}
where $G^a\phi(t,x,p) := -\partial_t \phi(t,x,p) +F^a(t,x,p,\phi,D\phi,D^2\phi)$.
\end{Lemma}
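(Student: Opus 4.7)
The plan is to compare a BSDE representation of $S^a(\tau,\theta,\phi(\theta,\cdot))(t,x,p)$ with an Itô expansion of $\phi$, and extract the consistency order through a linear BSDE estimate. By Lemma \ref{le prob rep wa}, one has $S^a(\tau,\theta,\phi(\theta,\cdot))(t,x,p) = Y_t$, where $(Y,Z)$ solves
\begin{align*}
Y_s = \phi(\theta,X^{t,x}_\theta,\tilde P^{t,p,a}_\theta) + \int_s^\theta f(u,X^{t,x}_u,Y_u,Z_u) \ud u - \int_s^\theta Z_u \ud W_u,
\end{align*}
with $\tilde P^{t,p,a}$ stopped at $\tau^p := \inf\{u\ge t : P^{t,p,a}_u \in\{0,1\}\}$. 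Applying Itô's formula to $\tilde Y_s := \phi(s,X^{t,x}_s,P^{t,p,a}_s)$ (with the \emph{unstopped} $P^{t,p,a}$) on $[t,\theta]$, and reading drift and diffusion off \eqref{eq de za}--\eqref{eq de La} via the identity $\partial_t\phi + \mathcal{L}(x,D\phi,D^2\phi,a) = -G^a\phi - f(\cdot,\phi,\mathfrak{z}(x,D\phi,a))$, yields
\begin{align*}
\tilde Y_s = \phi(\theta,X_\theta,P_\theta) + \int_s^\theta \bigl[G^a\phi(u,X_u,P_u) + f(u,X_u,\tilde Y_u,\tilde Z_u)\bigr] \ud u - \int_s^\theta \tilde Z_u \ud W_u,
\end{align*}
with martingale integrand $\tilde Z_u = \mathfrak{z}(X_u,D\phi(u,X_u,P_u),a)$ and $\tilde Y_t = \phi(t,x,p)$.

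Setting $\delta Y := \tilde Y - Y$, $\delta Z := \tilde Z - Z$ and linearising $f$ via its Lipschitz property, the pair $(\delta Y,\delta Z)$ solves a linear BSDE with terminal value $\xi := \phi(\theta,X_\theta,P_\theta) - \phi(\theta,X_\theta,\tilde P_\theta)$ and driver $G^a\phi(u,X_u,P_u) + \alpha_u \delta Y_u + \beta_u^\top \delta Z_u$, where $|\alpha_u|,|\beta_u| \le L$. A standard Girsanov change of measure removes $\beta$ and exponential discounting removes $\alpha$, producing
\begin{align*}
\phi(t,x,p) - S^a(\tau,\theta,\phi(\theta,\cdot))(t,x,p) = \delta Y_t = \E^{\Q}\!\!\left[\Gamma_\theta \xi + \int_t^\theta \Gamma_u G^a\phi(u,X_u,P_u) \ud u\right],
\end{align*}
with bounded discount $\Gamma$ and equivalent measure $\Q$ under which $(X,P)$ has bounded drift.

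It remains to estimate the two contributions. Since $\tilde P_\theta = P_\theta$ on $\{\tau^p > \theta\}$, one has $\xi = 0$ there and $|\xi| \le 2\|\phi\|_\infty$ otherwise, giving $|\E^{\Q}[\Gamma_\theta \xi]| \le C\|\phi\|_\infty \Q(\tau^p \le \theta)$. Because $p$ is strictly interior to $(0,1)$, a reflection-principle bound for Brownian motion with bounded drift yields $\Q(\tau^p \le \theta) \le C e^{-c/(\theta-t)}$, which is $o((\theta-t)^k)$ for every $k$. For the drift integral, smoothness of $\phi$ and continuity of the coefficients, together with $\Gamma_u = 1 + O(u-t)$, give by dominated convergence
\begin{align*}
\E^{\Q}\!\!\left[\int_t^\theta \Gamma_u G^a\phi(u,X_u,P_u) \ud u\right] = (\theta-t) G^a\phi(t,x,p) + o(\theta-t),
\end{align*}
which completes the argument.

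The main obstacle is the mismatch between the stopped process $\tilde P$ in the representation of $Y$ and the unstopped $P$ entering Itô's expansion of $\phi$: this discrepancy appears only at the terminal time, but must still vanish to order $o(\theta-t)$. It is tamed precisely because $p\in(0,1)$ is strictly interior, so the exit probability is exponentially small in $(\theta-t)^{-1}$ and the terminal-value mismatch is negligible at any polynomial order.
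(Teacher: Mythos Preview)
Your proof is correct and takes a genuinely different route from the paper's. The paper works with the \emph{stopped} process $P^{t,p,\alpha}$ (with $\alpha = a\mathbf{1}_{[0,\cT]}$, $\cT$ the exit time) throughout, introduces an intermediate process $\hat Y$ solving a BSDE with driver $f(\cdot,\Phi_s,\mathfrak Z_s)$, and splits the error as $|\Phi_t - \hat Y_t + (\theta-t)G^a\phi| + |\hat Y_t - Y_t|$, bounding each piece by classical BSDE energy estimates to obtain the explicit rate $(\theta-t)^{3/2}$. You instead keep the \emph{unstopped} $P$, linearise the difference $\tilde Y - Y$ into a single linear BSDE, and write its solution explicitly via Girsanov and discounting; the boundary then enters only through the terminal mismatch $\xi$, which you kill by the exponential exit-probability bound for interior $p$.

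Your approach makes the role of the assumption $p\in(0,1)$ completely transparent (it is precisely what gives the $e^{-c/(\theta-t)}$ decay), whereas in the paper this point is somewhat implicit in the switch between $\alpha$ and $a$ in the Itô computation. The paper's route, on the other hand, yields a quantitative rate and avoids Girsanov. One small technical point you gloss over: since you apply Itô to $\phi(\cdot,P)$ with the unstopped $P$, which may leave $[0,1]$, you implicitly use a smooth bounded extension of $\phi$ to $[0,T]\times\R^d\times\R$; this is harmless because $\phi\in\cC^\infty$ up to the boundary, but it is worth stating.
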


%\begin{bluetext}
\begin{proof}
We first observe that $S^a(\tau,\theta,\phi(\cdot))(t,x,p) = {Y}_t$, where $({Y}^a,{Z}^a)$ is solution to 
\begin{align*}
{Y}_r &= {\Phi}_\theta+ \int_r^\theta f(s,X^{t,x}_s, {Y}_s, {Z}_s) \ud s - \int_r^\theta {Z}_s \ud W_s
\end{align*}
%with ${\Phi}_\theta := \phi(X^{t,x}_\theta,\tilde{P}^{t,p,a}_\theta) $.
%\textcolor{red}{pourquoi ne pas travailler directemnt avec $\tilde \Phi^a$ ci-dessous?}
%We will prove the Lemma by introducing several level approximation of $\tilde{Y}^a_t$. To this end, we introduce the process, 
with, for $t \le s \le \theta$,
\begin{align*}
\Phi_s = \phi(s,X^{t,x}_s,P_s^{t,p,\alpha})\, \text{ and } \alpha := a \1_{[0,\tau]} \,.
\end{align*}
%\textcolor{magenta}{recall definition...}
%where we have extended the smooth function $\phi$ from $(0,1)$ to $\R$. 
By a direct application of Ito's formula, we observe that %, for $t \le r \theta$
\begin{align*}
\Phi_r = \Phi_\theta - \int_r^\theta\set{\partial_t \phi + \cL^\alpha\phi}(s,X^{t,x}_s,P_s^{t,p,\alpha})\ud s - \int_r^\theta \mathfrak{Z}_s \ud W_s\,,\quad t \le r  \le \theta\,,
\end{align*}
where $\mathfrak{Z}_s:=\mathfrak{z}(X^{t,x}_s,D\phi(s,X^{t,x}_s,P_s^{t,p,a}),\alpha_s)$, $t\le s \le \theta$.
\\
For ease of exposition, we also introduce an ``intermediary"  process $(\hat{Y},\hat{Z})$ as the solution to
\begin{align*}
\hat{Y}_r = \Phi_\theta + \int_r^\theta f(s,X^{t,x}_s,\Phi_s,\mathfrak{Z}_s)\ud s 
- \int_r^\theta  \hat{Z}_s \ud W_s\,,\quad t \le r  \le \theta\,.
\end{align*}
Now, we compute
\begin{align*}
&\hat{Y}_t - \Phi_t +(\theta-t)G^a\phi(t,x,p) 
\\
&=
\\
&\esp{\int_t^\theta \left(\set{\partial_t \phi(s,X^{t,x}_s,P^{t,p,a}_{s}) - \partial_t \phi(t,x,p)}  + 
\set{F^a\phi(s,X^{t,x}_s,P^{t,p,a}_{s}) - F^a\phi(t,x,p)}
\right) \ud s}\,.
\end{align*}
Using the smoothness of $\phi$, the Lipschitz property of $f$ and the following control
\begin{align}
\esp{|X^{t,x}_s - x| + |P^{t,p,\alpha}_s-p|} \le C_a|\theta-t|^\frac12 \,,
\end{align}
we obtain
\begin{align}\label{eq cons temp 1}
|\hat{Y}_t - \Phi_t +(\theta-t)G^a\phi(t,x,p)|
\le
C_{a,\phi}(\theta-t)^\frac32\;.
\end{align}
%Denoting
%\begin{align*}
%G^a\phi = \partial_t \phi -F^a\phi\,,
%\end{align*}
We also have
\begin{align*}
\hat{Y}_r - \Phi_r = \int_r^\theta G^a\phi(s,X^{t,x}_s,P^{t,p,\alpha}_s) \ud s -\int_r^\theta (\hat{Z}_s-\mathfrak{Z}_s)\ud W_s
\end{align*}
Applying classical energy estimates for BSDEs,
%(\textcolor{magenta}{reference a un lemme en appendice?}),
we obtain
\begin{align}
\esp{\sup_{r \in [t,\theta]}|\hat{Y}_r - \Phi_r |^2
+\int_t^\theta |\hat{Z}_s-\mathfrak{Z}_s|^2 \ud s}
&\le C\esp{\left(\int_t^\theta |G^a\phi(s,X^{t,x}_s,P^{t,p,\alpha}_s| \ud s\right)^2 } \nonumber
\\
&\le C_{a,\phi}(\theta-t)^2\;, \label{eq useful cons 1}
\end{align}
where for the last inequality we used the smoothness of $\phi$ and the linear growth  of $f$ and $\sigma$.\\
We also observe that
\begin{align*}
\hat{Y}_r - Y_r = \int_r^\theta \set{\delta f_s + f(s,X^{t,x}_s,\hat{Y}_s,\hat{Z}_s)-
f(s,X^{t,x}_s,{Y}_s,{Z}_s)}\ud s - \int_r^\theta \set{\hat{Z}_s-Z_s} \ud W_s\,,
\end{align*}
where $\delta f_s := f(s,X^{t,x}_s,\Phi_s,\mathfrak{Z}_s) - f(s,X_s,\hat{Y}_s,\hat{Z}_s)$, for $t \le s \le \theta$. Once again, from classical energy estimates
 \cite{el1997backward,chassagneux2018obliquely}, we obtain
\begin{align*}
|\hat{Y}_t - Y_t|^2 & \le C \esp{\left(\int_t^\theta \delta f_s \ud s\right)^2}\;.
\end{align*}
Using the Cauchy-Schwarz inequality and the Lipschitz property of $f$,
\begin{align*}
|\hat{Y}_t - Y_t|^2&\le C(\theta-t)\esp{\sup_{r \in [t,\theta]}|\hat{Y}_r - \Phi_r |^2
+\int_t^\theta |\hat{Z}_s-\mathfrak{Z}_s|^2 \ud s}\;.
\end{align*}
This last inequality, combined with \eqref{eq useful cons 1}, leads to
\begin{align*}
|\hat{Y}_t - Y_t| \le C(\theta-t)^\frac32 \;.
\end{align*}
The proof is concluded by combining the above inequality with \eqref{eq cons temp 1}.
\eproof
\end{proof}
%\end{bluetext}

\vspace{4pt}

\noindent Finally, we can prove the following consistency property.  
  \begin{Proposition}[Consistency] \label{cons-theo} 
  Let $\phi \in \cC^{\infty}([0,T]\times \R^d \times [0,1])$.
  For $(t,x,p)\in [0,T)\times\R^d\times(0,1)$,
  \begin{align}
  \left|\frac1{t_\pi^+ - t} \mathfrak{S}\big(\pi,t,x,p,\phi(t,x,p)+\xi,\phi(\cdot)+\xi\big)
  +\partial_t \phi -\cF_n(t,x,p,\phi,D\phi,D^2 \phi)\right| \rightarrow 0
  \end{align}
  as $(|\pi|,\xi) \rightarrow 0$.
  \end{Proposition}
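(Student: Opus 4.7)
The plan is to reduce the consistency statement to a direct composition of the two preceding lemmata, exploiting the simple algebraic form of the scheme. For fixed $(t,x,p)\in[0,T)\times\R^d\times(0,1)$, I would write $\theta:=t_\pi^+$ and $\tau:=t_\pi^-$, noting that $\tau\le t<\theta$ and $\theta-t\le |\pi|\to 0$. By the very definition of $\mathfrak{S}$ in \eqref{ThScheme}, for $|\pi|$ small enough so that $\theta<T$,
\begin{equation*}
\mathfrak{S}\bigl(\pi,t,x,p,\phi(t,x,p)+\xi,\phi(\cdot)+\xi\bigr)
= \phi(t,x,p)+\xi - \min_{a\in K} S^a\bigl(\tau,\theta,\phi(\theta,\cdot)+\xi\bigr)(t,x,p).
\end{equation*}

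Next, I would use Lemma \ref{le perturb xi} to peel off the $\xi$-perturbation, yielding
\begin{equation*}
S^a\bigl(\tau,\theta,\phi(\theta,\cdot)+\xi\bigr)(t,x,p)
= S^a\bigl(\tau,\theta,\phi(\theta,\cdot)\bigr)(t,x,p) + \xi + R^a_1,
\qquad |R^a_1|\le C(\theta-t)|\xi|,
\end{equation*}
and then invoke Lemma \ref{le cons 2} applied to the smooth test function $\phi$ to obtain
\begin{equation*}
\phi(t,x,p) - S^a\bigl(\tau,\theta,\phi(\theta,\cdot)\bigr)(t,x,p)
= (\theta-t)\,G^a\phi(t,x,p) + R^a_2,
\qquad R^a_2 = o(\theta-t),
\end{equation*}
with $G^a\phi=-\partial_t\phi+F^a(\cdot,\phi,D\phi,D^2\phi)$. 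Adding these two expansions and using that $\min_{a\in K}(-X^a)=-\max_{a\in K}X^a$,
\begin{equation*}
\mathfrak{S}\bigl(\pi,t,x,p,\phi(t,x,p)+\xi,\phi(\cdot)+\xi\bigr)
= (\theta-t)\max_{a\in K}\bigl[G^a\phi(t,x,p)\bigr] + R,
\end{equation*}
where the remainder $R$ satisfies $|R|\le C(\theta-t)|\xi| + \sup_{a\in K}|R^a_2|$. Dividing by $\theta-t$ and invoking the definitions $\cF_n(\Xi)=\sup_{a\in K}F^a(\Xi)$ gives
\begin{equation*}
\frac{1}{\theta-t}\mathfrak{S}(\pi,t,x,p,\phi(t,x,p)+\xi,\phi(\cdot)+\xi)
= -\partial_t\phi(t,x,p) + \cF_n(t,x,p,\phi,D\phi,D^2\phi) + o(1),
\end{equation*}
which is the desired limit.

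The only point requiring care, and which is likely to be the main obstacle, is the uniformity of the remainders over $a\in K$. Both error terms produced by Lemmata \ref{le perturb xi} and \ref{le cons 2} depend on $a$ through the BSDE constants (e.g.\ via $\mathfrak{z}(x,q,a)$ and $\cL(x,q,A,a)$), and the bound $\sup_{a\in K}|R^a_2|=o(\theta-t)$ is crucial. Since we work with a fixed $n$, and $\cR_n$ is closed inside $\cS\setminus\cD$, it is bounded away from the degenerate set $\cD$ on the compact sphere $\cS$, hence $K=\cR_n^\flat$ is a bounded (in fact compact) subset of $\R^d$; this lets the constants $C_{a,\phi}$ appearing in the proof of Lemma \ref{le cons 2} be bounded uniformly over $a\in K$, so the $o(\theta-t)$ is uniform in $a$, and the argmax over the compact $K$ of the continuous map $a\mapsto G^a\phi(t,x,p)$ is attained, legitimizing the replacement of $\max_{a\in K}$ by $\cF_n$. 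Finally, the joint limit $(|\pi|,\xi)\to 0$ is handled by combining the two independent error contributions $o(1)$ (from the time step) and $O(|\xi|)$.
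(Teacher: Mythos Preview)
Your proof is correct and follows essentially the same route as the paper: reduce via Lemma~\ref{le perturb xi} to remove the $\xi$-shift, then apply Lemma~\ref{le cons 2} term-by-term and pass the max through using $|\max_a X^a-\max_a Y^a|\le\max_a|X^a-Y^a|$. Your added discussion of uniformity in $a\in K$ (via compactness of $\cR_n\subset\cS\setminus\cD$) is a useful clarification that the paper leaves implicit.
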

  %\begin{bluetext}
  \begin{proof}
  We first observe that by Lemma \ref{le perturb xi}, it is sufficient to prove
  \begin{align*}
    \left|\frac1{t_\pi^+ - t} \mathfrak{S}\big(\pi,t,x,p,\phi(t,x,p),\phi(\cdot)\big)
  +\partial_t \phi -\cF_n(t,x,p,\phi,D\phi,D^2 \phi)\right| \underset{|\pi| \downarrow 0}{ \longrightarrow} 0
  \end{align*}
  We have that
  \begin{align*}
  &\left|\frac1{t_\pi^+ - t} \mathfrak{S}\big(\pi,t,x,p,\phi(t,x,p),\phi(\cdot)\big)
  +\partial_t \phi -\cF_n(t,x,p,\phi,D\phi,D^2 \phi)\right|
  \\
  &=\left|
  \frac1{t_\pi^+ - t} \set{ \phi(t,x,p) - \min_{a \in K} S^a\left(t^-_\pi, t^+_\pi, \phi(t^+_\pi, \cdot)\right)(t,x,p) 
 }  - \max_{a \in K} G^a(t,x,p)  \phi \right|
   \\
   &\le \max_{a \in K} \left|
    \frac1{t_\pi^+ - t} \set{ \phi(t,x,p) - S^a\left(t^-_\pi, t^+_\pi, \phi(t^+_\pi, \cdot)\right)(t,x,p) }
   - G^a \phi \right|\,.
  \end{align*}
  The proof is then concluded by applying Lemma \ref{le cons 2}.
  \eproof
  \end{proof}
  %\end{bluetext} 

\vspace{4pt}
\noindent To conclude this section, let us observe that we obtain the following result, combining Proposition \ref{pr conv bounding} and Theorem \ref{th conv disc time} .
\begin{Corollary}
In the setting of this section, assuming (H), the following holds
\begin{align*}
\lim_{n \rightarrow \infty}\lim_{|\pi| \downarrow 0} v_{n,\pi} = v \;.
\end{align*}
\end{Corollary}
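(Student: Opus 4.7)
The plan is very short: this corollary is essentially a direct concatenation of the two convergence results established earlier in the section, so no new analytical machinery is required. The only point to make explicit is the order of the two limits, which matches the iterated-limit formulation in the statement.

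First I would fix $n \ge 1$ and the discrete control set $\cR_n$, and invoke Theorem \ref{th conv disc time} to obtain $v_{n,\pi} \to v_n$ in $\cC^0([0,T]\times\R^d\times[0,1])$ as $|\pi|\downarrow 0$. This immediately yields the existence of the inner limit $\lim_{|\pi|\downarrow 0} v_{n,\pi} = v_n$ for every fixed $n$. Then I would let $n \to \infty$ and apply Proposition \ref{pr conv bounding}, which asserts exactly that $v_n \to v$ in $\cC^0$. Composing the two convergences gives
\begin{align*}
\lim_{n \rightarrow \infty}\lim_{|\pi| \downarrow 0} v_{n,\pi} \;=\; \lim_{n \rightarrow \infty} v_n \;=\; v
\end{align*}
in $\cC^0$, i.e.\ locally uniformly on $[0,T]\times\R^d\times[0,1]$.

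There is essentially no obstacle: the hard work was already done in establishing (a) the monotonicity, stability and consistency of the PCPT scheme plus the boundary analysis of Step 1 in the proof of Theorem \ref{th conv disc time}, and (b) the half-relaxed-limits argument combined with the comparison principle of \cite{BC17} in Proposition \ref{pr conv bounding}. The only thing worth mentioning, for clarity, is that both convergences hold in the same topology (local uniform convergence on $[0,T]\times\R^d\times[0,1]$), so the iterated limit is well-defined in $\cC^0$ and no diagonal extraction or uniformity in $n$ of the inner limit is needed.
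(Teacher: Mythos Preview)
Your proposal is correct and matches the paper's own argument exactly: the corollary is stated as an immediate consequence of combining Proposition~\ref{pr conv bounding} and Theorem~\ref{th conv disc time}, with no additional proof given. Your explicit spelling-out of the iterated limit is a faithful expansion of that one-line justification.
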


\begin{Remark} An important question, from numerical perspective, is to understand how to fix the parameters $n$ and $\pi$ in relation to each other.
The theoretical difficulty here is to obtain a precise rate of convergence for the approximations given in Proposition \ref{pr conv bounding} and Theorem \ref{th conv disc time}, along the lines of the continuous dependence estimates with respect to control discretisation in \cite{jakobsen2005continuous, DRZ18},
and estimates of the approximation by piecewise constant controls as in \cite{Kry99, jakobsen2019improved}. 
To answer this question in our general setting is a challenging task, extending also to error estimates for the full discretisation in the next section, which is left for further research.
%To calibrate the numerical procedure, see Section , we use parameters that are checked to be reasonable when tested
\end{Remark}

\section{
Application to the Black-Scholes model: a fully discrete monotone scheme} 
  \label{se-monotone}

The goal of this section is to introduce a fully implementable scheme and to prove its convergence. The scheme is obtained by adding a finite difference approximation to the PCPT procedure described in Section \ref{sub-se pcpt schme}. Then in Section \ref{se-numerics}, we present numerical tests that demonstrate the practical feasibility of our numerical method.
From now on, we will assume that the log-price process $X$ is a one-dimensional Brownian motion with drift, for
$(t,x) \in [0,T]\times\R$:
\begin{align} \label{eq x bm drifted}
  X^{t,x}_s &= x + \mu(s-t) + \sigma(W_s - W_t), \qquad s \in [t,T],
\end{align}
with $\mu \in \R$ and $\sigma > 0$. \\
This restriction to Black-Scholes is not essential, as the main difficulty and nonlinearities are already present in this case
and the analysis technique can be extended straightforwardly to more general monotone schemes in the setting of more complex SDEs for $X$.
We take advantage of the specific dynamics to design a simple to implement numerical scheme, which also simplifies the notation.\\
We shall moreover work under the following hypothesis.
%\textcolor{red}{We assume in this section that $0 \not\in K$ and:}
\begin{Assumption}\label{hypNum}
  % \begin{enumerate}% [i)]
  % % \item The payoff function $g$ is bounded and Lipschitz continuous, with Lipschitz constant $L_g > 0$,
  % % \item $f$ is non-increasing in its third variable.
%\item
  The coefficient $\mu$ is non-negative.
  % \end{enumerate}
\end{Assumption}
\begin{Remark} \label{boundSuperRep}
  % (i) \textcolor{magenta}{MOVE BEFORE?} It is without loss of generality that we assume $f$ non-increasing in
%   its third variable. Indeed, if this is not the case, the
%   Lipschitz-continuity of $f$ ensures that
%   $\tilde f: y \mapsto f(t,x,y,v) - Ly$, with $L$ the Lipschitz
%   constant of $f$, is non-increasing and Lipschitz-continuous, and the
%   function $e^{Lt}v_n$ is the solution of the Dirichlet problem
%   \eqref{eq pde more classic}-\eqref{TermCond}-\eqref{bCond1} with $\tilde f$
%   instead of $f$.
% \\
  % (i)
  % By the Lipschitz-continuity of $f$
  % and the boundedness of $g$, it is easy to see that
  % $|\overline v|_\infty \le e^{LT} |g|_\infty$, where $L$ is the
  % Lipschitz constant of $f$.
  % \\
%  (ii) Assumption \ref{hypNum}(iii)
  This assumption is introduced without loss of generality in order to alleviate the notation in the scheme definition. We detail in Remark \ref{re scheme full def}(ii) how to  modify the schemefor non-positive drift $\mu$. The convergence properties are the same.
\end{Remark}
\noindent We now fix $n \ge 1$, $\cR_n$ the associated discrete set of controls (see Section \ref{sub-se disc control}). We denote $K = \cR^\flat_n$ assuming that $0 \notin K$ and  recall that $v_n$ is the solution to \eqref{eq pde more classic}. We consider the grid $\pi = \{0 =: t_0 < \cdots < t_k < \cdots < t_\kappa := T\}$ on $[0,T]$ and approximate $v_n$ by a PCPT scheme, extending Section \ref{sub-se pcpt schme}.

The main point here is that we introduce a finite difference approximation for the solution $S^a(\cdot)$, $a \in K$ to %\eqref{PDEaconst}-\eqref{TermCond}-\eqref{bCond1}. 
\eqref{PDEaconst}--\eqref{bCond1}.
This approximation, denoted by $S^a_\delta(\cdot)$ for a parameter $\delta > 0$, will be specified in Section \ref{sub-se de finite diff scheme} below. For $\delta > 0$ and $a \in K$, each approximation $S^a_\delta(\cdot)$ is defined on a spatial grid 
%   we define $\Gamma^a_\delta$ a uniform grid of $[0,1]$. We set
  \begin{align}
    \cG^a_\delta := \delta\Z \times \Gamma^a_\delta \subset \R\times[0,1].
  \end{align}
  where $\Gamma^a_\delta$ is a uniform grid of $[0,1]$, with $N^a_\delta+1$ points and mesh size $1/N^a_\delta$. %$\frac{1}{N^a_\delta}$.
  A typical element of $\cG^a_\delta$ is denoted %$(x_k,p_l) := (k\delta, \frac{l}{N^a_\delta})$, 
  $(x_k,p_l) := (k\delta,  l/N^a_\delta)$,
  and an element of $\ell^\infty(\cG^a_\delta)$ is $u_{k,l} := u(x_k,p_l)$, for all $k \in \Z$ and $0 \le l \le N^a_\delta$.
  %, where $\frac{1}{N^a_\delta}$ is the step of the mesh of $[0,1]$ defined by $\Gamma^a_\delta$.\\
  For $0 \le t < s \le T$, and $\varphi : \delta\Z \times [0,1] \to \R$ a bounded function, we have that $S^a_\delta(s,t,\varphi) \in \ell^\infty(\cG^a_\delta)$. 
  
 \noindent In order to define our approximation of $v_n$, it is not enough to replace $S^a(\cdot)$ in the minimisation \eqref{ThScheme}, or similarly \eqref{eq de w-pi-backward}, by $S^a_\delta(\cdot)$,  as the approximations are not defined on the same grid for the $p$-variable. (The flexibility of different grids will be important later on.) We thus have to consider a supplementary step which consists in a linear interpolation in the $p$-variable. Namely any mapping $u \in \ell^\infty(\cG^a_\delta)$ is extended into $\cI^a_\delta(u) : \delta\Z \times [0,1] \to \R$ by linear interpolation in the second variable: if $u \in \ell^\infty(\cG^a_\delta),k \in \Z$ and $p \in [p_l,p_{l+1})$ with $0 \le l < N^a_\delta$,
 \begin{align*}
   \cI^a_\delta(x_k,p) = \frac{p_{l+1}-p}{p_{l+1}-p_l} u_{k,l} + \frac{p-p_l}{p_{l+1}-p_l} u_{k,l+1},
 \end{align*}
 and obviously $\cI^a_\delta(x_k,1) = u_{k,N^a_\delta}$.\\
   %.\textcolor{red}{define precisely the interpolation operator}\\
  %Let $\pi = \{0 =: t_0 < \cdots < t_k < \cdots < t_\kappa := T\}$ be a grid of $[0,T]$.\\
  The solution to the numerical scheme associated with $\pi, \delta$ is then $v_{n,\pi,\delta} : \pi\times\delta\Z\times[0,1] \to \R$ satisfying
  \begin{align} \label{def vnpidelta}
    \widehat{\mathfrak{S}}(\pi,\delta,t,x,p,v_{n,\pi,\delta}(t,x,p),v_{n,\pi,\delta}) = 0,
  \end{align}
  where, for any $0 \le t \in \pi, x \in \delta\Z, p \in [0,1], y \in \R^+$ and any bounded function $u:\pi\times\delta\Z\times[0,1]\to\R$:
  \begin{align}
    \label{DiscScheme}
    \widehat{\mathfrak{S}}(\pi,\delta,t,x,p,y,u) = 
    \left\{
    \begin{array}{rl}
      y - \min_{a \in K} \cI^a_\delta\left(S^a_\delta\left(t^+_\pi, t_k, u(t^+_\pi, \cdot)\right)\right)(t_k,x,p) & \mbox { if } k < \kappa, \\
      y - g(x)p & \mbox{ otherwise},
    \end{array} \right. 
  \end{align}
  where $t^+_\pi = \inf \{s \in \pi: s \ge t\}$.
  
  \vspace{4pt}
  \noindent Alternatively, the approximation $v_{n,\pi,\delta}$ is defined by the following backward induction: 
  \begin{enumerate}
  \item Initialisation: set $v_{n,\pi,\delta}(T,x,p) := {g}(x)p$,
    $x \in \R^d \times [0,1]$.
  \item Backward step: 
    % \begin{enumerate}[a.]
    % \item 
    For $k = \kappa-1,\dots,0$, compute $w^{k,a}_\delta := S^a_\delta(t_k,t_{k+1},v_{n,\pi,\delta}(t_{k+1},\cdot))$ and set, for $(x,p) \in \delta \Z \times [0,1]$,
    \begin{align}\label{eq de w-pi-backward delta}
      v_{n,\pi,\delta}(t_k,x,p) := \inf_{a \in K}\cI^a_\delta(w^{k,a}_\delta)(t_k,x,p)\;.  
    \end{align}
  \end{enumerate}
  
  \noindent Before stating the main convergence result of this section, see Theorem \ref{thm conv scheme} below, we give the precise definition of $S^a_\delta(\cdot)$ using finite difference operators.

\subsection{Finite difference scheme definition and convergence result}
\label{sub-se de finite diff scheme}

Let $0 \le t < s \le T, \delta > 0, \varphi : \delta\Z \times [0,1] \to \R$. We set $h := s-t > 0$. 

For $a \in K$, we will describe the grid $\cG^a_\delta = \delta\Z \times \Gamma^a_\delta \subset \delta\Z \times [0,1]$ and the finite difference scheme used to define
$S^a_\delta$. \\%(s,t,\varphi)$.\\
First, we observe that for the model specification of this section, %(see \eqref{eq x bm drifted}), 
\eqref{eq pde more classic}
%\eqref{PDEaconst} 
can be rewritten as
\begin{align}\label{pde num}
  \sup_{a \in K} \left(-D^a\varphi - \mu \nabla^a \varphi  - \frac{\sigma^2}2 \Delta^a\varphi - f(t,x,\varphi,\sigma \nabla^a\varphi) \right) = 0,
\end{align}
with:
\begin{align}
  \nabla^a \varphi &:= \partial_y\varphi + \frac{a}{\sigma}\partial_p \varphi, \\
  \label{diffOp} \Delta^a\varphi &:=  \partial^2_{yy} \varphi + 2 \frac{a}{\sigma}  \partial^2_{yp} \varphi + \frac{a^2}{\sigma^2} \partial^2_{pp} \varphi,\\
  D^a\varphi &:= \partial_t\varphi - \frac{a}{\sigma}\mu\partial_p\varphi.
\end{align}
%We assume in the rest of the paper that $\mu \ge 0$, the opposite case can be treated similarly.\\
%We set $F(t,x,y,\nabla,\Delta) := - \mu \nabla  - \frac12 \Delta - f(t,x,y,\nabla)$, so that \eqref{eq pde more classic} rewrites:
%\begin{align}\label{pde scheme}
%  \sup_{a \in K} \left( -D^a \varphi + F(t,x,\varphi(t,x,p),\nabla^a\varphi(t,x,p),\Delta^a\varphi(t,x,p)) \right) = 0,
%\end{align}
Exploiting the degeneracy of the operators $\nabla^a$ and $\Delta^a$ in the direction $(a,-\sigma)$, we construct $\Gamma^a_\delta$ so that the solution to \eqref{pde num} is approximated by the solution of an implicit finite difference scheme with only one-directional derivatives.\\
To take into account the boundaries $p=0,p=1$, we set
\begin{align}\label{def N a delta}
  N^a_\delta := \min \left\{j \ge 1: j\frac{|a|}{\sigma}\delta \ge 1\right\} =
  \bigg\lceil{\frac{\sigma}{|a|\delta}\bigg\rceil}
\end{align}
and
\begin{align}\label{def a a delta}
  \mathfrak{a}(a,\delta) := \sgn(a)\frac{\sigma}{\delta N^a_\delta},
\end{align}
where $a \neq 0$. We have $N^a_\delta = \sigma/\delta|\mathfrak{a}(a,\delta)|$. %$N^a_\delta = \frac{\sigma}{\delta|\mathfrak{a}(a,\delta)|}$. 
We finally set:
\begin{align}
  \Gamma^a_\delta &:= \left\{0, \frac{|\mathfrak{a}(a,\delta)|}{\sigma}\delta, \dots, N^a_\delta\frac{|\mathfrak{a}(a,\delta)|}{\sigma}\delta=1\right\} = \left\{ \frac{j}{N^a_\delta}
  \; : \;j=0,\ldots,N^a_\delta \right\}.
\end{align}
We now define the finite difference scheme. To use the degeneracy of the operators $\nabla^{\mathfrak{a}(a,\delta)}$ and $\Delta^{\mathfrak{a}(a,\delta)}$ in the direction $(\mathfrak{a}(a,\delta),-\sigma)$, we define the following finite difference operators, for $v = (v_{k,l})_{k\in\Z,0\le l \le N^a_\delta} = (v(x_k,p_l))_{k\in\Z,0\le l\le N^a_\delta} \in \ell^\infty(\cG^a_\delta)$ and $w = (w_k)_{k\in\Z} \in \ell^\infty(k\Z)$: %\cred{\fbox{Drop ``$(\delta)$'' for space issues below?}}
\begin{align*}
  \nabla^a_\delta v_{k,l} := \frac{1}{2\delta}\left(v_{k+1,l+\sgn(a)} - v_{k-1,l-\sgn(a)}\right), &\quad \nabla_\delta w_k := \frac{1}{2\delta}\left(w_{k+1}-w_{k-1}\right),\\
  \nabla^a_{+,\delta} v_{k,l} := \frac{1}{\delta}\left(v_{k+1,l+\sgn(a)}-v_{k,l}\right), &\quad \nabla_{+,\delta} w_k := \frac{1}{\delta}\left(w_{k+1}-w_k\right), \\
  \Delta^{\!a}_\delta v_{k,l} := \frac{1}{\delta^2}\left(v_{k+1,l+\sgn(a)} + v_{k-1,l-\sgn(a)} - 2v_{k,l}\right), &\quad \Delta_\delta w_k := \frac{1}{\delta^2}\left(w_{k+1} + w_{k-1} - 2w_k\right).
\end{align*}
Let $\theta > 0$ a parameter to be fixed later. We define, for $(t,x,y,q,q_+,A)\in[0,T]\times\R^5$.
\begin{align}
  \label{def H} F(t,x,y,q,A) &:= -\mu q - \frac{\sigma^2}{2} A - f(t,x,y,\sigma q), \mbox{ and} \\
  \label{def H hat}  \widehat F(t,x,y,q,q_+,A) &:= -\mu q_+ - \left(\frac{\sigma^2}2 + \theta\frac{\delta^2}{h} \right) A - f(t,x,y,\sigma q)\,.
\end{align}
%\textcolor{red}{Renommer cH et widehat cH en F et widehat F, pour etre consistent avec les notations précédentes}\\
Now, $S^a_\delta(s,t,\varphi) \in \ell^\infty(\cG^a_\delta)$ is defined as the unique solution to (see Proposition \ref{uniqueness picard scheme} below for the well-posedness of this definition)
\begin{align}
  \label{systu} S(k,l,v_{k,l}, \nabla^a_\delta v_{k,l},\nabla^a_{+,\delta}v_{k,l}, \Delta^{\!a}_\delta v_{k,l},\varphi) &= 0, \\
  \label{discBcond} v_{k,0} = \underline v_k, v_{k,N^a_\delta} &= \overline v_k,
\end{align}
where, for $k \in \Z, 0 < l < N^a_\delta, (v,v_+,v_-) \in \R^3$, and any bounded function $u:\delta\Z\times[0,1]\to\R$:
\begin{align}
  \label{S} S(k, l, v, q, q_+, A, u) = v - u\left(x_k, \textcolor{black}{\mathfrak p^a(p_l)}\right) + h \widehat F(t,k\delta,v,q,q_+,A),
\end{align}
with, for $p \in [0,1]$,
\begin{align} \label{p a p}
  \mathfrak p^a(p) := p - \mu\frac{\mathfrak a(a,\delta)}{\sigma}h,
\end{align}
and where $(\underline v_k)_{k \in \Z}$ (resp. $(\overline v_k)_{k \in \Z}$) is the solution to
\begin{align}
  \label{eq def bound 0} S_b(k,\underline v_k,\nabla_\delta \underline v_k, \nabla_{+,\delta} \underline v_k, \Delta_{\delta} \underline v_k,(\underline \varphi_k)_{k\in\Z}) &= 0, \\
  \label{eq def bound 1} (\mbox{resp. } S_b(k,\overline v_k,\nabla_\delta\overline v_k,\nabla_{+,\delta}\overline v_k,\Delta_{\delta}\overline v_k, (\overline \varphi_k)_{k\in\Z}) &= 0)
\end{align}
with $\underline \varphi_k = \varphi(k\delta,0)$ (resp. $\overline \varphi_k = \varphi(k\delta,N^a_\delta)$) and, for $k \in \Z, (v,v_+,v_-) \in \R^3, u \in \ell^\infty(\Z)$:
\begin{align}
  \label{Sb} S_b(k, v, q,q_+,A, u) = v - u_k+ h \widehat F(t,k\delta,v,q,q_+,A).
  % \nonumber &\hspace{-1.5 cm} \frac{h}{\delta}\left(\frac12\sigma^2-\mu\right)(v_+-v)1_{\frac12\sigma-\mu \le 0} -\\
  % \label{Sb} & \hspace{-1.5 cm} \frac{h}{2}\frac{\sigma^2}{\delta^2}(v_++v_--2v) - \\
  % \nonumber & \hspace{-1.5 cm} h\left(f\left(t^-,k\delta,v,\frac{1}{2\delta}(v_+ - v_-)\right) - \theta\frac{\delta}{h}\frac{v_++v_--2v}{\delta}\right).
\end{align}
\begin{Remark}\label{re scheme full def}
  (i) Here, as stated before, we have assumed $\mu \ge 0$. If the opposite is true, one has to consider $\nabla^a_-(\delta)v_{k,l} := \frac{1}{\delta}\left(v_{k,l}-v_{k-1,l-\sgn(a)}\right)$ (resp.\ $\nabla_-(\delta) w_k := \frac{1}{\delta}\left(w_k-w_{k-1}\right)$) instead of $\nabla^a_{+,\delta}v_{k,l}$ (resp. $\nabla_{+,\delta}w_k$), in the definition of $S^a_\delta(s,t,\varphi)$ (resp.\ $\underline v_k, \overline v_k$), to obtain a monotone scheme.
\\
(ii)  For the nonlinearity $f$, we used the Lax-Friedrichs scheme \cite{CL84,DRZ18}, adding the term $\theta(v_++v_--2v)$ term in the definition of $\widehat F$ to
enforce monotonicity.
\end{Remark}
We now assume that the following conditions on the parameters are satisfied:
\begin{align}
  \label{CFL1} \delta &\le 1, \\
  \label{CFL2} \frac{hL}{2\delta} \le \theta &< \frac14,\\
  \label{CFL3} \mu h \le \delta &\le Mh,
\end{align}
for a constant $M > 0$.
Under these conditions, we prove that $S^a_\delta(s,t,\varphi)$ is uniquely defined, and that it can be obtained by Picard iteration.
\begin{Remark}
  Since $\mu h \le \delta$, we have $|\mu\frac{\mathfrak a(a,\delta)}{\sigma}h| \le \frac{|\mathfrak a(a,\delta)|}{\sigma}\delta$, which ensures that from
  \eqref{p a p}, $\mathfrak p^a(p_l) \in [0,1]$ for all $0 < l < N^a_\delta$.
\end{Remark}
\begin{Proposition} \label{uniqueness picard scheme}
  For every bounded function $\varphi : \delta\Z \times [0,1] \to \R$, there exists a unique solution to \eqref{systu}--\eqref{discBcond}.
\end{Proposition}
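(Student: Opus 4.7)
The strategy is to apply Banach's fixed-point theorem to a Picard iteration, after first reducing the system to one-dimensional subproblems. I handle the boundary problems \eqref{eq def bound 0}--\eqref{eq def bound 1} and the interior system \eqref{systu}--\eqref{discBcond} in turn.

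For the boundary values, I focus on $\underline v$, the argument for $\overline v$ being identical. Writing $S_b=0$ explicitly with $\lambda := \mu h/\delta$ and $\beta := \sigma^2 h/(2\delta^2) + \theta$, the equation reads
\begin{align*}
(1+\lambda+2\beta)\,v_k - (\lambda+\beta)\,v_{k+1} - \beta\,v_{k-1} = \underline\varphi_k + h\,f\!\Bigl(t, k\delta, v_k, \tfrac{\sigma}{2\delta}(v_{k+1}-v_{k-1})\Bigr).
\end{align*}
Since $\partial_y f \le 0$ by assumption, the map $y \mapsto (1+\lambda+2\beta)y - hf(t,k\delta,y,z)$ is strictly increasing with slope at least $1+\lambda+2\beta$, and so I may define a Picard operator $\Psi:\ell^\infty(\Z)\to\ell^\infty(\Z)$ by letting $\Psi(w)_k$ be the unique $y$ solving the displayed equation with $w_{k\pm 1}$ replacing $v_{k\pm 1}$ on the right. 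Subtracting the defining equations for $\Psi(w)_k$ and $\Psi(w')_k$, invoking the $L$-Lipschitz property of $f$, and bounding $|(v_{k+1}-v_{k-1})/(2\delta)| \le \|v\|_\infty/\delta$ to control the $z$-argument, yields after rearrangement
\begin{align*}
(1+\lambda+2\beta - hL)\,\|\Psi(w)-\Psi(w')\|_\infty \le \Bigl(\lambda + 2\beta + \tfrac{hL\sigma}{\delta}\Bigr)\|w-w'\|_\infty.
\end{align*}
Under \eqref{CFL1}--\eqref{CFL3} the ratio on the right is strictly below one, so Banach's theorem delivers a unique $\underline v \in \ell^\infty(\Z)$.

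For the interior system, the key structural observation is that the operators $\nabla^a_\delta, \nabla^a_{+,\delta}, \Delta^a_\delta$ couple only the diagonal triple $\{(k,l), (k+1, l+\sgn a), (k-1, l-\sgn a)\}$. Consequently \eqref{systu} decouples along the diagonals $\mathcal D_c := \{(k,l)\in\cG^a_\delta : l = \sgn(a)\,k + c\}$ indexed by $c \in \Z$, each of which yields a finite one-dimensional nonlinear tridiagonal system with Dirichlet boundary conditions supplied at its two endpoints by $\underline v$ and $\overline v$ via \eqref{discBcond}. Repeating the Picard argument in finite dimensions produces existence and uniqueness along each diagonal, and hence on all of $\cG^a_\delta$.

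The hard part is verifying the contraction estimate: the CFL conditions, in particular the extra diffusion $\theta\delta^2/h$ built into $\widehat F$ in \eqref{def H hat} for monotonicity and the upwinding of $\mu$ under Assumption \ref{hypNum} with $\mu h \le \delta$, are exactly what enables the dissipative linear part of $\Psi$ to absorb both the $y$- and $z$-Lipschitz contributions of $f$ (notably the bad factor $\sigma/\delta$ arising from the discrete gradient) and close the contraction strictly below one.
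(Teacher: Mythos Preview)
Your plan is correct and rests on the same core idea as the paper's proof: Banach's fixed-point theorem applied to a Picard map, with the CFL conditions \eqref{CFL1}--\eqref{CFL3} ensuring the contraction factor is strictly below one. The organisation differs in two minor ways. First, the paper uses a fully \emph{explicit} Picard operator $\psi$ on all of $\ell^\infty(\cG^a_\delta)$, evaluating every occurrence of $v$ --- including the centre node inside $f$ --- at the previous iterate, and obtains the contraction factor $\bigl(\mu h/\delta + \sigma^2 h/\delta^2 + 2\theta + hL + hL/\delta\bigr)\big/\bigl(1+\mu h/\delta + \sigma^2 h/\delta^2 + 2\theta\bigr)$; you instead solve implicitly for the centre value at each node (well-posed by monotonicity of $f$ in $y$), which merely shifts the $hL$ contribution from the numerator to the denominator. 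Second, your observation that \eqref{systu} decouples along the diagonals $l-\sgn(a)\,k=\mathrm{const}$ into finite tridiagonal systems with Dirichlet endpoints is correct and structurally illuminating --- the same decoupling underlies the comparison proof of Proposition~\ref{prop comp thm scheme} --- but the paper does not invoke it here, since the global contraction already closes on the product space. Neither variation changes the substance of the argument.
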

\begin{proof} 
  First, $\underline v \in \ell(\delta\Z)$ (resp. $\overline v \in \ell(\delta\Z)$) is uniquely defined by \eqref{eq def bound 0} (resp. \eqref{eq def bound 1}), see Proposition \ref{comp thm bound}.\\
  We consider the following map:
  \begin{align*}
    \ell^\infty(\cG^a_\delta) &\to \ell^\infty(\cG^a_\delta), \\
    v &\mapsto \psi(v),
  \end{align*}
  where $\psi(v)$ is defined by, for $k \in \Z$ and
  $l \in \{1,\dots,N_a-1\}$:
  \begin{align}
    \psi(v)_{k,l} \ =\ &\frac{1}{1 + \frac{h}{\delta}\mu + \sigma^2\frac{h}{\delta^2} +2\theta}\left(\varphi\left(k\delta, \textcolor{black}{\mathfrak p^a(p_l)}\right) + \right.\\
    \nonumber  &\hspace{-1.2 cm} \left. \frac{h}{\delta}\mu v_{k+1,l+\sgn(a)}+ \frac{\sigma^2}{2}\frac{h}{\delta^2}(v_{k+1,l+\sgn(a)}+v_{k-1,l-\sgn(a)}) + \right.\\
    \nonumber &\hspace{-1.2 cm} \left. hf \left(t^-,\textcolor{black}{k\delta},v_{k,l},\frac{\textcolor{black}{\sigma}}{2\delta}(v_{k+1,l+\sgn(a)} - v_{k-1,l-\sgn(a)})\right) + \theta(v_{k+1,l+\sgn(a)}+v_{k-1,l-\sgn(a)})\right),
  \end{align}
  \begin{align}
    \psi(v)_{k,0} = \underline v_k, \psi(v)_{k,N_a} = \overline v_k.
  \end{align}
  Notice that $v$ is a solution to \eqref{systu}-\eqref{discBcond} if and only if $v$ is a fixed point of $\psi$. It is now enough to show that $\psi$ maps $\ell^\infty(\cG^a_\delta)$ into $\ell^\infty(\cG^a_\delta)$  and is contracting. \\
  If $v \in \ell^\infty(\cG^a_\delta)$, by boundedness of $\varphi, \underline v$ and $\overline v$, it is clear that $\psi(v)$ is bounded. \\
  If $v^1, v^2 \in \ell^\infty(\cG^a_\delta)^2$, we
  have, for all $k \in \Z$ and $1 \le l \le N_a - 1$:
  \begin{align}
    |\psi(v^1)_{k,l} - \psi(v^2)_{k,l}| \le \frac{ \frac{h}{\delta}\mu+\sigma^2\frac{h}{\delta^2}+2\theta+hL+\frac{hL}{\delta}}{1+\frac{h}{\delta}\mu + \sigma^2\frac{h}{\delta^2} +2\theta}|v^1-v^2|_\infty.
  \end{align}
  Since $\delta \le 1$ by assumption (\ref{CFL1}), one has
  $hL + \frac{hL}{\delta} \le 2\frac{hL}{\delta} \le 4\theta$, thus:
  \begin{align}
    |\psi(v^1) - \psi(v^2)|_\infty \le \frac{ 4\theta + \frac{h}{\delta}\mu+\sigma^2\frac{h}{\delta^2}+2\theta}{1+\frac{h}{\delta}\mu + \sigma^2\frac{h}{\delta^2} +2\theta}|v^1-v^2|_\infty.
  \end{align}
  Since $4\theta < 1$ by assumption (\ref{CFL2}) and the function
  $x \mapsto \frac{4\theta + x}{1 + x}$ is increasing on $[0,\infty)$
  with limit $1$ when $x \to +\infty$, this proves that $\psi$ is a
  contracting mapping.  \eproof
\end{proof}
For this scheme, we have the following strong uniqueness result:
\begin{Proposition} \label{prop comp thm scheme}
  Let $\varphi^1, \varphi^2 : \delta\Z \times [0,1] \to \R$ two bounded functions satisfying $\varphi^1 \le \varphi^2$ on $\delta\Z \times [0,1]$.
  \begin{enumerate}
  \item (Monotonicity) For all $k \in \Z$, $1 \le l \le N_a$,
    $(v,q,q_+,A) \in \R^4$, we have:
    \begin{align} \label{monotone} S(k,l,v,q,q_+,A,\varphi^2) \le
      S(k,l,v,q,q_+,A,\varphi^1).
    \end{align}
  \item (Comparison theorem) Let
    $(v^1,v^2) \in \ell^\infty(\cG^a_\delta)^2$ satisfy,
    for all $k \in \Z$ and $1 \le l \le N^a_\delta - 1$:
    \begin{align}
      \nonumber S(k,l,v^1_{k,l},\nabla^a_\delta v^{\textcolor{black}{1}}_{k,l},&\nabla^a_{+,\delta}v^{\textcolor{black}{1}}_{k,l},\Delta^a_\delta v^{\textcolor{black}{1}}_{k,l},\varphi^2) \\
      \label{eq diff} &\hspace{0cm}\le S(k,l,v^2_{k,l},\nabla^a_\delta v^2_{k,l},\nabla^a_{+,\delta}v^2_{k,l},\Delta^a_\delta v^2_{k,l},\varphi^2)\\
      v^1_{k,0} &\le v^2_{k,0},\\
      v^1_{k,N^a_\delta} &\le v^2_{k,N^a_\delta}.
    \end{align}
    Then $v^1 \le v^2$.
  \item We have $S^a_\delta(s,t,\varphi^1)_{k,l} \le S^a_\delta(s,t,\varphi^2)_{k,l}$ for all $k \in \Z$ and $0 \le l \le N^a_\delta$.
  \end{enumerate}
\end{Proposition}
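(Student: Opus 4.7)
My plan is to prove the three parts in sequence, with Part 1 immediate from the formula, Part 2 by a discrete maximum principle with linearization, and Part 3 by combining the first two together with a comparison for the boundary scheme.

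\textbf{Part 1} follows directly from \eqref{S}: the function $u$ appears only through the single subtracted term $-u(x_k, \mathfrak{p}^a(p_l))$, and all other components of $S$ are independent of $u$. Therefore $u \mapsto S(k,l,v,q,q_+,A,u)$ is non-increasing and \eqref{monotone} is immediate.

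\textbf{Part 2} is a discrete comparison principle. Set $w := v^1 - v^2 \in \ell^\infty(\cG^a_\delta)$ and $M := \sup_{(k,l)} w_{k,l}$, which is finite by boundedness. Assume for contradiction $M > 0$. Since the boundary hypothesis gives $w_{k,0}, w_{k,N^a_\delta} \le 0$, the supremum is approached only in the interior range $1 \le l \le N^a_\delta - 1$. Subtract the two discrete inequalities; the $\varphi^2$ terms cancel, and using that $\widehat F$ is affine in $(q,q_+,A)$ together with the standard linearization
\begin{align*}
f(v^1_{k,l}, \sigma \nabla^a_\delta v^1_{k,l}) - f(v^2_{k,l}, \sigma \nabla^a_\delta v^2_{k,l}) = \alpha_{k,l}\, w_{k,l} + \beta_{k,l}\, \sigma \nabla^a_\delta w_{k,l},
\end{align*}
valid with $\alpha_{k,l} \le 0$ (by the monotonicity of $f$ in $y$ from \HYP{}(ii)) and $|\alpha_{k,l}|, |\beta_{k,l}| \le L$, expansion of the finite difference operators and regrouping yields a pointwise inequality of the form
\begin{align*}
\mathsf{A}_0\, w_{k,l} \le \mathsf{A}_+\, w_{k+1, l+\sgn(a)} + \mathsf{A}_-\, w_{k-1, l-\sgn(a)},
\end{align*}
where $\mathsf{A}_0 = 1 - h\alpha_{k,l} + \mathsf{A}_+ + \mathsf{A}_-$, so that $\mathsf{A}_0 \ge 1 + \mathsf{A}_+ + \mathsf{A}_-$, and where $\mathsf{A}_\pm \ge 0$ under the CFL conditions \eqref{CFL1}--\eqref{CFL3}; crucially, the Lax-Friedrichs parameter $\theta$ fixed in \eqref{CFL2} absorbs the antisymmetric contribution $h\beta_{k,l}\sigma/(2\delta)$ coming from the gradient dependence of $f$. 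Now for any $\epsilon \in (0, M)$, pick $(k_\epsilon, l_\epsilon)$ in the interior with $w_{k_\epsilon, l_\epsilon} > M - \epsilon$ (possible by definition of the supremum and the boundary bound). Applying the displayed inequality at this point and using $w \le M$ everywhere gives
\begin{align*}
\mathsf{A}_0 (M - \epsilon) < (\mathsf{A}_+ + \mathsf{A}_-) M \le (\mathsf{A}_0 - 1) M,
\end{align*}
i.e.\ $M < \mathsf{A}_0\, \epsilon$. Since $\mathsf{A}_0$ is bounded by a constant independent of $\epsilon$, letting $\epsilon \downarrow 0$ forces $M \le 0$, the desired contradiction.

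\textbf{Part 3} follows from Parts 1 and 2. Let $v^i := S^a_\delta(s,t,\varphi^i)$ for $i=1,2$. At any interior node, Part 1 applied with $\varphi^1 \le \varphi^2$ yields
\begin{align*}
S(k,l, v^1_{k,l}, \nabla^a_\delta v^1_{k,l}, \nabla^a_{+,\delta} v^1_{k,l}, \Delta^a_\delta v^1_{k,l}, \varphi^2) \le S(k,l, v^1_{k,l}, \ldots, \varphi^1) = 0 = S(k,l, v^2_{k,l}, \ldots, \varphi^2),
\end{align*}
which is exactly the interior hypothesis of Part 2 for the pair $(v^1, v^2)$. The boundary inequalities $\underline v^1 \le \underline v^2$ and $\overline v^1 \le \overline v^2$ follow from the analogous comparison result for the boundary scheme $S_b$ (Proposition \ref{comp thm bound} invoked in the proof of Proposition \ref{uniqueness picard scheme}), whose proof parallels Part 2 on the one-dimensional lattice $\delta\Z$. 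Part 2 then delivers $v^1 \le v^2$ on all of $\cG^a_\delta$.

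The main obstacle is the non-negativity of $\mathsf A_\pm$ in Part 2: this is exactly the property that the Lax-Friedrichs stabilization in \eqref{def H hat} was introduced to ensure under \eqref{CFL2}, yielding monotonicity of the finite difference scheme in the sense of Barles--Souganidis. Once this structural point is verified, the rest is a routine discrete maximum principle adapted to the unbounded spatial index $\Z$ via near-maximizers rather than actual maxima.
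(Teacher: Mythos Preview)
Your proof is correct; Parts 1 and 3 match the paper essentially verbatim. The interesting divergence is in Part 2.

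The paper exploits the special stencil structure of the scheme: since the operators $\nabla^a_\delta, \nabla^a_{+,\delta}, \Delta^a_\delta$ couple $(k,l)$ only with $(k\pm 1, l\pm\sgn(a))$, each diagonal $\{(k+l,l): 0\le l\le N^a_\delta\}$ (for $a>0$) is invariant and has finitely many points. The paper therefore sets $M_k := \max_{0\le l\le N^a_\delta}(v^1_{k+l,l}-v^2_{k+l,l})$, which is \emph{attained}, and runs the maximum-principle argument on that diagonal alone. This avoids any $\epsilon$-argument and yields the clean inequality $(1-hL/\delta)M_k\le 0$ directly.

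Your route instead takes the global supremum $M=\sup_{(k,l)}w_{k,l}$ over the unbounded lattice $\Z\times\{0,\dots,N^a_\delta\}$, compensates for non-attainment via near-maximizers, and closes by noting that $\mathsf A_0$ is bounded uniformly in the choice of near-maximizer. This is the standard device for discrete maximum principles on unbounded grids and works perfectly well here; it is slightly more robust (it would survive a stencil that does not decouple into finite diagonals) at the cost of the extra $\epsilon$-step. The paper's argument is shorter because it notices a structural simplification you did not need.
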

\begin{proof}
  Let $\varphi^1, \varphi^2$ as stated in the proposition.
  \begin{enumerate}
  \item We have, for $k \in \Z$ and $0 < l < N^a_\delta$:
    \begin{align*}
      &S(k,l,v,q,q_+,A,\varphi^2) - S(k,l,v,q,q_+,A,\varphi^1) \\
      &= (\varphi^1-\varphi^2)\left(x_k, \textcolor{black}{\mathfrak p^a(p_l)}\right) \le 0.
    \end{align*}
  \item We assume here that $a > 0$. For $k \in \Z$, let
    $M_k = \max_{0 \le l \le N^a_\delta} ( v^1_{k+l,l} - v^2_{k+l,l}) < \infty$ (if $a < 0$, we have to consider
    $\max_{0 \le l \le N^a_\delta} ( v^1_{k-l,l} - v^2_{k-l,l}$)). We want to prove that $M_k \le 0$ for all $k$. Assume
    to the contrary that there exists $k \in \Z$ such that $M_k >
    0$. Then there exists $0 \le l \le N^a_\delta$ such that
    \begin{align} \label{ptMax} v^1_{k+l,l} - v^2_{k+l,l} = M_k > 0.
    \end{align}
    First, we have $v^1_{k,0} \le v^2_{k,0}$ and $v^1_{k+N^a_\delta,N^a_\delta} \le v^2_{k+N^a_\delta,N^a_\delta}$. Thus $0 < l < N^a_\delta$. \\
    Moreover, using \eqref{eq diff}, re-arranging the terms, using the fact that $f$ is non-increasing with
    respect to its third variable and Lipschitz-continuous, by
    \eqref{ptMax}, %we get
    \begin{align}
      \nonumber (1 + 2\theta)M_k &\le \frac{hL}{2\delta}\left|v^2_{k+l+1,l+1}-v^1_{k+l+1,l+1}\right| - \theta(v^2_{k+l+1,l+1} - v^1_{k+l+1,l+1}) +\\
      \label{eqComp} &\frac{hL}{2\delta}\left|v^2_{k+l-1,l-1}-v^1_{k+l-1,l-1}\right| - \theta(v^2_{k+l-1,l-1} - v^1_{k+l-1,l-1}).
    \end{align}
    For $j \in \set{l-1,l+1}$, we observe that
    \begin{align} \label{eq comp temp 1}
 \frac{hL}{2\delta}|v^2_{k+j,j}-v^1_{k+j,j}| &- \theta(v^2_{k+j,j}-v^1_{k+j,j}) 
 \le  \left(\frac{hL}{2\delta} + \theta\right) M _k.
    \end{align}
    Indeed, if $v^2_{k+j,j} \ge v^1_{k+j,j}$ then
    \begin{align*}
      \frac{hL}{2\delta}|v^2_{k+j,j}-v^1_{k+j,j}| &- \theta(v^2_{k+j,j}-v^1_{k+j,j}) = \left(\frac{hL}{2\delta} - \theta\right)(v^2_{k+j,j}-v^1_{k+j,j}) \le 0,
    \end{align*}
    since $\frac{hL}{2\delta} \le \theta$. Otherwise, if
    $v^2_{k+j,j} < v^1_{k+j,j}$
    \begin{align*}
      \frac{hL}{2\delta}|v^2_{k+j,j}-v^1_{k+j,j}| - \theta(v^2_{k+j,j}-v^1_{k+j,j}) &= \left(\frac{hL}{2\delta} + \theta\right)(v^1_{k+j,j}-v^2_{k+j,j}) \\
                                                                                              &\le \left(\frac{hL}{2\delta} + \theta\right) M_k.
    \end{align*}
    Inserting \eqref{eq comp temp 1} into \eqref{eqComp}, we get
    \begin{align}
      (1 + 2\theta)M_k &\le 2 \left(\frac{hL}{2\delta} + \theta\right) M_k.
    \end{align}
    Thus,
    \begin{align}
      \left(1 - \frac{hL}{\delta}\right)M_k \le 0,
    \end{align}
    which is a contradiction to $M_k > 0$ since
    $\frac{hL}{\delta} \le 2\theta < \frac12$.
  \item Let $v^i = S^a_\delta(s,t,\varphi^i)$ for $i=1,2$.
    Since $\underline \varphi^1 \le \underline \varphi^2$ and $\overline \varphi^1 \le \overline \varphi^2$, we get by Proposition \ref{comp thm bound} that $v^1_{k,0} \le v^2_{k,0}$ and $v^1_{k,N^a_\delta} \le v^2_{k,N^a_\delta}$ for all $k \in \Z$.\\
    By monotonicity, we get, for all $k \in \Z$ and $0 < l < N^a_\delta$,
    \begin{align*}
      S(k,l,v^1_{k,l},\nabla^a_\delta v^{1}_{k,l},&\nabla^a_{+,\delta}v^{1}_{k,l},\Delta^a_\delta v^{1}_{k,l},\varphi^2)  
      \\
      &\le S(k,l,v^1_{k,l},\nabla^a_\delta v^{1}_{k,l},\nabla^a_{+,\delta}v^{1}_{k,l},\Delta^a_\delta v^{1}_{k,l},\varphi^1)
      \end{align*}
      Moreover,
      \begin{align*}
      S(k,l,v^1_{k,l},\nabla^a_\delta v^{1}_{k,l},&\nabla^a_{+,\delta}v^{1}_{k,l},\Delta^a_\delta v^{1}_{k,l},\varphi^1) 
      \\&  = S(k,l,v^2_{k,l},\nabla^a_\delta v^2_{k,l},\nabla^a_{+,\delta}v^2_{k,l},\Delta^a_\delta v^2_{k,l},\varphi^2) = 0
    \end{align*}
    So that,
     \begin{align*}
      S(k,l,v^1_{k,l},\nabla^a_\delta v^{1}_{k,l},&\nabla^a_{+,\delta}v^{1}_{k,l},\Delta^a_\delta v^{1}_{k,l},\varphi^2)  
      \\ &\le S(k,l,v^2_{k,l},\nabla^a_\delta v^2_{k,l},\nabla^a_{+,\delta}v^2_{k,l},\Delta^a_\delta v^2_{k,l},\varphi^2)
      \end{align*}   
    and the proof is concluded applying the previous point. \eproof
  \end{enumerate}
\end{proof}
\noindent We last give a refinement of the comparison theorem, which will be useful in the sequel.
\begin{Proposition} \label{diffsupersubsol}
  Let $u :\delta\Z \times [0,1] \to \R$ be a bounded function, and let
  $v^1, v^2 \in \ell^\infty(\cG^a_\delta)$. Assume that, for all $k \in \Z$ and $0 < l < N^a_\delta$, we have
  \begin{align*}
    S(k,l,v^1_{k,l},\nabla^a_\delta v^{1}_{k,l},&\nabla^a_{+,\delta}v^{1}_{k,l},\Delta^a_\delta v^{1}_{k,l},u) 
    \\
    &\le 0 \le S(k,l,v^2_{k,l},\nabla^a_\delta v^{2}_{k,l},\nabla^a_{+,\delta}v^{2}_{k,l},\Delta^a_\delta v^{2}_{k,l},u).
  \end{align*}
  Then:
  \begin{align}
    v^1_{k,l} - v^2_{k,l} \le e^{-4\frac{\mathfrak{a}(a,\delta)^2}{\sigma^2}C(h,\delta)l(N^a_\delta-l)}\left(|(v^1_{\cdot,0} - v^2_{\cdot,0})^+|_\infty + |(v^1_{\cdot,N^a_\delta} - v^2_{\cdot,N^a_\delta})^+|_\infty\right),
  \end{align}
  where
  \begin{align}
    \label{constC} C(h,\delta) := \frac{ \ln\left(\frac{1+\frac{h}{\delta}\mu + \sigma^2\frac{h}{\delta^2}+2\theta+\frac{hL}{2\delta}}{\frac{h}{\delta}\mu + \sigma^2\frac{h}{\delta^2}+2\theta+\frac{hL}{2\delta}}\right)}{\delta^2}.
  \end{align}
  Moreover,
  \begin{align}
    \label{minconstC} C(h,\delta) \ge \frac{1}{\left((\mu+\frac{L}{2})M+2\theta M^2\right)h^2 + \sigma^2h} - \frac{M^2}{2\sigma^4}.
  \end{align}
\end{Proposition}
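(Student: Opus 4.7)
Set $W_{k,l} := v^1_{k,l} - v^2_{k,l}$, $N := N^a_\delta$, and consider $a > 0$; the case $a < 0$ follows by the symmetric argument from Proposition~\ref{prop comp thm scheme}. Subtracting the sub- and super-solution inequalities at an interior point $(k, l)$ with $0 < l < N$, and applying the monotonicity of $f$ in $y$ together with its Lipschitz continuity in $z$, one obtains
\[
(1 + B)\, W_{k,l} \;\le\; B_+ W_{k+1, l+1} + B_- W_{k-1, l-1} + \gamma \bigl|W_{k+1, l+1} - W_{k-1, l-1}\bigr|,
\]
where $B := \tfrac{h\mu}{\delta} + \tfrac{h\sigma^2}{\delta^2} + 2\theta$, $B_+ := \tfrac{h\mu}{\delta} + \tfrac{h\sigma^2}{2\delta^2} + \theta$, $B_- := \tfrac{h\sigma^2}{2\delta^2} + \theta$ and $\gamma := \tfrac{hL}{2\delta}$. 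Because the shifts $(k+j, l+j)$ with $j = \pm 1$ preserve the diagonals $k + l = \mathrm{const}$, this is effectively a one-dimensional recurrence along each such diagonal.

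The plan is to apply a weighted discrete maximum principle. I construct a positive sequence $\Psi_l$, $0 \le l \le N$, with $\Psi_0 = \Psi_N = 1$, satisfying the super-solution property
\[
(1+B)\,\Psi_l \;\ge\; (B_+ + \gamma)\,\Psi_{l+1} + (B_- + \gamma)\,\Psi_{l-1}, \qquad 0 < l < N.
\]
The characteristic polynomial $(B_+ + \gamma) r^2 - (1+B) r + (B_- + \gamma) = 0$ has two distinct positive roots $0 < r_- < 1 < r_+$ (its discriminant is positive thanks to the CFL condition $\tfrac{hL}{2\delta} \le \theta < 1/4$), and I take $\Psi_l := A r_+^l + D r_-^l$ with $A, D > 0$ fitted to the boundary values $\Psi_0 = \Psi_N = 1$, i.e.\ $A = (1 - r_-^N)/(r_+^N - r_-^N)$ and $D = (r_+^N - 1)/(r_+^N - r_-^N)$. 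Adapting the contradiction argument of Proposition~\ref{prop comp thm scheme}(ii) to the weighted quantity $W_{k+l, l}/\Psi_l$ along the diagonal through $k$ precludes a positive interior maximum (the super-solution inequality directly contradicts the maximality), yielding the bound
\[
W_{k,l} \;\le\; \Psi_l \,\bigl(|(v^1_{\cdot, 0} - v^2_{\cdot, 0})^+|_\infty + |(v^1_{\cdot, N} - v^2_{\cdot, N})^+|_\infty\bigr).
\]

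It then remains to dominate $\Psi_l$ by the announced Gaussian-type envelope and to derive \eqref{minconstC}. A direct computation of the smaller root gives $\ln(1/r_-) = \ln\bigl((1+\tilde A)/\tilde A\bigr)$ up to explicit corrections, where $\tilde A = B + \gamma$ is precisely the constant appearing inside $C(h, \delta)$; combining this with the identity $\mathfrak{a}(a,\delta)^2/\sigma^2 = 1/(\delta N)^2$ and the elementary inequality $\min(l, N-l) \ge l(N - l)/N$ for the combinatorial distance to the boundary delivers the claimed exponential bound. The lower bound \eqref{minconstC} follows from applying the refined estimate $\ln(1 + x) \ge x/(1 + x/2)$ to $x = 1/\tilde A$, and then controlling $\tilde A$ from above using $\delta \le Mh$, $\mu h \le \delta$ and $2\theta < 1/2$. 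The main obstacle is the precise matching of the rate $\ln(1/r_-)$ produced by the characteristic root — which is naturally of order $\min(l, N-l)$ — to the exact constant $4(\mathfrak{a}/\sigma)^2 C$ in the quadratic exponent $l(N-l)$: this requires careful bookkeeping of the effective coefficients $B_\pm, \gamma$ and of the grid parameter $N^a_\delta$ inside the logarithm defining $C(h,\delta)$.
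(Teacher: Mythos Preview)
Your approach via characteristic roots is genuinely different from the paper's, and the first part --- obtaining $W_{k,l}\le B\,\Psi_l$ from a weighted maximum principle along diagonals with the barrier $\Psi_l=Ar_+^l+Dr_-^l$ --- is sound. In fact the verification is cleaner than you indicate: since $\Psi>0$ one has $|\Psi_{l+1}-\Psi_{l-1}|\le \Psi_{l+1}+\Psi_{l-1}$, so $w_{k,l}:=v^2_{k,l}+B\Psi_l$ is directly a supersolution of $S$ and Proposition~\ref{prop comp thm scheme} applies.

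The gap is in the matching step. Your assertion that $\ln(1/r_-)=\ln\bigl((1+\tilde A)/\tilde A\bigr)$ ``up to explicit corrections'' is not correct: for large $\tilde A$ (the relevant regime, since $\sigma^2 h/\delta^2\to\infty$) one has $\ln(1/r_-)\asymp \tilde A^{-1/2}$ whereas $\ln((1+\tilde A)/\tilde A)\asymp \tilde A^{-1}$, so the two quantities differ by a full square root. Consequently the route $r_-^{\min(l,N-l)}\le r_-^{l(N-l)/N}$ followed by a direct identification of $\ln(1/r_-)$ with the logarithm inside $C(h,\delta)$ does not close; recovering the exact quadratic exponent $4(\mathfrak a/\sigma)^2C\,l(N-l)$ from $\Psi_l$ with the precise constant stated is not a matter of bookkeeping but a genuinely different estimate.

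The paper sidesteps this entirely. It takes the target envelope $e(l):=\exp\bigl(-4(\mathfrak a/\sigma)^2C\,l(N-l)\bigr)$ itself as the barrier and checks the supersolution inequality using only the \emph{crude} bounds $e(l)\ge e^\star:=e(N/2)$, $e(l\pm1)\le 1$ and $|e(l+1)-e(l-1)|\le 1-e^\star$. This reduces the verification to the single scalar inequality $(1+\tilde A)e^\star\ge \tilde A$, and $C(h,\delta)$ is \emph{defined} precisely so that this holds with equality. No characteristic roots, no matching. Your barrier $\Psi_l$ is in fact much sharper than $e(l)$ (it decays like $r_-^{\min(l,N-l)}$ rather than staying near $e^\star$), but exploiting that sharpness is unnecessary here and costs you the explicit constant.

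For \eqref{minconstC}, the paper simply applies $\ln(1+x)>x-x^2/2$ at $x=1/\tilde A$ and then bounds $\tilde A\delta^2$ and $\tilde A\delta$ using $\delta\le Mh$; your inequality $\ln(1+x)\ge x/(1+x/2)$ would work equally well.
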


\textcolor{black}{\begin{Remark}
  (i) To prove the consistency of the scheme, we define in Lemma \ref{lemsupersolum} smooth functions $w^\pm$ so that $(w^\pm(x_k,p_l)) \in l^\infty(\cG^a_\delta)$ satisfy $S \ge 0$ or $S \le 0$, but we cannot use the comparison theorem as the values at the boundary cannot be controlled. The previous proposition will be used in Lemma \ref{lem interp} to show that the difference between $w^\pm$ and the linear interpolant of a solution of $S = 0$ is small.
  \\
  (ii) The coefficient $\exp\!\left(-4\frac{\mathfrak a(a,\delta)}{\sigma^2}C(h,\delta)l(N^a_\delta-l\right)$ that appears in the first equation of the previous proposition shows that the dependance on the boundary values decays exponentially with the distance to the boundary. This was to be expected and was already observed in similar situations, see for example Lemma 3.2 in \cite{BJ07} for Hamilton-Jacobi-Bellman equations.
\end{Remark}}
We now can state the main result of this section.
  \begin{Theorem} \label{thm conv scheme}
    The function $v_{n,\pi,\delta}$ converges to $v_n$ uniformly on compact sets, as $|\pi|,\delta \to 0$ satisfying conditions \eqref{CFL1}--\eqref{CFL3} 
    %\eqref{CFL1}-\eqref{CFL2}-\eqref{CFL3} 
    for all $h = t_{i+1}-t_i$, where $\pi = \{0 = t_0 < t_1 < \dots < t_\kappa = T\}$.
  \end{Theorem}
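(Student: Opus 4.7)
The plan is to apply the Barles--Souganidis framework in the same spirit as the proof of Theorem~\ref{th conv disc time}, now for the fully discrete scheme. I would define the half-relaxed limits
\begin{align*}
\overline v(t,x,p) &:= \limsup_{\substack{(t',x',p')\to(t,x,p)\\ |\pi|,\delta\to 0}} v_{n,\pi,\delta}(t',x',p'),\\
\underline v(t,x,p) &:= \liminf_{\substack{(t',x',p')\to(t,x,p)\\ |\pi|,\delta\to 0}} v_{n,\pi,\delta}(t',x',p'),
\end{align*}
where $(t',x',p')$ ranges over $\pi\times\delta\Z\times[0,1]$, and then prove that $\overline v$ is a viscosity subsolution and $\underline v$ a viscosity supersolution of \eqref{eq pde more classic}, both satisfying the boundary conditions \eqref{eq bc in p}--\eqref{eq bc at T}. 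The comparison principle of Corollary~\ref{co comp princ} then forces $\underline v = \overline v = v_n$, and locally uniform convergence follows from the standard argument of \cite{crandall1992user}, Remark~6.4. Stability is immediate: a backward induction on $\pi$ using Proposition~\ref{prop comp thm scheme} on each PCPT step together with the boundary subproblems \eqref{eq def bound 0}--\eqref{eq def bound 1} yields $0 \le v_{n,\pi,\delta} \le |g|_\infty$ uniformly, so that $\overline v,\underline v$ are finite.

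For interior consistency, given a smooth test function $\varphi$ and a strict local extremum $(t_0,x_0,p_0)\in[0,T)\times\R\times(0,1)$ of $\overline v - \varphi$ (resp.\ $\underline v - \varphi$), I would work at nearby grid points, use the monotonicity of $\widehat{\mathfrak S}$ to substitute $\varphi + \xi$ for $v_{n,\pi,\delta}$ up to a negligible perturbation, and then show
\begin{align*}
\frac{1}{t_\pi^+-t_\pi^-}\widehat{\mathfrak S}\bigl(\pi,\delta,t,x,p,\varphi(t,x,p)+\xi,\varphi+\xi\bigr)\;\longrightarrow\;\partial_t\varphi - \cF_n(t,x,\varphi,D\varphi,D^2\varphi).
\end{align*}
By Taylor expansion on smooth $\varphi$, the finite difference operators $\nabla^a_\delta,\nabla^a_{+,\delta},\Delta^{\!a}_\delta$ converge to the directional operators of \eqref{pde num}, the Lax--Friedrichs stabilisation $\theta\delta^2/h$ vanishes under \eqref{CFL3}, $\mathfrak p^a(p_l)\to p$, and the linear interpolation $\cI^a_\delta$ contributes only an $O(\delta)$ error. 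Since $\varphi$ does not in general satisfy the boundary subproblems \eqref{eq def bound 0}--\eqref{eq def bound 1}, one must construct smooth super/subsolutions $w^\pm$ of the scheme operator $S$ and then use the decay estimate of Proposition~\ref{diffsupersubsol} to bound the difference with the actual interpolant; this reduces the computation to the semi-discrete consistency of Proposition~\ref{cons-theo}, and the minimum over the finite set $K$ commutes with the limit.

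The main obstacle is the consistency with the strong boundary conditions at $p\in\{0,1\}$. The case $p=0$ is trivial: the terminal value $g(x)\cdot 0 = 0$ propagates through \eqref{eq def bound 0} so that $v_{n,\pi,\delta}(\cdot,\cdot,0)\equiv 0$. The case $p=1$ is more subtle: the boundary scheme \eqref{eq def bound 1} is itself a monotone, stable, consistent finite difference approximation of the linear PDE satisfied by the super-replication price $V$, so by a standard Barles--Souganidis (or BSDE) argument its solution converges to $V$, and one must show that this convergence at the gridpoint $p=1$ transfers to $\overline v(\cdot,\cdot,1) = \underline v(\cdot,\cdot,1) = V$ in the viscosity sense required by Corollary~\ref{co comp princ}. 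Here the decay factor $\exp\!\bigl(-4(\mathfrak a(a,\delta)/\sigma)^2 C(h,\delta)\,l(N^a_\delta - l)\bigr)$ from Proposition~\ref{diffsupersubsol} plays a role analogous to that of Lemmata~\ref{le perturb xi}--\ref{le cons 2} at the semi-discrete level, ensuring that values of the scheme near $p=1$ are not distorted by the repeated minimisations over $a\in K$ and the interpolations $\cI^a_\delta$ between the different grids $\Gamma^a_\delta$. Turning this decay into the required boundary consistency statement is where the bulk of the technical work is concentrated.
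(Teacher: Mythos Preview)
Your overall Barles--Souganidis strategy matches the paper's exactly: one shows monotonicity, stability, interior consistency and boundary consistency, and then concludes by the comparison principle of Corollary~\ref{co comp princ}. Stability and monotonicity are as you say. There are, however, two points where your sketch diverges from the paper and where a gap remains.

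\textbf{Interior consistency and the mollification of $f$.} You write that by Taylor expansion on smooth $\varphi$ the finite difference operators converge to the directional operators of \eqref{pde num}. This is fine for the linear part of $\widehat F$, but the nonlinear term $f(t,x,\varphi,\sigma\nabla^a_\delta\varphi)$ is a problem: in verifying that $w^\pm$ is a scheme super/subsolution one has to apply $\Delta^a_\delta$ to the map $(x,p)\mapsto f(t,x,\varphi(x,p),\sigma\nabla^{\mathfrak a}\varphi(x,p))$, and since $f$ is only Lipschitz this is not bounded. The paper handles this by replacing $f$ with a mollification $f_\epsilon = f * \rho_\epsilon$, so that $w^\pm = S^{a,\pm}_{\delta,\epsilon}(s,t,\varphi)$ is built from $F_\epsilon$ (Lemma~\ref{lemsupersolum}); then $|D^\alpha f_\epsilon|\le C\epsilon^{-|\alpha|}$ and one balances $\epsilon\to 0$ with $\delta/\epsilon^2\to 0$ so that both $|f-f_\epsilon|_\infty$ and the interpolation error $\delta^2/\epsilon^2$ are $o(h)$. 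Without this mollification step the construction of the smooth $w^\pm$ you invoke cannot be completed. Also, the computation does \emph{not} reduce to the semi-discrete consistency of Proposition~\ref{cons-theo}; the paper proves Proposition~\ref{consistent scheme} directly from the explicit form of $S^{a,\pm}_{\delta,\epsilon}$ via Lemma~\ref{lem interp}.

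\textbf{Boundary consistency via barriers, not via decay.} Your plan for $p=1$ is to combine the convergence of the boundary scheme to $V$ with the decay factor of Proposition~\ref{diffsupersubsol}. But that decay estimate says that \emph{interior} values are insensitive to boundary perturbations; it does not control $v_{n,\pi,\delta}(t,x,p)$ for $p$ close to $1$, which is what boundary consistency requires. The paper instead constructs explicit two-sided barriers (Proposition~\ref{prop bound cond num}): one shows by backward induction and the discrete comparison principle that
\[
pV_{\pi,\delta}(t,x) - K_1(T-t),\qquad pV_{\pi,\delta}(t,x) - (1-e^{-K_2 p})(1-e^{-K_2(1-p)})
\]
are scheme subsolutions (and the corresponding expressions with $+$ are supersolutions), where $V_{\pi,\delta}$ is the boundary scheme solution. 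These two families of barriers simultaneously pin down the half-relaxed limits at $t=T$ (via the first) and at $p\in\{0,1\}$ (via the second), yielding $\overline v = \underline v = pV$ on the entire parabolic boundary. This barrier argument is the genuine replacement for Lemmata~\ref{le perturb xi}--\ref{le cons 2} at the fully discrete level; the decay estimate of Proposition~\ref{diffsupersubsol} is used only for interior consistency (in Lemma~\ref{lem interp}), not for the boundary.
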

%  \begin{proof}
  We prove below that the scheme is \emph{monotone} (see Proposition \ref{monotone scheme}), \emph{stable} (see Proposition \ref{stable scheme}), \emph{consistent} with \eqref{eq pde more classic} in $[0,T) \times \R \times (0,1)$ (see Proposition \ref{consistent scheme}) and with the boundary conditions (see Proposition \ref{prop bound cond num}).
  The theorem then follows by identical arguments to \cite{BS91}.

\subsection{Proof of Theorem \ref{thm conv scheme}} \label{subsec proof}
We first show that the numerical scheme is consistent with the boundary conditions. For any discretisation parameters $\pi,\delta$, we define $V_{\pi,\delta} : \pi \times \delta\Z \to \R$ as the solution to the following system:
\begin{align} \label{eq de V_pi_delta}
  S_b(k,v^j_k,\nabla_\delta v^j_k,\nabla_{+,\delta}v^j_k,\Delta_{\delta}v^j_k, v^{j+1}_k) &= 0, k \in \Z, 0 \le j < \kappa \\
  v^\kappa_k &= g(x_k), k \in \Z,
\end{align}
where $v^j_k := v(t_j,x_k)$ for $0 \le j \le \kappa$ and $k \in \Z$. We set $(U_{\pi,\delta})^j_k := \nabla_\delta(V_{\pi,\delta})^j_k = \frac{1}{2\delta}((V_{\pi,\delta})^j_{k+1} - (V_{\pi,\delta})^j_{k-1})$. We recall from %Section \ref{comp thm bound} 
Proposition \ref{bound}
that $V_{\pi,\delta}$ and $U_{\pi,\delta}$ are bounded, uniformly in $\pi,\delta$, and, by \cite{BS91}, that $V_{\pi,\delta}$ converges to $V$ uniformly on compact sets as $|\pi|\to 0$ and $\delta \to 0$.
\begin{Proposition}\label{prop bound cond num} There exists constants $K_1,K_2,K_3 > 0$ such that, for all discretisation parameters $\pi,\delta$ with $|\pi|$ small enough, we have, for $(t_j,x_k,p) \in \pi \times \delta\Z \times [0,1]$:
  \begin{align*}
    p V_{\pi,\delta}(t_j, x_k) - K_1(T-t_j) &\le v_{n,\pi,\delta}(t_j,x_k,p) \le p V_{\pi,\delta}(t_j,x_k) + K_1(T-t_j), \\
    pV_{\pi,\delta}(t_j,x_k) - (1 - &e^{-K_2p})(1-e^{-K_2(1-p)}) \le v_{n,\pi,\delta}(t_j,x_k,p) \\ \le &pV_{\pi,\delta}(t_j,x_k) + (1-e^{-K_2p})(1-e^{-K_2(1-p)}).
  \end{align*}
\end{Proposition}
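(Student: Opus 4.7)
The plan is to establish both pairs of inequalities by backward induction on $t_j\in\pi$ with the natural benchmark $W_{k,l}(t_j):=p_l V_{\pi,\delta}(t_j,x_k)$, which matches the boundary values of $v_{n,\pi,\delta}$ at $p\in\{0,1\}$ (namely $0$ and $V_{\pi,\delta}(t_j,x_k)$) and the terminal condition at $t=T$. Being affine in $p$, $W$ is preserved exactly by the linear interpolation $\cI^a_\delta$; together with Proposition \ref{prop comp thm scheme}(2)-(3) and the monotonicity of $\min_{a\in K}$, this lets an inequality $W-R^-\le v_{n,\pi,\delta}\le W+R^+$ (for suitable perturbations $R^\pm$) be propagated from $t_{j+1}$ to $t_j$ provided $R^\pm$ satisfy appropriate super/subsolution properties at $t_j$.

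The central ingredient is a consistency estimate for $W$. Expanding the one-directional finite differences using $p_{l\pm\sgn(a)}=p_l\pm(\mathfrak{a}(a,\delta)/\sigma)\delta$ gives
\[
\nabla^a_{+,\delta}W_{k,l}=p_l\nabla_{+,\delta}V^j_k+\tfrac{\mathfrak{a}}{\sigma}V^j_{k+1},\qquad \Delta^a_\delta W_{k,l}=p_l\Delta_\delta V^j_k+\tfrac{2\mathfrak{a}}{\sigma}\nabla_\delta V^j_k,
\]
and similarly for $\nabla^a_\delta W$, where $V^j_k:=V_{\pi,\delta}(t_j,x_k)$. Substituting into \eqref{S} and using that $V_{\pi,\delta}$ satisfies $S_b=0$, the terms proportional to $\widehat F(t,k\delta,V^j_k,\nabla_\delta V^j_k,\nabla_{+,\delta}V^j_k,\Delta_\delta V^j_k)$ cancel and one obtains $|S(k,l,W_{k,l},\ldots,W(t_{j+1},\cdot))|\le C_0 h$ uniformly in $k,l,a\in K$, where $C_0$ depends on $L,\mu,\sigma$, the bounded discrete set $K\subset\R\setminus\{0\}$, and the $\ell^\infty$-bounds on $V_{\pi,\delta}$ and $\nabla_\delta V_{\pi,\delta}$ from Proposition \ref{bound}. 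The nonlinear residual $p_l f(V^j_k,\sigma\nabla_\delta V^j_k)-f(p_l V^j_k,p_l\sigma\nabla_\delta V^j_k+\mathfrak{a}(V^j_{k+1}+V^j_{k-1})/2)$ additionally vanishes at $p_l\in\{0,1\}$ by $f(\cdot,0,0)=0$ and is $O(\min(p_l,1-p_l))$ by Lipschitz continuity of $f$.

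For the first pair, take $R^\pm(t_j):=K_1(T-t_j)$. The discrete time-increment adds $\pm K_1 h$ to $S$, and by (H)(ii) shifting the $y$-argument of $-f$ by a non-negative constant contributes non-negatively, so $S(W+R^+,(W+R^+)(t_{j+1}))\ge(K_1-C_0)h\ge 0$ and $S(W-R^-,(W-R^-)(t_{j+1}))\le -(K_1-C_0)h\le 0$ once $K_1\ge C_0$. The boundary values at $p\in\{0,1\}$ dominate (resp.\ are dominated by) the corresponding scalar-scheme solutions of $S_b$ by Proposition \ref{comp thm bound} applied with the same monotonicity of $f$; Proposition \ref{prop comp thm scheme}(2)-(3) then completes the induction step, using that $\cI^a_\delta$ is exact on affine-in-$p$ functions.

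For the exponential bound, I would replace $\pm K_1(T-t_j)$ by $\pm K_3\Psi(p)$ with $\Psi(p):=(1-e^{-K_2 p})(1-e^{-K_2(1-p)})$. The crucial properties are $\Psi(0)=\Psi(1)=0$ (so boundary values of $W\pm K_3\Psi$ match those of $v_{n,\pi,\delta}$ exactly), concavity $\Psi''<0$ (so $\cI^a_\delta\Psi\le\Psi$ pointwise), and the explicit formula $|\Psi''(p)|=K_2^2(e^{-K_2 p}+e^{-K_2(1-p)})\ge 2K_2^2 e^{-K_2/2}$, yielding a uniform positive lower bound. Repeating the consistency expansion with $K_3\Psi$ in place of $K_1(T-t_j)$, the $\Delta^a$-operator contributes an extra $h(\mathfrak{a}^2/2)K_3|\Psi''(p_l)|$ to $S(W+K_3\Psi)$; using $\mathfrak{a}^2\ge\mathfrak{a}_{\min}^2>0$ (since $0\notin K$), this dominates the $O(h)$ residual $C_0 h$ plus the Lipschitz terms $L K_3(|\Psi|+|\mathfrak{a}\Psi'|)$ once $K_2$ and $K_3$ are chosen appropriately. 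The main obstacle is making this balance work uniformly in $p\in(0,1)$: the minimum of $|\Psi''|$, attained near $p=1/2$, is only of order $K_2^2 e^{-K_2/2}$, which as a function of $K_2$ is bounded by $16 e^{-2}$, so the direct argument is tight in the ratio $L/\mathfrak{a}_{\min}^2$. In the interior region where $\Psi(p)\approx 1-e^{-K_2}$ is close to $1$, one instead invokes Proposition \ref{diffsupersubsol}, whose exponential factor $\exp(-4\mathfrak{a}^2 C(h,\delta)l(N^a_\delta-l)/\sigma^2)$ propagates the (here vanishing) boundary mismatch into the interior with very rapid decay; a careful zone decomposition combining the two arguments and an appropriate choice of $(K_2,K_3)$ closes the estimate for all $p\in[0,1]$.
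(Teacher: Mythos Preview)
Your overall strategy --- backward induction with the barrier $W=pV_{\pi,\delta}$ perturbed by either $K_1(T-t_j)$ or the concave function $\Psi(p)=(1-e^{-K_2p})(1-e^{-K_2(1-p)})$ --- is exactly the paper's approach. Your treatment of the $K_1(T-t_j)$ bound matches the paper's, including the use of the monotonicity of $f$ in $y$ to control the sign of the shifted nonlinearity, and the convexity/concavity to handle the linear interpolation step.

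For the exponential barrier you correctly flag a genuine subtlety: sending $K_2\to\infty$ makes the dominating term $\tfrac{\mathfrak a^2}{2}|\Psi''(p)|\sim K_2^2 e^{-K_2/2}$ vanish at $p=1/2$, so the residual $C_0h$ cannot be absorbed uniformly in $p$. The paper does \emph{not} invoke Proposition~\ref{diffsupersubsol} or any zone decomposition here; it simply chooses $K_2$ large, and the same issue you raise is present in that argument as written. However, your proposed remedy via Proposition~\ref{diffsupersubsol} does not work: that result assumes $v^1,v^2$ are sub/supersolutions on the \emph{entire} interior $0<l<N^a_\delta$, which is precisely what fails, so it cannot be applied on an interior zone alone.

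The clean fix --- and presumably the reason the unused constant $K_3$ appears in the statement --- is to reverse the order of choice: first fix $K_2$ so that $\tfrac{|\mathfrak a(a,\delta)|K_2}{2}>\tfrac{L}{\sigma}$ for all $a\in K$ (possible since $|\mathfrak a(a,\delta)|\ge c>0$ by Lemma~\ref{diffDelta}); then $e^{-K_2p}+e^{-K_2(1-p)}\ge 2e^{-K_2/2}$ is bounded below by a fixed positive constant, and one absorbs $C_0$ by taking the multiplicative constant $K_3$ in front of $\Psi$ large. With this order the direct barrier argument goes through for all $p$ without any zone decomposition. Incidentally, your list of competing terms is slightly too pessimistic: using the monotonicity of $f$ in $y$ (as in the paper) eliminates the $LK_3|\Psi|$ contribution, leaving only $LK_3|\mathfrak a||\Psi'|$, which is of the same exponential type as the $\Psi''$ term and is therefore dominated once $K_2>2L/(\sigma|\mathfrak a_{\min}|)$.
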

\begin{proof}
    We only prove, by backward induction, the lower bounds, while the proof of the upper bounds is similar. We need to introduce first some notation.
    For $0 \le j \le \kappa$, $k \in \delta\Z$ and $0 \le l \le N^a_\delta$, we set $V^j_k := V_{\pi,\delta}(t_j,x_k)$ and $U^j_k := U_{\pi,\delta}(t_j,x_k)$. For $\epsilon \in \set{0,1}$, we define:
    \begin{align}
      \we(t_j,x_k,p) :=   p V^j_k -  \ce(t_j,p),
    \end{align}
    with
    \begin{align}
      \ce(t_j,p) := \epsilon K_1(T-t_j) + (1-\epsilon)(1-e^{-K_2 p})(1-e^{-K_2(1-p)}),
    \end{align}  
    and
    $\we^j_{k,l} = \we(t_j,x_k,p_l), \ce^j_l = \ce(t_j,p_l)$, $p_l \in \Gamma^a_\delta$.
%    \noindent
    The proof now procedes in two steps.\\
    \noindent 1. 
  %$w^j_{k,l} := p_l V^j_k - c^j_l$, with $c^j_l = K_1(T-t_j)$ or $c^j_l = K_2(1-e^{-K_3p})(1-e^{-K_3(1-p)})$.\\
    First, we have $\we(T,x_k,p) \le p V_{\pi,\delta}(T,x_k) = pg(x_k) = v_{n,\pi,\delta}(T,x_k,p)$ on $\delta\Z\times[0,1]$.\\
  Suppose that, for $0 \le j < \kappa$, on $\delta\Z\times[0,1]$, we have
  \begin{align*}
    \we(t_{j+1},x_k,p)\le v_{n,\pi,\delta}(t_{j+1},x_k,p).
  \end{align*}
  We want to prove on $\delta\Z\times[0,1]$
  \begin{align*}
    \we(t_j,x_k,p) \le v_{n,\pi,\delta}(t_j,x_k,p).
  \end{align*}
  Since $\we$ is convex in $p$, $\we(t_j,x_k,\cdot) \le \cI^a_\delta(\we^j_{k,\cdot})$ on $[0,1]$. By definition, we have $v_{n,\pi,\delta}(t_j,x_k,p) = \min_{a \in K} \cI^a_\delta(S^a_\delta(t_{j+1},t_j,v_{n,\pi,\delta}(t_{j+1},\cdot)))(x_k,p)$, we are thus going to prove
  \begin{align}
    \we^j_{k,l} \le S^a_\delta(t_{j+1},t_j,v_{n,\pi,\delta}(t_{j+1},\cdot))(t_j,x_k,p_l) \label{eq disc comp a}
  \end{align}
  for all $a \in K$ and all $k \in \Z, 0 \le l \le N^a_\delta$. 
  \\
For $a \in K$,
by induction hypothesis, $\we(t_{j+1},\cdot) \le v_{n,\pi,\delta}(t_{j+1},\cdot)$, so if we are able to get  
  \begin{align}
    S_b(k,\underline \we_k, \nabla_\delta\underline \we_k, \nabla_{+,\delta}\underline \we_k, \Delta_{\delta} \underline \we_k, \we^{j+1}_{k,0}) &\le 0, k \in \Z,  \label{eq scheme up}\\
    S_b(k,\overline \we_k, \nabla_\delta\overline \we_k, \nabla_{+,\delta}\overline \we_k, \Delta_{\delta} \overline \we_k, \we^{j+1}_{k,N^a_\delta}) &\le 0, k \in \Z,\label{eq scheme down} 
    \\
    S(k,l,\we_{k,l},\nabla^a_\delta \we_{k,l}, \nabla^a_{+,\delta} \we_{k,l}, \Delta^a_\delta \we_{k,l}, \we(t_{j+1},\cdot)) &\le 0, k \in \Z, 0 < l < N^a_\delta, \label{eq scheme in}
  \end{align}
  where $\underline \we^j_k = \we(t_j,x_k,0), \overline \we^j_k =  \we(t_j,x_k,1)$, we obtain that \eqref{eq disc comp a} holds true by the comparison result in Proposition \ref{prop comp thm scheme}, which concludes the proof. We now proceed with the proof of \eqref{eq scheme up}, \eqref{eq scheme down} and \eqref{eq scheme in}.\\
  %\begin{enumerate}
  %\item  
%  \vspace{2mm}
 \noindent 2.a Now, observe that $\underline \we^j_k =-\epsilon K_1(T-t_j)$, for $k \in \Z$. We have, since $f(t_j,x_k,0,0) = 0$ and $f$ is non-increasing in its third variable,
    \begin{align*}
      S_b(k,\underline \we_k, \nabla_\delta\underline \we_k, \nabla_{+,\delta}\underline \we_k, \Delta_{\delta} \underline \we_k, \we^{j+1}_{k,0}) = -\epsilon Kh - h f(t_j,x_k,-\epsilon K(T-t_j),0) \le 0.
    \end{align*}
 %   In the other case, we have $\underline w_k = 0$ and $w^{j+1}_{k,0} = 0$ for all $k \in \Z$, so the inequality is straightforward.
  %\item 
 2.b We have that $\overline \we^j_k =V^j_k-\epsilon K_1(T-t_j)$, for $k \in \Z$. Since 
 $$f(t_j,x_k,V^j_k - \epsilon K_1(T-t_j), U^j_k) \ge f(t_j,x_k,V^j_k,U^j_k)\,,$$ and by definition of $V_{\pi,\delta}$: %\cred{\fbox{Drop ``$(\delta)$'' for space issues below?}}
    \begin{align*}
       S_b(k,\overline \we_k, \nabla_\delta\overline \we_k, \nabla_{+,\delta}\overline \we_k, \Delta_{\delta} \overline \we_k, \we^{j+1}_{k,N^a_\delta}) &= -\epsilon Kh + S_b(k,V^j_k,\nabla_\delta V^j_k, \nabla_{+,\delta}V^j_k, \Delta_{\delta}V^j_k)\\ &\le -\epsilon Kh \le 0.
    \end{align*}
 %   In the other case, we have $\overline w_k = V^j_k$ and $w^{j+1}_{k,N^a_\delta} = V^{j+1}_k$, and the inequality is obtained by definition of $V$.
  %\item 
    2.c We now prove \eqref{eq scheme in}. %\textcolor{red}{detailler les calculs pour cette étape... utiliser si possible la notation $\we$, $\ce$...}
    Let $k \in \Z$, $0 < l < N^a_\delta$. We have, by definition \eqref{S} of $S$:
    \begin{align*}
      S(k,l,\we^j_{k,l},&\nabla^a_\delta\we^j_{k,l},\nabla^a_{+,\delta}\we^j_{k,l},\Delta^a_\delta\we^j_{k,l},\we(t_{j+1},\cdot)) \\
                      &= \we^j_{k,l} - \we(t_{j+1},x_k,\textcolor{black}{\mathfrak p^a(p_l)})\\
                        &\hspace{0.5cm} + h \widehat F(t,k\delta,\we^j_{k,l},\nabla^a_\delta\we^j_{k,l},\nabla^a_{+,\delta}\we^j_{k,l},\Delta^a_\delta\we^j_{k,l}) \\
                        &\le - \ce^j_{k,l} + \mu\frac{\mathfrak a(a,\delta)}{\sigma}hV^{j+1}_k + \ce(t_{j+1},x_k,\mathfrak p^a(p_l))\\
                        &\hspace{0.5cm} - p_l h \widehat F(t,x_k, V^j_k, U^j_k, \nabla_{+,\delta}V^j_k,\Delta_{\delta}V^j_k)\\
                        &\hspace{0.5cm} + h\widehat F(t,x_k,p_lV^j_k,\nabla^a_\delta\we^j_{k,l},\nabla^a_{+,\delta}\we^j_{k,l},\Delta^a_\delta\we^j_{k,l}),
                      % &\textcolor{red}{= p_lV^j_k - \ce^j_{k,l} - (p_l - \mu\frac{\mathfrak a(a,\delta)}{\sigma} h)V^{j+1}_k  + \ce(t_{j+1},x_k,p_l - \mu\frac{\mathfrak a(a,\delta)}{\sigma}h)}\\
                      %   &\hspace{0.5cm}\textcolor{red}{ + h p_l\widehat F(t,k\delta,V^j_k,U^j_k,\nabla_{+,\delta}V^j_k,\Delta_{\delta}V^j_k) + h p_lf(t,x_k,V^j_k,\sigma U^j_k)}\\
                      % &\hspace{0.5cm}\textcolor{red}{- h \widehat F(t,k\delta,\ce^j_{k,l},\nabla^a_\delta\ce^j_{k,l},\nabla^a_{+,\delta}\ce^j_{k,l},\Delta^a_\delta\ce^j_{k,l}) - h f(t,x_k,\ce^j_{k,l},\sigma\nabla^a_\delta\ce^j_{k,l})}\\
                      %   &\hspace{0.5cm}\textcolor{red}{- h f(t,x_k,\we_{k,l},\sigma\nabla^a_\delta\we_{k,l}) }\\
                      %   &\textcolor{red}{\le - \ce^j_{k,l} + \mu\frac{\mathfrak a(a,\delta)}{\sigma}h V^{j+1}_k + \ce(t_{j+1},x_k,p_l-\mu\frac{\mathfrak a(a,\delta)}{\sigma}h)}\\
                      %   &\hspace{0.5cm}\textcolor{red}{ + h p_lf(t,x_k,V^j_k,\sigma U^j_k) - h f(t,x_k,p_l V^j_k,\sigma\nabla^a_\delta\we_{k,l})}\\
                      %   &\hspace{0.5cm}\textcolor{red}{- h f(t,x_k,\ce^j_{k,l},\sigma\nabla^a_\delta\ce^j_{k,l}) - h \widehat F(t,k\delta,\ce^j_{k,l},\nabla^a_\delta\ce^j_{k,l},\nabla^a_{+,\delta}\ce^j_{k,l},\Delta^a_\delta\ce^j_{k,l}),},
    \end{align*}
    where we have used \eqref{eq de V_pi_delta} and $f(t,x_k,\we^j_{k,l},\sigma\nabla^a_\delta\we^j_{k,l}) \ge f(t,x_k,p_l V^j_k,\sigma\nabla^a_\delta\we^j_{k,l})$.\\
    By adding $\pm p_l h f(t_j,x_k,p_l V^j_k, \sigma \nabla^a_\delta\we^j_{k,l})$, using the Lipschitz continuity of $f$ and %the fact that
    \begin{align*}\nabla^a_\delta\we^j_{k,l} = p_l U^j_k + \frac{\mathfrak{a}(a,\delta)}{2\sigma}(V^j_{k+1}+V^j_{k-1}) + \frac{1}{2\delta}\left(\ce^j_{l-\sgn(a)}-\ce^j_{l+\sgn(a)}\right),
    \end{align*}
    we get, by definition \eqref{def H hat} of $\widehat F$,
    \begin{align*}
      S(k,l,\we_{k,l},&\nabla^a_\delta\we^j_{k,l},\nabla^a_{+,\delta}\we^j_{k,l},\Delta^a_\delta\we^j_{k,l},\we^j(t_{j+1},\cdot))\\
      \le & h\frac{\mathfrak{a}(a,\delta)}{\sigma}\mu(V^{j+1}_k-V^j_{k+1}) - h\sigma\mathfrak{a}(a,\delta)U^j_k - 2\theta\frac{\mathfrak{a}(a,\delta)}{\sigma}\delta^2U^j_k\\ &+ 2hLp_l(1-p_l)(V^j_k + |U^j_k|) + hL\frac{|\mathfrak{a}(a,\delta)|}{2\sigma}(V^j_{k+1}\textcolor{black}{+}V^j_{k-1})
      \\ & -\left(\ce^j_l - \ce(t_{j+1},\mathfrak p^a(p_l)) - \mu h \nabla^a_{+,\delta}\ce^j_l - \left(\frac{\sigma^2}{2} h + \theta \delta^2\right) \Delta^a_\delta\ce^j_l \right)
      \\ & + hL |\nabla^a_\delta\ce^j_l|.
      % \\ &- \left(\ce^j_l - \ce(t_{j+1},p_l - \mu\frac{\mathfrak{a}(a,\delta)}{\sigma}h) - \mu\frac{h}{\delta}(\ce^j_{l+\sgn(a)}-\ce^j_l)\right.\\ &\left. - \frac{\sigma^2}{2}\frac{h}{\delta^2}(\ce^j_{l+\sgn(a)}+\ce^j_{l-\sgn(a)}-2\ce^j_l) - \theta(\ce^j_{l+\sgn(a)}+\ce^j_{l-\sgn(a)}-2c^j_l)\right)\\
      % &+ hL|\nabla^a_\delta\ce^j_l|.
    \end{align*}
    Since $|\mathfrak a(a,\delta)| \le \max \{|a|, a \in K\} \le n$ and $V$ and $U$ are bounded uniformly in $h,\delta$ (see Proposition \ref{bound} in the appendix), there exists a constant $K_{n,\theta,M,L} > 0$ such that
    \begin{align*}
      &h\frac{\mathfrak{a}(a,\delta)}{\sigma}\mu(V^{j+1}_k-V^j_{k+1}) - h\sigma\mathfrak{a}(a,\delta)U^j_k - 2\theta\frac{\mathfrak{a}(a,\delta)}{\sigma}\delta^2U^j_k\\ &+ 2hLp_l(1-p_l)(V^j_k + |U^j_k|) + hL\frac{|\mathfrak{a}(a,\delta)|}{2\sigma}(V^j_{k+1}\textcolor{black}{+}V^j_{k-1}) \le h K_{n,\theta,M,L}.
    \end{align*}
    When $\epsilon = 1$, the terms of the last three lines all vanish except the first one, and $c^j_l - c(t_{j+1},p_l-\mu\frac{\mathfrak{a}(a,\delta)}{\sigma}h) = K_1h$. Thus we get:
    \begin{align*}
      S(k,l,\we_{k,l},&\nabla^a_\delta\we_{k,l},\nabla^a_{+,\delta}\we_{k,l},\Delta^a_\delta\we_{k,l},\we(t_{j+1},\cdot))
      \le  h(-K_1 + K_{n,\theta,M,L}).
    \end{align*}
    Hence, chosing $K_1$ large enough gives the result.\\
    We now deal with the case $\epsilon = 0$. By Taylor expansions of $\ce$ around $(t_j,p_l)$, we get:
    \begin{align*}
      &S(k,l,\we_{k,l},\nabla^a_\delta\we^j_{k,l},\nabla^a_{+,\delta}\we^j_{k,l},\Delta^a_\delta\we^j_{k,l},\we^j(t_{j+1},\cdot))\\
     \hspace{0 cm} &\le h K_{n,\theta,M,L} + hL\frac{|\mathfrak a(a,\delta)|}{\sigma}|\partial_p\,\ce(t_j,p_l)| + h \partial_t\,\ce(t_j,p_l) + h \frac{\mathfrak a(a,\delta)^2}{2}\partial^2_{pp}\,\ce(t_j,p_l) + h \varepsilon(h;K_2),
    \end{align*}
    with $\lim_{h \to 0} \varepsilon(h;K_2) = 0$. By definition of $\ce$, we get, for $h_0 > 0$ to be fixed later on and $h \in [0,h_0]$:
    \begin{align*}
      S(k,l,\we_{k,l},&\nabla^a_\delta\we^j_{k,l},\nabla^a_{+,\delta}\we^j_{k,l},\Delta^a_\delta\we^j_{k,l},\we^j(t_{j+1},\cdot))\\
      \le &h\left[ K_{n,\theta,M,L} +  K_2 L \frac{|\mathfrak a(a,\delta)|}{\sigma}e^{-K_2p_l} + K_2 L\frac{\mathfrak a(a,\delta)}{\sigma}e^{-K_2(1-p_l)} \right. \\ 
      &\left. - K_2^2\frac{\mathfrak a(a,\delta)^2}{2}e^{-K_2 p_l} - K_2^2\frac{\mathfrak a(a,\delta)^2}{2}e^{-K_2(1-p_l)} + |\varepsilon(h;K_2)|\right]\\
  &\hspace{-2cm} \le  h\left[ \max_{h \in [0,h_0]} |\varepsilon(h;K_2)| + K_{n,\theta,M,L} + K_2|\mathfrak a(a,\delta)|(e^{-K_2p_l}+e^{-K_2(1-p_l)})\left(\frac{L}{\sigma}-\frac{|\mathfrak a(a,\delta)|}{2}K_2\right)\right].
    \end{align*}
    To conclude, one can choose $K_2$ large enough so that $K_{n,\theta,M,L} + K_2|\mathfrak a(a,\delta)|(e^{-K_2p_l} + e^{-K_2(1-p_l)})(\frac{L}{\sigma} - \frac{|\mathfrak a(a,\delta)}{2}K_2) \le -\eta < 0$, and then consider $h_0 > 0$ small enough so that $|\varepsilon(h;K_2)| \le \eta$ for $h \in [0,h_0]$.
    \eproof
  %   When $\epsilon = 0$, by Taylor expansions of $c$ around $p_l$, we get:
  % \begin{align*}
  %   S(k,&l,\we_{k,l},\nabla^a_\delta\we_{k,l},\nabla^a_{+,\delta}\we_{k,l},\Delta^a_\delta\we_{k,l},\we(t_{j+1},\cdot))\\
  %   \le &h K^1_{n,\theta,M,L} + K_2h K^2_{n,\theta,M,L} - h\frac{\mathfrak{a}(a,\delta)^2}{2}K_3^2K_2(e^{-K_3p_l}+e^{-K_3(1-p_l)})\\ - &\theta\delta^2\frac{\mathfrak{a}(a,\delta)^2}{\sigma^2}K_3^2K_2(e^{-K_3p_l}+e^{-K_3(1-p_l)}) + Lh\frac{|\mathfrak{a}(a,\delta)|}{\sigma}K_3K_2(e^{-K_3p_l}+e^{-K_3(1-p_l)})\\
  %   \le & hK^1_{n,\theta,M,L} + K_2hK^2_{n,\theta,M,L} + h |\mathfrak{a}(a,\delta)| K_3K_2 (e^{-K_3p_l}+e^{-K_3(1-p_l)})(\frac{L}{\sigma}-\frac{|\mathfrak{a}(a,\delta)|}{2}K_3),
  % \end{align*}
  % where $K^2_{n,\theta,M,L} > 0$ is a constant independant of $h,\delta$.\\
  % By chosing $K_3 > 0$ large enough so that $\frac{L}{\sigma}-\frac{|\mathfrak{a}(a,\delta)|}{2}K_3 < 0$ for all $a \in K$, and then $K_2 > 0$ large enough, we get the result. \eproof
  %\end{enumerate}
\end{proof}

\begin{Proposition}[Monotonicity] \label{monotone scheme}
  Let $\pi$ be a grid of $[0,T]$ and $\delta > 0$ satisfying \eqref{CFL1}--\eqref{CFL3}.
  %\eqref{CFL1}-\eqref{CFL2}-\eqref{CFL3}.\\
  Let $y \in \R, 0 \le k \le \kappa, j \in \Z$ and $p \in [0,1]$, and let  $ \cU, \cV :\pi\times\delta\Z\times[0,1] \to \R$ be two bounded functions such that $\cU \le \cV$. Then:
  \begin{align}
    \widehat{\mathfrak{S}}(\pi,\delta,k,j,p,y,\cU) \ge \widehat{\mathfrak{S}}(\pi,\delta,k,j,p,y,\cV).
  \end{align}
\end{Proposition}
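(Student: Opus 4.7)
The plan is to reduce the monotonicity of the full scheme $\widehat{\mathfrak{S}}$ to the monotonicity of the one-step operator $S^a_\delta$ established in Proposition \ref{prop comp thm scheme}(iii), by peeling off the outer operations (the identity in $y$, the minimum over $a \in K$, and the linear interpolation $\cI^a_\delta$) one at a time.

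First, I would handle the terminal case $k = \kappa$ separately: from the second line of \eqref{DiscScheme}, $\widehat{\mathfrak{S}}(\pi,\delta,\kappa,j,p,y,\cdot) = y - g(x_j)p$ does not depend on $\cU$ or $\cV$ at all, so the claimed inequality is in fact an equality. Next, for $k < \kappa$, I would use the first line of \eqref{DiscScheme}, which writes
\begin{align*}
\widehat{\mathfrak{S}}(\pi,\delta,t_k,x_j,p,y,u) = y - \min_{a \in K} \cI^a_\delta\!\left(S^a_\delta(t_k^+, t_k, u(t_k^+,\cdot))\right)(t_k,x_j,p).
\end{align*}
Since $\cU \le \cV$ implies $\cU(t_k^+,\cdot) \le \cV(t_k^+,\cdot)$ pointwise on $\delta\Z\times[0,1]$, it suffices to show, for each fixed $a \in K$, the two monotonicity properties
\begin{align*}
S^a_\delta(t_k^+, t_k, \cU(t_k^+,\cdot)) &\le S^a_\delta(t_k^+, t_k, \cV(t_k^+,\cdot)) \quad \text{on } \cG^a_\delta, \\
\cI^a_\delta(w^1)(t_k,x_j,p) &\le \cI^a_\delta(w^2)(t_k,x_j,p) \quad \text{whenever } w^1 \le w^2 \text{ on } \cG^a_\delta;
\end{align*}
taking the minimum over $a \in K$ then reverses the inequality once, yielding the claim.

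The first of these two inequalities is precisely Proposition \ref{prop comp thm scheme}(iii), applied with $\varphi^1 := \cU(t_k^+,\cdot)$ and $\varphi^2 := \cV(t_k^+,\cdot)$, and this is the only nontrivial input. The second is immediate from the definition of $\cI^a_\delta$, which is a convex combination with non-negative weights $\tfrac{p_{l+1}-p}{p_{l+1}-p_l}$ and $\tfrac{p-p_l}{p_{l+1}-p_l}$ of the two surrounding grid values, so it is order-preserving in its input. Concatenating these monotonicity statements and using that $\min_{a}$ preserves pointwise order gives the desired inequality.

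I do not anticipate any real obstacle here: the substantive work, namely the discrete comparison principle for the one-step scheme, has already been carried out in Proposition \ref{prop comp thm scheme} (which itself relies on the CFL conditions \eqref{CFL1}--\eqref{CFL3} through the bounds $\tfrac{hL}{2\delta} \le \theta$ and $4\theta < 1$). The present statement is essentially a bookkeeping consequence, and its proof is at most a few lines combining these three ingredients.
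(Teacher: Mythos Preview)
Your proposal is correct and follows essentially the same approach as the paper: reduce to the terminal case, then for $k<\kappa$ invoke Proposition \ref{prop comp thm scheme}(iii) together with the monotonicity of the linear interpolator $\cI^a_\delta$, and conclude by taking the minimum over $a\in K$. The only minor slip is the phrasing ``taking the minimum over $a \in K$ then reverses the inequality once'': the minimum preserves order, and it is the subtraction from $y$ that reverses it --- you get this right in your final sentence.
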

\begin{proof}
  The result is clear for $k = \kappa$. If $k < \kappa$, it is sufficient to show that:
  \begin{align}\nonumber
    \cI^a_\delta(S^a_\delta(t_{k+1},t_k,\cU(t_{k+1},\cdot))) \le \cI^a_\delta(S^a_\delta(t_{k+1},t_k, \cV(t_{k+1},\cdot))),
  \end{align}
   for all $a \in K$,
  recalling \eqref{DiscScheme}. This is a consequence of the comparison result in Proposition \ref{prop comp thm scheme} and the monotonicity of the linear interpolator.
  \eproof
\end{proof}
We now prove the stability of the scheme. Here, in contrast to Lemma \ref{lem inc p}, we are not able to prove that the solution of the scheme is increasing in $p$. However, due to the boundedness of the terminal condition, we obtain uniform bounds for $v_{n,\pi,\delta}$.
\begin{Proposition}[Stability] \label{stable scheme}
  For all $\pi$ and $\delta > 0$, there
  exists a unique solution $v_{n,\pi,\delta}$ to \eqref{DiscScheme}, which
  satisfies:
  \begin{align}
    0 \le v_{n,\pi,\delta} \le |g|_\infty \mbox{ on } \pi \times \delta\Z \times [0,1].
  \end{align}
\end{Proposition}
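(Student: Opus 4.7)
The plan is to establish existence, uniqueness, and the two-sided bound together by a single backward induction on the time index.

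At $t_\kappa=T$, the scheme directly yields $v_{n,\pi,\delta}(T,x,p)=g(x)p \in [0,|g|_\infty]$. For the inductive step at $t_k<T$, I assume $v_{n,\pi,\delta}(t_{k+1},\cdot)\in [0,|g|_\infty]$. Proposition \ref{uniqueness picard scheme} produces, for each $a\in K$, a unique $w^{k,a}_\delta\in\ell^\infty(\cG^a_\delta)$, and its linear interpolant $\cI^a_\delta(w^{k,a}_\delta)$ is well defined on $\delta\Z\times[0,1]$. Once the uniform bounds on the family $(w^{k,a}_\delta)_{a\in K}$ are secured, the infimum over $a\in K$ is finite and defines $v_{n,\pi,\delta}(t_k,\cdot)$ unambiguously, giving both existence and uniqueness at time $t_k$.

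The core of the argument is to obtain the bounds on the $w^{k,a}_\delta$ via sub-/super-solution tests with constant functions. For the lower bound, $0$ is a subsolution: thanks to $f(t,x,0,0)=0$ from \HYP{}(ii), evaluating $S$ at $v\equiv 0$ with vanishing discrete derivatives collapses to $-v_{n,\pi,\delta}(t_{k+1},x_k,\mathfrak{p}^a(p_l))\le 0$, and $S_b$ similarly reduces to the negation of its non-negative input. For the upper bound, $|g|_\infty$ is a supersolution: since $f$ is decreasing in $y$ and vanishes at the origin, $-hf(t,k\delta,|g|_\infty,0)\ge 0$, so both the interior and boundary scheme values are non-negative. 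Chaining Proposition \ref{comp thm bound} for the one-dimensional boundary schemes with the interior comparison theorem Proposition \ref{prop comp thm scheme}(ii) yields $0\le w^{k,a}_\delta\le|g|_\infty$ pointwise. Linear interpolation and the infimum over $a\in K$ preserve two-sided bounds, so the induction closes.

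The only slightly delicate point is the boundary-to-interior pipeline: Proposition \ref{prop comp thm scheme}(ii) requires the bounds at $l=0$ and $l=N^a_\delta$ as separate hypotheses, which forces me to first invoke Proposition \ref{comp thm bound} on the boundary schemes in both the sub- and super-solution directions before applying the interior comparison. Everything else reduces to induction bookkeeping and direct verification of sub-/supersolution inequalities at constant test functions.
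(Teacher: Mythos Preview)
Your proof is correct and follows essentially the same backward-induction strategy as the paper: initialise at $T$, use Proposition~\ref{uniqueness picard scheme} for well-posedness at each step, and sandwich $w^{k,a}_\delta$ between the constant functions $0$ and $|g|_\infty$ via the comparison machinery, then pass through interpolation and the infimum. The only cosmetic difference is that the paper obtains the lower bound by noting that $0=S^a_\delta(t_{j+1},t_j,0)$ exactly and then invoking monotonicity in the terminal datum (Proposition~\ref{prop comp thm scheme}(iii)), whereas you verify directly that $0$ is a subsolution with input $v_{n,\pi,\delta}(t_{k+1},\cdot)$; your explicit handling of the boundary schemes via Proposition~\ref{comp thm bound} before applying the interior comparison is in fact more careful than the paper's presentation.
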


\begin{proof}
  We prove the proposition by backward induction. \\
  First, since $v_{n,\pi,\delta}$ is a solution to \eqref{DiscScheme}, $v_{n,\pi,\delta}(T,x,p) = pg(x)$ on $\delta\Z\times[0,1]$, and we have $0 \le v_{n,\pi,\delta}(T,x,p) \le |g|_\infty$ for all $(x,p) \in \delta\Z \times [0,1]$.\\
  Let $0 \le j \le \kappa - 1$ and assume that $v_{n,\pi,\delta}(t_k,\cdot)$ is uniquely determined for $k > j$, and that $0 \le v_{n,\pi,\delta}(t_{j+1},\cdot) \le |g|_\infty$. Since $v_{n,\pi,\delta}$ is a solution to \eqref{DiscScheme}, we have
    \begin{align*}
      v_{n,\pi,\delta}(t_j,x,p) = \min_{a \in K} \cI^a_\delta(S^a_\delta(t_{j+1},t_j,v_{n,\pi,\delta}(t_{j+1},\cdot))),
    \end{align*}
    and for each $a \in K$, $S^a_\delta(t_{j+1},t_j,v_{n,\pi,\delta}(t_{j+1},\cdot))$ is uniquely determined by Proposition \ref{uniqueness picard scheme}, so $v_{n,\pi,\delta}(t_j,\cdot)$ is uniquely determined. Next, we show
  that, for all $k \in \Z$ and $0 \le l \le N_a$:
  \begin{align*}
    0 \le S^a_\delta(t_{j+1},t_j,v_{n,\pi,\delta}(t_{j+1},\cdot)) \le |g|_\infty.
  \end{align*}
  Then it is easy to conclude that $0 \le v_{n,\pi,\delta}(t_j,\cdot)  \le e^{LT}|g|_\infty$ on $ \R \times [0,1]$, by properties of the linear interpolation and the minimisation. \\
  First, it is straightforward that $\check{u}$ defined by $\check{u}_{k,l} = 0$ for all $k \in \Z$ and $0 \le l \le N_a$ satisfies $\check{u} = S^a_\delta(t_{j+1},t_j,0)$.\\ 
  The comparison theorem gives $0 \le S^a_\delta(t_{j+1},t_j,v_{n,\pi,\delta}(t_{j+1},\cdot))$,  since $0 \le v_{n,\pi,\delta}(t_{j+1},\cdot)$. \\
  To obtain the upper bound, we notice that $\hat{u}$ defined by $\hat{u}_{k,l} := |g|_\infty$ for all $k \in \Z$ and $0 \le l \le N_a$ satisfies 
 %  \textcolor{magenta}{ $\hat{u} = S^a_\delta(t_{j+1},t_j,e^{LT}|g|_\infty + hf(t_j,e^x,e^{LT}|g|_\infty,0),e^{LT}|g|_\infty,e^{LT}|g|_\infty)$.\\
 %  Since $f$ is non-increasing in the third argument by assumption and $f(t_j,e^x,0,0) = 0$, we have $e^{LT}|g|_\infty + hf(t_j,e^x,e^{LT}|g|_\infty,0) \ge e^{LT}|g|_\infty \ge w_{\pi,\delta}(t_{j+1},\cdot)$, $e^{LT}|g|_\infty \ge 0$ and $e^{LT}|g|_\infty \ge \overline v(t_j,e^{k\delta})$ for all $k \in \Z$ (see Remark \ref{boundSuperRep}).\\
 % => je ne comprends pas ces quelques lignes, je pense que tu veux dire:
 %  }
 %  \begin{bluetext}
  \begin{align*}
S(k,l,\hat{u}_{k,l},&\nabla^a_\delta\hat{u}_{k,l},\nabla^a_{+,\delta}\hat{u}_{k,l},\Delta^a_\delta\hat{u}_{k,l},\hat{u}) = -hf(t_j,x_k,\hat{u},0) \ge -hf(t_j,x_k,0,0) \ge 0.
\end{align*}
% \end{bluetext}
  Hence the comparison result in Proposition \ref{prop comp thm scheme} yields
  $S^a_\delta(t_{j+1},t_j,v_{n,\pi,\delta}(t_{j+1},\cdot)) \le |g|_\infty$.  \eproof
\end{proof}
We now prove the consistency. The proof requires several lemmata. First, we show that the perturbation induced by the change of controls vanishes as $\delta \to 0$.
\begin{Lemma}\label{diffDelta}
  For all $a \in K$, $a$ and $\mathfrak{a}(a,\delta)$ have the same sign, and:
  \begin{align} \label{diff a}
    0 \le |a| - |\mathfrak{a}(a,\delta)| \le \frac{n^2}{\sigma}\delta.
  \end{align}
  Moreover, there exists $c > 0$ such that for all $a \in K$ and $\delta > 0$, $|\mathfrak a(a,\delta)| \ge c > 0$.
\end{Lemma}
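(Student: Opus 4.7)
The plan is to peel off the three claims in order, using only the definitions \eqref{def N a delta}--\eqref{def a a delta} of $N^a_\delta$ and $\mathfrak a(a,\delta)$ together with the fact that $K$ is a finite (more precisely, closed bounded, since pulled back from the compact $\cR_n$) subset of $\R\setminus\{0\}$, so that $0<a_{\min}\le|a|\le n$ for every $a\in K$.

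\textbf{Sign and the bound \eqref{diff a}.} The sign claim is immediate from $\mathfrak a(a,\delta)=\sgn(a)\,\sigma/(\delta N^a_\delta)$ since $N^a_\delta\ge 1$. For the upper bound on $|a|-|\mathfrak a(a,\delta)|$, I would use the two-sided ceiling inequality
\[
\tfrac{\sigma}{|a|\delta}\ \le\ N^a_\delta\ =\ \Bigl\lceil\tfrac{\sigma}{|a|\delta}\Bigr\rceil\ <\ \tfrac{\sigma}{|a|\delta}+1\,.
\]
The left inequality gives $|\mathfrak a(a,\delta)|=\sigma/(\delta N^a_\delta)\le|a|$, proving $|a|-|\mathfrak a(a,\delta)|\ge 0$. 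The right inequality gives
\[
|\mathfrak a(a,\delta)|\ =\ \frac{\sigma}{\delta N^a_\delta}\ >\ \frac{\sigma}{\delta\bigl(\tfrac{\sigma}{|a|\delta}+1\bigr)}\ =\ \frac{\sigma|a|}{\sigma+|a|\delta}\,,
\]
whence
\[
|a|-|\mathfrak a(a,\delta)|\ <\ |a|-\frac{\sigma|a|}{\sigma+|a|\delta}\ =\ \frac{|a|^2\delta}{\sigma+|a|\delta}\ \le\ \frac{|a|^2\delta}{\sigma}\ \le\ \frac{n^2}{\sigma}\delta,
\]
which is the desired estimate.

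\textbf{Uniform lower bound on $|\mathfrak a(a,\delta)|$.} Here I would split on the size of $\delta$ relative to $\sigma/|a|$. If $\delta\le\sigma/|a|$, the right ceiling estimate yields $N^a_\delta\le\sigma/(|a|\delta)+1\le 2\sigma/(|a|\delta)$, so $|\mathfrak a(a,\delta)|\ge|a|/2\ge a_{\min}/2$. If on the contrary $\delta>\sigma/|a|$, then $\sigma/(|a|\delta)<1$ forces $N^a_\delta=1$, and in that case $|\mathfrak a(a,\delta)|=\sigma/\delta$; combined with the CFL restriction $\delta\le 1$ from \eqref{CFL1} this gives $|\mathfrak a(a,\delta)|\ge\sigma$. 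Setting $c:=\min(a_{\min}/2,\sigma)$ yields the stated uniform lower bound for all $a\in K$ and all admissible $\delta$.

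\textbf{Expected obstacle.} There is essentially no obstacle; the entire argument is a ceiling-function calculation and a two-case analysis. The only subtle point is making precise the range of $\delta$ on which the uniform lower bound holds: as $\delta\to\infty$ one has $|\mathfrak a(a,\delta)|=\sigma/\delta\to 0$, so the inequality $|\mathfrak a(a,\delta)|\ge c$ is only meaningful under the CFL constraint \eqref{CFL1}. I would therefore state the lower bound for $\delta\le 1$ (which is how the lemma is used in the consistency proof) rather than for literally all $\delta>0$.
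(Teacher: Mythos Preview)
Your proof is correct and follows essentially the same route as the paper: both arguments rest on the two-sided ceiling inequality $\sigma/(|a|\delta)\le N^a_\delta<\sigma/(|a|\delta)+1$ to sandwich $|\mathfrak a(a,\delta)|$, and both use the finiteness of $K$ to bound $|a|$ above by $n$ and below by some $a_{\min}>0$. Your observation in the final paragraph is well taken: the lemma's phrase ``for all $\delta>0$'' cannot be literally true since $|\mathfrak a(a,\delta)|=\sigma/\delta$ when $N^a_\delta=1$, and the paper's own inequality chain $\sigma/(a_m\delta)+1\le\sigma/(c\delta)$ likewise only works once $\delta$ is bounded (e.g.\ by \eqref{CFL1}); this is harmless for the application but worth flagging.
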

\begin{proof}
  By definition of $N^a_\delta$,
  \begin{align*}
    (N^a_\delta - 1)\frac{|a|}{\sigma}\delta < 1 = N^a_\delta\frac{|\mathfrak{a}(a,\delta)|}{\sigma}\delta \le N^a_\delta\frac{|a|}{\sigma}\delta,
  \end{align*}
  thus
  \begin{align*}
    |a| - \frac{|a|}{N^a_\delta} < |\mathfrak{a}(a,\delta)| \le |a|.
  \end{align*}
  Also, we observe
  \begin{align*}
    \frac{|a|}{N^a_\delta} = \frac{|a|}{\lceil{\frac{\sigma}{|a|\delta}\rceil}} \le \frac{|a|}{\frac{\sigma}{|a|\delta}} = \frac{a^2 \delta}{\sigma} \le \frac{n^2}{\sigma}\delta,
  \end{align*}
  which concludes the proof of \eqref{diff a}.\\
  By \eqref{def N a delta}, we have:
  \begin{align*}
    N^a_\delta \le \frac{\sigma}{|a|\delta} + 1 \le \frac{\sigma}{a_m\delta}+1 \le \frac{\sigma}{c\delta}
  \end{align*}
  where $a_m = \min \{ |a| : a \in K \}$ and $c > 0$ is independant of $a,\delta$.\\
  Now, by \eqref{def a a delta}, we get:
  \begin{align*}
    |\mathfrak a(a,\delta)| &= \frac{\sigma}{\delta N^a_\delta} \ge \frac{c\delta\sigma}{\delta\sigma} = c.
  \end{align*} \eproof
\end{proof}

  Last, we give explicit supersolutions and subsolutions satisfying appropriate conditions.
  Let $0 \le t < s \le T$, $\delta > 0$ and $a \in K$ be fixed.
  %For $t \in [t^-,t^+]$ and $p \in [0,1]$, we set $\mathfrak{p}^+(t,p) := 0 \vee p + \frac{\mathfrak{a}(a,\delta)}{\sigma}(\frac12\sigma^2-\mu)(t^+-t) \wedge 1$.\\
  For $\epsilon > 0$, we set
  \begin{align*}
    f_\epsilon(t,x,y,\nu) &:= (f(t,\cdot,\cdot,\cdot) * \rho_\epsilon)(t,y,\nu) \\ &:= \int_{\R\times\R\times\R} f(t,x-u,y-z,\nu-\eta)\rho_\epsilon(u,z,\eta)\ud u \ud z \ud \eta,
  \end{align*}
  where $*$ is the convolution operator and, for $\epsilon > 0$, $\rho_\epsilon(x) := \epsilon^{-3}\rho(x/\epsilon)$ with $\rho:\R^3\to\R$ is a mollifier, i.e.\ a smooth function supported on $[-1,1]^3$ satisfying $\int_{\R} \rho = 1$. We set
  \begin{align*}
    F_\epsilon(t,x,y,q,A) = \left(\frac12\sigma^2-\mu\right)q - \frac{\sigma^2}2A - f_\epsilon(t,x,y,\sigma q).
  \end{align*}
  \begin{Remark}
    Since $f$ is L-Lipschitz continuous with respect to its three last variables, we have $|f_\epsilon - f|_\infty \le L \epsilon$.
  \end{Remark}
  % We assume the following:
  % \begin{Assumption}\label{assMoll}
  %   There exists a constant $K' > 0$ such that, for all $\epsilon > 0$, we have
  %   \begin{align*}
  %     |f_\epsilon - f|_\infty := \sup_{t,x,y,\nu} |f_\epsilon(t,x,y,\nu)-f(t,x,y,\nu)| < K'\epsilon.
  %   \end{align*}
  % \end{Assumption}
  % \begin{Remark} \label{rkboundx}
  %   (i) This Assumption is needed as $f$ is not assumed to be Lipschitz-continuous in its second variable. However, it is used only to simplify the presentation, as one can easily get rid of this Assumption. To do so, one has to consider the finite differences scheme on a finite domain $[-\Lambda,\Lambda] \subset \R$ for the $x$ variable. The numerical scheme has to be adapted to incorpore boundary conditions.
  %   % \textcolor{red}{Since $g$ is bounded by assumption, one can safely put $0$ as boundary conditions, or something more appropriate. ???}
  %   To prove the consistency, it is then needed to chose $\Lambda \to +\infty$ such that $K'_\Lambda \epsilon \to 0$, where $K'_{\Lambda,\epsilon}$ is defined by:
  %   \begin{align*}
  %     \sup_{(t, x, y, \nu) \in [0,T]\times[-\Lambda,\Lambda]\times\R\times\R } |f_\epsilon(t,x,y,\nu)-f(t,x,y,\nu)| < K'_{\Lambda,\epsilon} \epsilon
  %   \end{align*}\\
  %   (ii) In practice, the function $(t,x,y,\nu) \to f(t,e^x,y,\nu)$ can be assumed Lipschitz-continuous with respect to $(x,y,\nu)$ uniformly in $t$; we find that $K'_{\lambda,\epsilon} = Le^{\Lambda + \epsilon}$, and it is straightforward to compute $\Lambda > 0$ as a function of $\epsilon$ so that $e^{\Lambda+\epsilon}\epsilon$ goes to $0$.
  % \end{Remark}
  
  The lengthy proof of the following lemma by insertion is given in the appendix.
  \begin{Lemma} \label{lemsupersolum}
    Let $0 \le t < s \le T, \varphi \in \cC^\infty_b(\R\times\R,\R), a \in K$. We set $h := s - t$. Let $\epsilon > 0$ such that $\epsilon \to 0$ and $\frac{\delta}{\epsilon^2} \to 0$ as $h \to 0$, observing \eqref{CFL3}.\\
    Then there exist bounded functions
    $S^{a,\pm}_\delta(s,t,\varphi):\delta\Z\times[0,1]\to\R$ of the
    form
    \begin{align}
      \label{def S a pm}S^{a,\pm}_{\delta,\epsilon}(s,t,\varphi)(x,p) = &\varphi(x,\textcolor{black}{\mathfrak p^a(p)})\\
      \nonumber &- h F_\epsilon(t,x,\varphi(x,\textcolor{black}{\mathfrak p^a(p)}),\nabla^{\mathfrak{a}(a,\delta)}\varphi(x,\textcolor{black}{\mathfrak p^a(p)}),\Delta^{\mathfrak{a}(a,\delta)}\varphi(x,\textcolor{black}{\mathfrak p^a(p)}))\\
      \nonumber &\pm C_{\varphi,n}(h,\epsilon),
    \end{align}
    where $\mathfrak p^a$ is defined in \eqref{p a p}, and where $C_{\varphi,n}(h,\epsilon) > 0$ satisfies $\frac{C_{\varphi,n}(h,\epsilon)}{h} \to 0$ as $h \to 0$,
    such that 
    $w^\pm := (S^{a,\pm}_{\delta,\epsilon}(s,t,\varphi)(x_k,p_l))_{k
      \in \Z, 0 \le l \le N^a_\delta} \in \ell^\infty(\cG^a_\delta))$ satisfy
    \begin{align}
      \label{supersolinside} S(k,l,w^+_{k,l},\nabla^a_\delta w^+_{k,l},\nabla^a_{+,\delta}w^+_{k,l},\Delta^a_\delta w^+_{k,l})  &\ge 0, \\
      \label{subsolinside} S(k,l,w^-_{k,l},\nabla^a_\delta w^-_{k,l},\nabla^a_{+,\delta}w^-_{k,l},\Delta^a_\delta w^-_{k,l})&\le 0,
    \end{align}
    for all $k \in \Z$ and $0< l < N^a_\delta$.\\
    Furthermore, for all $x \in \delta\Z$,
    $S^{a,\pm}_{\delta,\epsilon}(s,t,\varphi)(x,\cdot) \in \cC^2([0,1],\R)$, and $|\partial_{pp}^2S^{a,\pm}_{\delta,\epsilon}(s,t,\varphi)|_\infty \le \frac{C_{\varphi}(h)}{\epsilon^2}$ for some constant $C_{\varphi}(h) > 0$ independent of $\epsilon$.
  \end{Lemma}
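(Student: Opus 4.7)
The plan is to verify the lemma by direct substitution. Set $\tilde w_{k,l}$ equal to the right-hand side of \eqref{def S a pm} \emph{without} the $\pm C_{\varphi,n}(h,\epsilon)$ correction, so that $w^\pm_{k,l}=\tilde w_{k,l}\pm C_{\varphi,n}(h,\epsilon)$. The central observation is that a unit grid shift in the direction $(1,\sgn(a))$ corresponds to the physical shift $\delta(1,\sgn(a)/N^a_\delta)=(\delta,\mathfrak a(a,\delta)\delta/\sigma)$, which is exactly the direction along which the degenerate operators $\nabla^{\mathfrak a(a,\delta)}$ and $\Delta^{\mathfrak a(a,\delta)}$ differentiate. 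Hence the discrete operators $\nabla^a_\delta,\nabla^a_{+,\delta},\Delta^a_\delta$ are one-dimensional finite-difference approximations of these continuous operators, acting on the smooth extension $\tilde\Phi(x,p):=\varphi(x,\mathfrak p^a(p))-hF_\epsilon\!\bigl(t,x,\varphi,\nabla^{\mathfrak a(a,\delta)}\varphi,\Delta^{\mathfrak a(a,\delta)}\varphi\bigr)(x,\mathfrak p^a(p))$ of $\tilde w$ to $\mathbb R\times[0,1]$.

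Next I would carry out Taylor expansions of $\tilde\Phi$ at $(x_k,\mathfrak p^a(p_l))$, obtaining for any $\Phi\in C^4_b$
\begin{align*}
\nabla^a_\delta\Phi(x_k,p_l) &= \nabla^{\mathfrak a(a,\delta)}\Phi(x_k,\mathfrak p^a(p_l)) + O\!\bigl(\delta^2\|\Phi\|_{C^3}\bigr),\\
\nabla^a_{+,\delta}\Phi(x_k,p_l) &= \nabla^{\mathfrak a(a,\delta)}\Phi(x_k,\mathfrak p^a(p_l)) + \tfrac{\delta}{2}\Delta^{\mathfrak a(a,\delta)}\Phi(x_k,\mathfrak p^a(p_l)) + O\!\bigl(\delta^2\|\Phi\|_{C^3}\bigr),\\
\Delta^a_\delta\Phi(x_k,p_l) &= \Delta^{\mathfrak a(a,\delta)}\Phi(x_k,\mathfrak p^a(p_l)) + O\!\bigl(\delta^2\|\Phi\|_{C^4}\bigr),
\end{align*}
and control the relevant norms via $\|f_\epsilon\|_{C^k}=O(\epsilon^{1-k})$ for $k\ge 1$, yielding $\|\tilde\Phi\|_{C^k}\le K_\varphi(1+h\epsilon^{1-k})$, with $K_\varphi$ depending only on $\varphi$ and $n$ (using Lemma \ref{diffDelta} to bound $|\mathfrak a(a,\delta)|\le n$).

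Substituting into $S$, the piece $\tilde w_{k,l}-\varphi(x_k,\mathfrak p^a(p_l))=-hF_\epsilon(\ldots)$ is designed to cancel against $h\widehat F(\ldots)$ up to lower-order terms: the $-\mu q$ piece of $F_\epsilon$ cancels with $-\mu q_+$ of $\widehat F$ modulo the upwind artificial diffusion $-h\mu\tfrac{\delta}{2}\Delta^{\mathfrak a(a,\delta)}\varphi$; the $-\tfrac{\sigma^2}{2}A$ pieces match; the extra $\tfrac{\sigma^2}{2}q$ in $F_\epsilon$ and the Lax--Friedrichs term $-\theta\tfrac{\delta^2}{h}A$ in $\widehat F$ are arranged to compensate the remaining structural mismatch; and $|f_\epsilon-f|_\infty\le L\epsilon$ handles the nonlinearity. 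Collecting unmatched contributions yields a residual of order $h\epsilon+h\delta^2\|\tilde\Phi\|_{C^4}+h^2=O\!\bigl(h\epsilon+h\delta+h\delta^2/\epsilon^3+h^2\bigr)$. Choosing $C_{\varphi,n}(h,\epsilon)$ equal to a constant times this bound ensures $C_{\varphi,n}/h\to 0$ under $\epsilon\to 0,\ \delta/\epsilon^2\to 0,\ \delta=O(h)$, and the $\pm C$ correction then imposes the correct sign of $S$ at interior nodes. The $C^2$ regularity in $p$ and the bound $|\partial^2_{pp}S^{a,\pm}_{\delta,\epsilon}|\le C_\varphi(h)/\epsilon^2$ follow immediately from the explicit formula \eqref{def S a pm} and the chain rule, since two $p$-derivatives of $hf_\epsilon(\ldots)$ produce a factor $h\|f_\epsilon\|_{C^2}\lesssim h/\epsilon^2$.

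The main obstacle is the delicate three-parameter balance $(\delta,h,\epsilon)$: because $f$ is only Lipschitz, Taylor expansion forces us to work with the mollified $f_\epsilon$, which in turn inflates $\|\tilde\Phi\|_{C^4}$ by $\epsilon^{-3}$; the hypothesis $\delta/\epsilon^2\to 0$ is precisely what tames the leading discrete truncation error $h\delta^2\|\tilde\Phi\|_{C^4}\sim h\delta^2/\epsilon^3$ to $o(h)$. Book-keeping the cancellations between $F_\epsilon$, $\widehat F$ and the upwind/Lax--Friedrichs artefacts (so that the structurally ``wrong'' pieces disappear to leading order) is the second subtle point, and explains the somewhat unusual form of $F_\epsilon$ in the statement.
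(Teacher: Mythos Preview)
Your approach is essentially the paper's: define $z_{k,l}:=\tilde w_{k,l}$, show $|S(k,l,z,\ldots,\varphi)|\le C_{\varphi,n}(h,\epsilon)$ by Taylor expansion and mollifier estimates, then use that adding a constant leaves the discrete derivatives unchanged while $f$ is non-increasing in $y$ to obtain \eqref{supersolinside}--\eqref{subsolinside}. The paper organises the residual as $A+B+C$ with $A=\widehat F(z)-\widehat F(\varphi)$, $B=\widehat F(\varphi)-F(\varphi)$, $C=F(\varphi)-F_\epsilon(\varphi)$, whereas you Taylor-expand $\tilde\Phi$ directly; these are the same computation regrouped, and your use of $\|D^k f_\epsilon\|=O(\epsilon^{1-k})$ (exploiting the Lipschitz property) is slightly sharper than the paper's $O(\epsilon^{-k})$ but both suffice.

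One point where you go astray: your explanation that ``the extra $\tfrac{\sigma^2}{2}q$ in $F_\epsilon$ and the Lax--Friedrichs term $-\theta\tfrac{\delta^2}{h}A$ are arranged to compensate'' cannot be right, since the first contributes $\tfrac{\sigma^2}{2}\nabla^{\mathfrak a(a,\delta)}\varphi=O(1)$ while the second is $O(\delta^2/h)=O(\delta)$. In fact this term is a typo in the paper's definition of $F_\epsilon$: the paper's own computation of $C=f_\epsilon-f$ only works if $F_\epsilon$ and $F$ share the same linear part $-\mu q$, and the downstream consistency argument (Proposition~\ref{consistent scheme}) also requires $F_\epsilon\to F$. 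With $F_\epsilon(t,x,y,q,A):=-\mu q-\tfrac{\sigma^2}{2}A-f_\epsilon(t,x,y,\sigma q)$ your bookkeeping goes through exactly as you describe. Finally, make the passage from $z$ to $w^\pm$ explicit: $S(k,l,z+C,\ldots,\varphi)=S(k,l,z,\ldots,\varphi)+C-h\bigl(f(\cdot,z+C,\cdot)-f(\cdot,z,\cdot)\bigr)\ge S(k,l,z,\ldots,\varphi)+C$ by monotonicity of $f$ in $y$.
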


\begin{Lemma} \label{lem interp}
  Let
  $0 \le t < s \le T, \delta > 0, a \in K, \varphi \in
  \cC^\infty_b(\R\times\R)$ be fixed. Let $h = s-t$,
  $k \in \Z, x_k \in \delta\Z, p \in (0,1)$, and assume that $h$ is sufficiently small so that $p \in [p_1, p_{N^a_\delta-1}]$, observing \eqref{CFL3}.
  Let $\epsilon > 0$ such that $\epsilon \to 0$ and $\frac{\delta}{\epsilon^2} \to 0$ as $h \to 0$.
  Then we have:
  \begin{align}
    S^{a,-}_{\delta,\epsilon}(s,t,\varphi)(x_k,p)-\cI^a_\delta(S^a_\delta(s,t,\varphi))(x_k,p) &\le C'_{\varphi,n}(h,\epsilon), \\
    \cI^a_\delta(S^a_\delta(s,t,\varphi))(x_k,p)-S^{a,+}_{\delta,\epsilon}(s,t,\varphi)(x_k,p)  &\le C'_{\varphi,n}(h,\epsilon),
  \end{align}
  where $C'_{\varphi,n}(h,\epsilon) > 0$ satisfies $\frac{C'_{\varphi,n}(h,\epsilon)}{h} \to 0$, as $h \to 0$ and where the functions 
  $S^{a,\pm}_{\delta,\epsilon}(s,t,\varphi)$ are introduced in Lemma \ref{lemsupersolum}.
\end{Lemma}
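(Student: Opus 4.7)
The plan is to sandwich $\cI^a_\delta(S^a_\delta(s,t,\varphi))$ between the two smooth functions $S^{a,\pm}_{\delta,\epsilon}(s,t,\varphi)$ constructed in Lemma \ref{lemsupersolum}, combining the refined comparison-type estimate of Proposition \ref{diffsupersubsol} on the grid with a standard second order linear interpolation error for the smooth functions $S^{a,\pm}_{\delta,\epsilon}(s,t,\varphi)(x_k,\cdot)$. Writing $v := S^a_\delta(s,t,\varphi)\in\ell^\infty(\cG^a_\delta)$ and letting $w^\pm\in\ell^\infty(\cG^a_\delta)$ denote the restriction of $S^{a,\pm}_{\delta,\epsilon}(s,t,\varphi)$ to the grid, we have by definition $S(\,\cdot\,,v_{k,l},\ldots,\varphi)=0$, and Lemma \ref{lemsupersolum} gives $S(\,\cdot\,,w^-_{k,l},\ldots,\varphi)\le 0\le S(\,\cdot\,,w^+_{k,l},\ldots,\varphi)$ for $0<l<N^a_\delta$. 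Proposition \ref{diffsupersubsol} applied first to the pair $(w^-,v)$ and then to $(v,w^+)$ yields, for $1\le l\le N^a_\delta-1$,
\begin{align*}
(w^--v)_{k,l},\ (v-w^+)_{k,l} \ \le\ e^{-4(\mathfrak{a}(a,\delta)/\sigma)^2\,C(h,\delta)\,l(N^a_\delta-l)}\cdot B_\varphi,
\end{align*}
where $B_\varphi$ is a uniform bound on the boundary discrepancies; this is finite since $v$ is uniformly bounded by Proposition \ref{prop comp thm scheme}(iii) (comparing with constant sub/supersolutions built from $\pm|\varphi|_\infty$), and $w^\pm$ is bounded by the explicit form \eqref{def S a pm}.

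The restriction $p\in[p_1,p_{N^a_\delta-1}]$ ensures that the two grid indices used in the linear interpolation at $p$ satisfy $l\ge 1$ and $l+1\le N^a_\delta-1$, hence $l(N^a_\delta-l)\ge N^a_\delta-1$. Combining Lemma \ref{diffDelta} (which gives $|\mathfrak{a}(a,\delta)|\ge c>0$), the lower bound \eqref{minconstC} (which provides $C(h,\delta)\ge c_1/h$ for $h$ small), the CFL condition $\delta\le Mh$, and the estimate $N^a_\delta\ge\sigma/(n\delta)$, the exponent above is bounded below by a positive constant times $1/h^2$. Consequently,
\begin{align*}
\cI^a_\delta(w^--v)(x_k,p),\ \cI^a_\delta(v-w^+)(x_k,p)\ \le\ B_\varphi\,e^{-c_2/h^2} \ =\ o(h).
\end{align*}

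For the interpolation error, Lemma \ref{lemsupersolum} guarantees $S^{a,\pm}_{\delta,\epsilon}(s,t,\varphi)(x_k,\cdot)\in\cC^2([0,1])$ with $\|\partial^2_{pp}S^{a,\pm}_{\delta,\epsilon}\|_\infty\le C_\varphi(h)/\epsilon^2$, and the $p$-grid on $\Gamma^a_\delta$ has spacing $1/N^a_\delta\le n\delta/\sigma$. The classical error bound for the piecewise linear interpolant of a $\cC^2$ function then gives
\begin{align*}
\bigl|\,S^{a,\pm}_{\delta,\epsilon}(s,t,\varphi)(x_k,p)-\cI^a_\delta(w^\pm)(x_k,p)\,\bigr|\ \le\ \frac{n^2\delta^2}{8\sigma^2}\cdot\frac{C_\varphi(h)}{\epsilon^2},
\end{align*}
which is $o(h)$ since $\delta^2/(h\epsilon^2)\le M\,(\delta/\epsilon^2)\to 0$ by the assumed scaling. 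Decomposing
\begin{align*}
S^{a,-}_{\delta,\epsilon}(x_k,p)-\cI^a_\delta(v)(x_k,p)\ =\ \bigl[S^{a,-}_{\delta,\epsilon}(x_k,p)-\cI^a_\delta(w^-)(x_k,p)\bigr]+\cI^a_\delta(w^--v)(x_k,p),
\end{align*}
and symmetrically for the second inequality, each summand is an $o(h)$ quantity; their sum provides the desired $C'_{\varphi,n}(h,\epsilon)$.

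The main obstacle conceptually is reconciling the a priori $O(1)$ mismatch of $w^\pm$ and $v$ at the boundaries $l=0,N^a_\delta$ with the $o(h)$ error required for consistency: this is possible only because \eqref{minconstC} together with the CFL yields a $1/h^2$ growth in the exponent of Proposition \ref{diffsupersubsol}, so that $B_\varphi e^{-c_2/h^2}$ beats every polynomial in $h$. Everything else reduces to Taylor expansions and the a priori bounds already established.
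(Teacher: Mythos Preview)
Your proof is correct and follows essentially the same approach as the paper: apply Proposition \ref{diffsupersubsol} to the pair $(w^-,v)$ (resp.\ $(v,w^+)$), use the lower bound \eqref{minconstC} together with Lemma \ref{diffDelta} to show the resulting exponential term is $o(h)$, and then control the linear interpolation error of the smooth functions $S^{a,\pm}_{\delta,\epsilon}(x_k,\cdot)$ via the $\cC^2$ bound from Lemma \ref{lemsupersolum}, using $\delta^2/\epsilon^2=o(h)$. The only minor difference is quantitative: you exploit the additional factor $N^a_\delta-1\gtrsim 1/\delta\gtrsim 1/h$ to obtain an exponent of order $1/h^2$, whereas the paper simply uses $N^a_\delta-1\ge 1$ and obtains an exponent of order $1/h$; both are of course sufficient for the $o(h)$ conclusion.
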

\begin{proof}
  We prove the first identity, the second one is similar. \\
  Set $w := S^a_\delta(s,t,\varphi)$ and $w^- := S^{a,-}_{\delta,\epsilon}(s,t,\varphi)$.
  By definition of $w$ and by \eqref{subsolinside}, one can apply
  Proposition \ref{diffsupersubsol}. For all $k \in \Z$ and
  $0 < l < N^a_\delta$:
  \begin{align}
    \label{subsolcontrol} w^-_{k,l} - w_{k,l} \le
    Be^{-4\frac{\mathfrak{a}(a,\delta)^2}{\sigma^2}C(h,\delta)l(N^a_\delta-l)} \le
    Be^{-4\frac{\mathfrak{a}(a,\delta)^2}{\sigma^2}C(h,\delta)(N^a_\delta-1)},
  \end{align}
  with $B = |(w^-_{\cdot,0} - w_{\cdot,0})^+|_\infty + |(w^-_{\cdot,N^a_\delta}-w_{\cdot,N^a_\delta})^+|_\infty$ and $C(h,\delta)$ is defined in \eqref{constC}. By Lemma \ref{diffDelta}, there exists a constant $c > 0$ such that $|\mathfrak a(a,\delta)| \ge c$. In addition, using \eqref{minconstC}, we get:
  \begin{align*}
  \frac{B}{h}e^{-4C(h,\delta)\frac{\mathfrak{a}(a,\delta)^2}{\sigma}(N^a_\delta -1)} &\le \frac{B}{h} e^{-4 \frac{c^2}{\sigma^2}\left(\frac{1}{\left((\mu+\frac{L}{2})M+2\theta M^2\right)h^2 + \sigma^2h} - \frac{M^2}{2\sigma^4}\right)}\\ &= B e^{4\frac{c^2M^2}{2\sigma^6}} \frac{e^{-4\frac{c^2}{\sigma^2}\frac{1}{\left((\mu+\frac{L}{2})M+2\theta M^2\right)h^2 + \sigma^2h}}}{h}\to 0,
  \end{align*}
  as $h \to 0$. \\
  Now, let $p \in [p_1,p_{N^a_\delta-1})$ and $k \in \Z$. By definition of
  $\cI^a_\delta$, one has:
  \begin{align}
    \cI^a_\delta(S^a_\delta(s,t,\varphi))(x_k,p) = \lambda w_{k,l} + (1-\lambda) w_{k,l+1},
  \end{align}
  where $p \in [p_l, p_{l+1})$ with $0 < l < N^a_\delta - 1$, and
  $\lambda = \frac{p_{l+1}-p}{p_{l+1}-p_l}$. Thus:
  \begin{align}
    \nonumber &S^{a,-}_{\delta,\epsilon}(s,t,\varphi)(x_k,p) - \cI^a_\delta(w)(x_k,p)\\
              &= S^{a,-}_{\delta,\epsilon}(s,t,\varphi)(x_k,p) - \cI^a_\delta(w^-)(x_k,p) + \cI^a_\delta(w^-)(x_k,p) - \cI^a_\delta(w)(x_k,p) \\
    \nonumber &= S^{a,-}_{\delta,\epsilon}(s,t,\varphi)(x_k,p) - \cI^a_\delta(w^-)(x_k,p) + \lambda (w^-_{k,l} - w_{k,l}) + (1-\lambda) (w^-_{k,l+1} - w_{k,l+1}).
  \end{align}
  The two last terms are controlled using \eqref{subsolcontrol}, and, by properties of linear interpolation of the function $p \mapsto S^{a,-}_{\delta,\epsilon}(t^+,t^-,\varphi)(x_k,p) \in \cC^2([0,1],\R)$ with $|\partial^2_{pp}S^{a,-}_{\delta,\epsilon}(t^+,t^-,\varphi)|_\infty \le \frac{C_{\varphi}(h)}{\epsilon^2}$ (recall the previous Lemma)
  the first term is of order $\frac{\delta^2}{\epsilon^{2}} = o(h)$
  since \eqref{CFL3} is in force and
  $\frac{\delta}{\epsilon^2} \to 0$.
  \eproof
\end{proof}
\begin{Lemma} \label{term phi}
  For $0 \le t < s \le T$ such that $L(s-t) \le 1, \xi > 0, \varphi : \delta\Z \times [0,1] \to \R$ a bounded function, the following holds for all $a \in K$:
  \begin{align*}
    S^a_\delta(s,t,\varphi) + \xi - L(s-t)\xi \le S^a_\delta(s,t,\varphi+\xi) \le S^a_\delta(s,t,\varphi) + \xi,
  \end{align*}
  where $L$ is the Lipschitz constant of $f$.
\end{Lemma}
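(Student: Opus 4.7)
The plan is to use the comparison theorem for the scheme (Proposition \ref{prop comp thm scheme}) together with the monotonicity of $f$ in $y$. The key structural observation is that all the discrete operators $\nabla^a_\delta$, $\nabla^a_{+,\delta}$, $\Delta^a_\delta$ annihilate constants, so shifting a solution by a constant $c$ only affects the scheme through the $-f$ term inside $\widehat F$.

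Set $w := S^a_\delta(s,t,\varphi)$ and $\tilde w := S^a_\delta(s,t,\varphi+\xi)$. For the upper bound, I consider the candidate $u := w + \xi$. Because $\nabla^a_\delta u = \nabla^a_\delta w$, etc., and because $f$ is non-increasing in its third argument and $u \ge w$, a direct substitution into \eqref{S} gives
\begin{align*}
S\bigl(k,l,u_{k,l},\nabla^a_\delta u_{k,l},\nabla^a_{+,\delta}u_{k,l},\Delta^a_\delta u_{k,l},\varphi+\xi\bigr)
= h\bigl(f(t,x_k,w_{k,l},\sigma\nabla^a_\delta w_{k,l}) - f(t,x_k,u_{k,l},\sigma\nabla^a_\delta w_{k,l})\bigr) \ge 0,
\end{align*}
so $u$ is a supersolution of the scheme with data $\varphi+\xi$. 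For the lower bound, take $u := w + \xi(1-L(s-t))$; then $u \ge w$ (using $L(s-t)\le 1$), and the same calculation combined with the $L$-Lipschitz property of $f$ yields
\begin{align*}
S\bigl(k,l,u_{k,l},\nabla^a_\delta u_{k,l},\nabla^a_{+,\delta}u_{k,l},\Delta^a_\delta u_{k,l},\varphi+\xi\bigr)
\le -L(s-t)\xi + h\,L\,\xi(1-L(s-t)) \le 0,
\end{align*}
so $u$ is a subsolution.

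The second step is to deal with the Dirichlet boundary data $\underline v, \overline v$ defined through \eqref{eq def bound 0}--\eqref{eq def bound 1}. These are determined by the one-dimensional scheme $S_b$ in \eqref{Sb}, which has exactly the same structural form as $S$, with the $p$-direction suppressed. Repeating the two-sided shift argument with $S_b$ in place of $S$ and invoking the corresponding comparison result (Proposition \ref{comp thm bound} in the appendix), one obtains
\begin{align*}
\underline v_k + \xi - L(s-t)\xi \le \underline{\tilde v}_k \le \underline v_k + \xi,
\qquad
\overline v_k + \xi - L(s-t)\xi \le \overline{\tilde v}_k \le \overline v_k + \xi,
\end{align*}
for every $k\in\Z$, where the tilde refers to the boundary solutions associated with the shifted data $\varphi+\xi$.

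It then remains to conclude. For the upper inequality, $w+\xi$ is a supersolution of \eqref{systu}--\eqref{discBcond} for $\varphi+\xi$ with boundary values dominating $\underline{\tilde v}, \overline{\tilde v}$, while $\tilde w$ is the exact solution; the comparison principle in Proposition \ref{prop comp thm scheme}(ii) gives $\tilde w \le w + \xi$. For the lower inequality, $w + \xi(1-L(s-t))$ is a subsolution with boundary values dominated by $\underline{\tilde v}, \overline{\tilde v}$, so the same comparison principle gives $w + \xi - L(s-t)\xi \le \tilde w$. The only delicate point in this plan is the careful monotonicity/sign analysis of the $f$-term under the constant shift, but once the cancellation of the linear discrete operators on constants is exploited, this is a short calculation rather than a real obstacle.
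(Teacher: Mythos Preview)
Your proof is correct and follows essentially the same approach as the paper: shift the exact scheme solution by $\xi$ (resp.\ $\xi(1-L(s-t))$), exploit that the discrete difference operators annihilate constants so only the $f$-term survives, and use the monotonicity/Lipschitz property of $f$ to obtain a super- (resp.\ sub-) solution; then treat the boundary lines via $S_b$ in the same way and conclude by the comparison results of Propositions \ref{comp thm bound} and \ref{prop comp thm scheme}. Your treatment of the boundary is slightly more explicit than the paper's, but the argument is the same.
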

\begin{proof}
  Let $v = S^a_\delta(s,t,\varphi), w = v+\xi-L(s-t)\xi$. Since $v$ satisfies \eqref{systu}, we have, for $k \in \Z$ and $0 < l < N^a_\delta$,
  \begin{align*}
    &S(k,l,w_{k,l},\nabla^a_\delta w_{k,l},\nabla^a_{+,\delta}w_{k,l},\Delta^a_\delta w_{k,l},\varphi+\xi) = -L(s-t)\xi \\ &+ (s-t)\left(f(t,x_k,v_{k,l},\nabla^a_\delta v_{k,l})-f(t,x_k,w_{k,l},\nabla^a_\delta v_{k,l})\right).
  \end{align*}
  Since $f$ is non-increasing in its third variable and Lipschitz continuous, we get:
  \begin{align*}
    S(k,l,w_{k,l},\nabla^a_\delta w_{k,l},\nabla^a_{+,\delta}w_{k,l},\Delta^a_\delta w_{k,l},\varphi+\xi) \le 0.
  \end{align*}
  The same computation with $l=0$ or $l=N^a_\delta$ and $S_b$ instead of $S$ gives
  \begin{align*}
    S_b(k,l,w_{k,l},\nabla_\delta w_{k,l},\nabla_{+,\delta}w_{k,l},\Delta_{\delta}w_{k,l}, \varphi+\xi) \le 0,
  \end{align*}
  and the comparison theorem given in Proposition \ref{comp thm bound} gives $w_{k,l}\le S^a_\delta(s,t,\varphi+\xi)_{k,l}$ for $k \in \Z$ and $l \in \{0, N^a_\delta\}$.\\
  The comparison result from Proposition \ref{prop comp thm scheme} gives the first inequality of the lemma. The second one is proved similarly.
  \eproof
\end{proof}
\begin{Proposition}[Consistency] \label{consistent scheme}
  Let $\varphi \in \cC^\infty_b([0,T]\times\R\times\R,\R), (t,x,p) \in [0,T)\times\R\times(0,1)$. We have, with the notation in \eqref{pde num}:
  \begin{align*}
    &\left| \frac{1}{t_{j+1}-t_j}\widehat{\mathfrak{S}}(\pi,\delta,t_j,x_k,q,\varphi(t_j,x_k,q) + \xi,\varphi+\xi) \right. \\
    &\hspace{1cm} \left. - \sup_{a \in K} \left[- D^a\varphi(t,x,p) + F(t,x,\varphi(t,x,p),\nabla^a\varphi(t,x,p),\Delta^a\varphi(t,x,p)) \right] \right| \to 0,
      % + \partial_t \varphi(t,x,p) - \cF_n(t,x,p,\varphi(t,x,p),D\varphi(t,x,p),D^2\varphi(t,x,p)) \right| \to 0,
  \end{align*}
  as $\delta, |\pi| \to 0$ satisfying %\eqref{CFL1}-\eqref{CFL2}-\eqref{CFL3}, 
  \eqref{CFL1}--\eqref{CFL3}, 
  $\pi\times\delta\Z\times[0,1] \ni (t_j,x_k,q) \to (t,x,p), \xi \to 0$.
  %\textcolor{red}{clarifier fin de la preuve}
\end{Proposition}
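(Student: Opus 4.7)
My plan is to prove this by a standard three-step reduction tailored to the building blocks provided by the preceding lemmata. First, absorb the constant $\xi$ via Lemma \ref{term phi}; second, replace the implicit, interpolated scheme value $\cI^a_\delta(S^a_\delta(\cdots))$ by the explicit smooth expression $S^{a,\pm}_{\delta,\epsilon}$ using Lemma \ref{lem interp}; third, Taylor-expand that explicit expression at $(t,x,p)$ to recover the continuous PDE operator, exploiting the convergence $\mathfrak{a}(a,\delta)\to a$ from Lemma \ref{diffDelta} and $f_\epsilon\to f$ from the mollification. The $\min_{a\in K}$ in the scheme turns into a $\sup_{a\in K}$ after dividing through by $h$.

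Concretely, since $\cI^a_\delta$ is linear and preserves constants, Lemma \ref{term phi} yields
\[
\cI^a_\delta\bigl(S^a_\delta(t_{j+1},t_j,\varphi(t_{j+1},\cdot)+\xi)\bigr)(x_k,q) = \cI^a_\delta\bigl(S^a_\delta(t_{j+1},t_j,\varphi(t_{j+1},\cdot))\bigr)(x_k,q) + \xi + O(h|\xi|),
\]
uniformly in $a\in K$; the $\xi$ in the $y$-slot of $\widehat{\mathfrak{S}}$ cancels and, after division by $h$, only an $O(|\xi|)$ error remains. Next, choosing $\epsilon\to 0$ with $\delta/\epsilon^2\to 0$ (possible thanks to \eqref{CFL3}) and noting that eventually $q\in[p_1,p_{N^a_\delta-1}]$ for every $a\in K$, Lemma \ref{lem interp} lets me replace $\cI^a_\delta(S^a_\delta(\cdots))$ by $S^{a,\pm}_{\delta,\epsilon}$ at the cost of an $o(h)$ term.

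Inserting the explicit form \eqref{def S a pm} and recalling $\mathfrak{p}^a(q)=q-\mu\mathfrak{a}(a,\delta)h/\sigma$, the scheme quotient becomes
\[
\frac{\varphi(t_j,x_k,q)-\varphi(t_{j+1},x_k,\mathfrak{p}^a(q))}{h} + F_\epsilon\bigl(t_j,x_k,\varphi(t_{j+1},x_k,\mathfrak{p}^a(q)),\nabla^{\mathfrak{a}(a,\delta)}\varphi,\Delta^{\mathfrak{a}(a,\delta)}\varphi\bigr) + o(1).
\]
A first-order Taylor expansion of $\varphi\in\cC^\infty_b$ at $(t,x,p)$, together with $(t_j,x_k,q)\to(t,x,p)$, $h\to 0$ and $\mathfrak{a}(a,\delta)\to a$, gives
\[
\frac{\varphi(t_j,x_k,q)-\varphi(t_{j+1},x_k,\mathfrak{p}^a(q))}{h}\longrightarrow -\partial_t\varphi(t,x,p)+\mu\tfrac{a}{\sigma}\partial_p\varphi(t,x,p)=-D^a\varphi(t,x,p),
\]
while $|f_\epsilon-f|_\infty\le L\epsilon\to 0$ and the continuous dependence of $\nabla^{(\cdot)}\varphi,\Delta^{(\cdot)}\varphi$ on the control yield $F_\epsilon(\cdots)\to F(t,x,\varphi(t,x,p),\nabla^a\varphi(t,x,p),\Delta^a\varphi(t,x,p))$. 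All convergences are uniform in $a\in K$: the control enters only through $\mathfrak{a}(a,\delta)$, which is uniformly bounded by $n$ and uniformly bounded away from $0$ by Lemma \ref{diffDelta}.

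To finish, I use the identity $y-\min_{a\in K}A_a=\max_{a\in K}(y-A_a)$, turning the scheme quotient into a maximum of the quantities analysed above; uniform convergence in $a$ together with compactness of $K$ permits the interchange of limit and maximum, producing the announced
\[
\sup_{a\in K}\bigl[-D^a\varphi(t,x,p)+F\bigl(t,x,\varphi(t,x,p),\nabla^a\varphi(t,x,p),\Delta^a\varphi(t,x,p)\bigr)\bigr].
\]
The main technical obstacle, already absorbed into Lemmas \ref{lem interp} and \ref{lemsupersolum}, is the simultaneous calibration of the mollification scale $\epsilon$: it must tend to $0$ (so that $F_\epsilon\to F$) but slowly enough that the linear interpolation error of order $\delta^2/\epsilon^2$ also vanishes faster than $h$. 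This is feasible precisely because \eqref{CFL3} keeps $\delta$ and $h$ comparable; a weaker CFL balance would have made this step genuinely delicate.
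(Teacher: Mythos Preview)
Your argument is correct and follows essentially the same route as the paper's own proof: first remove $\xi$ via Lemma \ref{term phi} (the paper does this too), then sandwich $\cI^a_\delta(S^a_\delta(\cdots))$ between the explicit expressions $S^{a,\pm}_{\delta,\epsilon}$ via Lemma \ref{lem interp}, and finally Taylor-expand using $\mathfrak a(a,\delta)\to a$ and $f_\epsilon\to f$. Your use of the identity $y-\min_a A_a=\max_a(y-A_a)$ is the exact version of the paper's estimate $|\inf-\sup|\le\sup|\cdot-\cdot|$, and your observation that all $o(1)$ terms are uniform in $a$ because $K$ is finite and the constants in the lemmata carry the subscript $n$ rather than $a$ is precisely what makes the interchange of limit and maximum legitimate.
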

\begin{proof}
  Let $\varphi,j,k,p,l$ as in the statement of the Proposition.\\
  Without loss of generality, we can consider $\pi,\delta,t_j,x_k,q$ such that, for all $a \in K$:
  \begin{align*}
    0 \le \mathfrak p^a(q) \le 1,
  \end{align*}
  Since $\varphi$ is smooth and $(t_k,x_j,p_l) \to (t,x,p)$, we have
  \begin{align*}|&\sup_{a \in K} \left[ - D^a\varphi(t,x,p) + F(t,x,p,\varphi(t,x,p),\nabla^a\varphi(t,x,p),\Delta^a\varphi(t,x,p))\right]\\
                                &- \sup_{a \in K}\left[- D^a\varphi(t_j,x_k,p_l) + F(t_j,x_k,p_l,\varphi(t_j,x_k,p_l),\nabla^a\varphi(t_j,x_k,p_l),\Delta^a\varphi(t_j,x_k,p_l))) \right] | \to 0.
  \end{align*}
  Thanks to Lemma \ref{term phi}, it suffices to prove:
  \begin{align*}
    &\left| \frac{1}{t_{j+1}-t_j}\widehat{\mathfrak{S}}(\pi,\delta,t_j,x_k,q,\varphi(t_j,x_k,q),\varphi) \right. \\
    &\hspace{0.5cm} \left. - \sup_{a \in K} \left( -D^a \varphi(t_j,x_k,q) + F(t_j,x_k,\varphi(t_j,x_k,q),\nabla^a\varphi(t_j,x_k,q),\Delta^a\varphi(t_j,x_k,q))\right) \right| \to 0,
  \end{align*} as $|\pi| \to 0$ and $\pi\times\delta\Z\times(0,1) \ni (t_j,x_k,q) \to (t,x,p)$.\\
  Let $\epsilon > 0$ such that $\epsilon \to 0$ and $\frac{\delta}{\epsilon^2} \to 0$.\\
  Using $|\inf - \sup| \le \sup | \cdot - \cdot |$, adding
  \begin{align*}
    \pm &\left(\frac{1}{t_{j+1}-t_j}\varphi(t_{j+1},x_k,\mathfrak p^a(q))\right.\\ &\left. + F_\epsilon(t_j,x_k,\varphi(t_{j+1},x_k,\mathfrak p^a(q)),\nabla^{\mathfrak{a}(a,\delta)}\varphi(t_{j+1},x_k,\mathfrak p^a(q)),\Delta^{\mathfrak{a}(a,\delta)}\varphi(t_{j+1},x_k,\mathfrak p^a(q)))\right)
  \end{align*}
  and using Lemma \ref{diffDelta}, it is enough to show that, for all $a \in K$,
  \begin{align}\label{last eq}
    &\left|\frac{1}{t_{j+1}-t_j}\left(\varphi(t_{j+1},x_k,\mathfrak p^a(q))-\cI^a_\delta(S^a_\delta(t_{j+1},t_j,\varphi(t_{j+1},\cdot)))(x_k,q)\right) \right. \\
    \nonumber &\left. - F_\epsilon(t_j,x_k,\varphi(t_{j+1},x_k,\mathfrak p^a(q)),\nabla^{\mathfrak{a}(a,\delta)}\varphi(t_{j+1},x_k,\mathfrak p^a(q)),\Delta^{\mathfrak{a}(a,\delta)}\varphi(t_{j+1},x_k,\mathfrak p^a(q)))\right|\to 0.
  \end{align}
  The proof is concluded using the equality $|\cdot| = \max(\cdot,-\cdot)$ and the two following inequalities, obtained by Lemma \ref{lem interp}, and by definition \eqref{supersolinside}-\eqref{subsolinside} of $S^{a,\pm}_{\delta,\epsilon}$:
  \begin{align*}
    &\frac{1}{t_{j+1}-t_j}\left(\varphi(t_{j+1},x_k,\mathfrak p^a(q))-\cI^a_\delta(S^a_\delta(t_{j+1},t_j,\varphi(t_{j+1},\cdot)))(x_k,q)\right)  \\
    & - F_\epsilon(t_j,x_k,\varphi(t_{j+1},x_k,\mathfrak p^a(q)),\nabla^{\mathfrak{a}(a,\delta)}\varphi(t_{j+1},x_k,\mathfrak p^a(q)),\Delta^{\mathfrak{a}(a,\delta)}\varphi(t_{j+1},x_k,\mathfrak p^a(q))) \\
    &\le \frac{1}{t_{j+1}-t_j}(\varphi(t_{j+1},x_k,\mathfrak p^a(q))-S^{a,-}_{\delta,\epsilon}(t_{j+1},t_j,\varphi(t_{j+1},\cdot)))(x_k,q) + o(t_{j+1}-t_j)) \\
    &- F_\epsilon(t_j,x_k,\varphi(t_{j+1},x_k,\mathfrak p^a(q)),\nabla^{\mathfrak{a}(a,\delta)}\varphi(t_{j+1},x_k,\mathfrak p^a(q)),\Delta^{\mathfrak{a}(a,\delta)}\varphi(t_{j+1},x_k,\mathfrak p^a(q))),
  \end{align*}
  and
  \begin{align*}
    &F_\epsilon(t_j,x_k,\varphi(t_{j+1},x_k,\mathfrak p^a(q)),\nabla^{\mathfrak{a}(a,\delta)}\varphi(t_{j+1},x_k,\mathfrak p^a(q)),\Delta^{\mathfrak{a}(a,\delta)}\varphi(t_{j+1},x_k,\mathfrak p^a(q)))\\
    &- \frac{1}{t_{j+1}-t_j}\left(\varphi(t_{j+1},x_k,\mathfrak p^a(q))-\cI^a_\delta(S^a_\delta(t_{j+1},t_j,\varphi(t_{j+1},\cdot)))(x_k,q)\right) \\
    &\le F_\epsilon(t_j,x_k,\varphi(t_{j+1},x_k,\mathfrak p^a(q)),\nabla^{\mathfrak{a}(a,\delta)}\varphi(t_{j+1},x_k,\mathfrak p^a(q)),\Delta^{\mathfrak{a}(a,\delta)}\varphi(t_{j+1},x_k,\mathfrak p^a(q))) \\
    &- \frac{1}{t_{j+1}-t_j}(\varphi(t_{j+1},x_k,\mathfrak p^a(q))-S^{a,+}_{\delta,\epsilon}(t_{j+1},t_j,\varphi(t_{j+1},\cdot)))(x_k,q) + o(t_{j+1}-t_j)).
  \end{align*}
  \eproof
\end{proof}

%%% Local Variables:
%%% TeX-master: "main"
%%% End:

%\newpage
%!TEX root = main.tex	
\section{Numerical studies}
\label{se-numerics}

We now present a numerical application of the previous algorithm.\\

\subsection{Model}
We keep the notation of the previous section: the process $X$ is a Brownian motion with drift. In this numerical example, the drift of the process $Y$ is given by the following functions:
\begin{align*}
  f_1(t,x,y,z) &:= -\sigma^{-1} \mu z, \mbox{ and } \\
  f_2(t,x,y,z) &:= -\sigma^{-1} \mu z + R(y- \sigma^{-1} z)^-,
\end{align*}
where, for $x \in \R$, $x^-=\max(-x,0)$ denotes the negative part of $x$. %, that is $x^- = 0$ if $x \ge 0$ and $x^- = -x$ otherwise.\\
The function $f_1$ corresponds to pricing in a linear complete Black \& Scholes market. It is well-known that there are explicit formulae for the quantile hedging price for a vanilla put or call, see \cite{FL99}.\\
In both cases, we compute the quantile hedging price of a put option with strike $K = 30$ and maturity $T = 1$,
i.e.\ $g(x) = \max(K-\exp(x),0)$.\\
The parameters of $X$ are $\sigma = 0.25$ and $\mu = 0.01875$
(this corresponds to a parameter $b = 0.05$ in the dynamics of the associated geometric Brownian motion, where $\mu = b-\sigma^2/2$).\\
In the rest of this section, we present some numerical experiments.
  First, using the non-linear driver $f_2$, we observe the convergence of $v_{\pi,\delta}$ towards $v_n$ for a fixed discrete control set, and we estimate the rate of convergence with respect to $\delta$. %we fix a discrete set of parameters and we observe, for various values of $(t,x,p)$, the convergence of $v_{\pi,\delta}$ towards the target $v_n$, where the discretisation parameters satisfy \eqref{CFL1} to \eqref{CFL3}. When convergence is observed, we estimate the convergence rate of $v_{\pi,\delta}$ to $v_n$ with respect to $\delta$.\\
  Second, we show that the conditions \eqref{CFL1} to \eqref{CFL3} are not only theoretically important, but also numerically.
  Last, we use the fact that the analytical solution to the quantile hedging problem with driver $f_1$ is known (see \cite{FL99}) to assess the convergence (order) of the scheme more precisely. %In particular, the optimal control can be computed in this case for each $(t,x,p)$. 
  We observe that a judicious choice of control discretisation, time and space discretisation leads to convergence of $v_{\pi,\delta}$ to $v$. However, the unboundedness of the optimal control as $t \to T$ leads to expensive computations.

The scheme obtained in the previous section deals with an infinite domain in the $x$ variable. In practice, one needs to consider a bounded interval $[B_1,B_2]$ and to add some boundary conditions. Here, we choose $B_1 = \log(10)$ and $B_2 = \log(45)$, and the approximate Dirichlet boundary values for $v(t,B_i,p)$ are the limits $\lim v_{th}(t,x,p)$ as $x \to 0$ or $x \to +\infty$, where $v_{th}$ is the analytical solution obtained in \cite{FL99} for the linear driver $f_1$. Since the non-linearity in $f_2$ is small
 for realistic parameters (we choose $R = 0.05$ in our tests), it is expected that the prices are close (see also \cite{GP15}).
 Furthermore, we will consider values obtained for points $(t,x,p)$ with $x$ far from to the boundary. In this situation, the influence of our choice of boundary condition should be small, as noticed for example in Proposition \ref{diffsupersubsol}. This was studied more systematically, for example, in \cite{BDR95}.

\subsection{Convergence towards $v_n$ with the non-linear driver}

In this section, we consider the non-linear driver $f_2$ defined above, where there is no known analytical expression for the quantile heding price.
 %in this situation, 
 We now fix a discrete control set, and we compute the value function $v_{\pi,\delta}$ for various discretisation parameters $\pi,\delta$ satisfying \eqref{CFL1} to \eqref{CFL3}.
We consider the following control set with $22$ controls:
\begin{align}
\label{contr-ex}
  &\left([-2,2] \cap \frac{\Z - \{0\}}{2}\right) \bigcup \left([-3,3] \cap \frac{\Z - \{0\}}{3}\right)\\ &= \left\{-2, -1.5, \dots, 1.5, 2\right\} \cup \left\{-3, -3+\frac13, \dots, 3-\frac13,3\right\},
  \nonumber
\end{align}
and $\delta \in \{0.05, 0.03, 0.005\}$.
For a fixed $\delta$, we set $h = C \delta$ with $C:= \min(1, 2\frac{\theta}{L}, \frac{1}{|\sigma^2-\mu|})$, $\theta = \frac15$ and $L = |\mu| + R$, so that \eqref{CFL1} to \eqref{CFL3} are satisfied.\\
We get the graphs shown in Figure \ref{fig:1} for the function $p \mapsto v_{\pi,\delta}(t,x,p)$, where $(t,x) = (0,30), (0,37)$.

\begin{figure}[!h]
  \centering
  \includegraphics[scale=0.3]{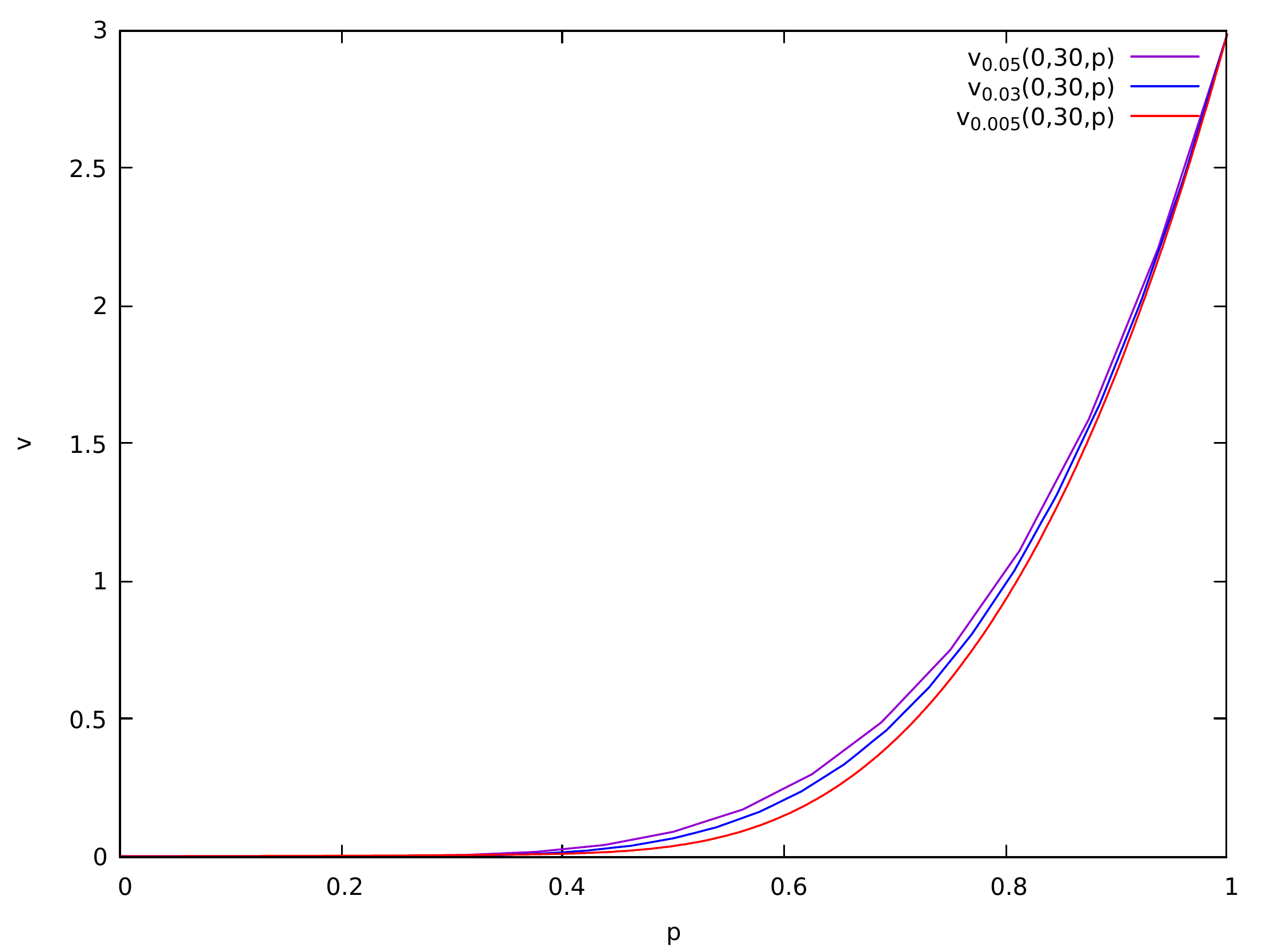}
  \includegraphics[scale=0.3]{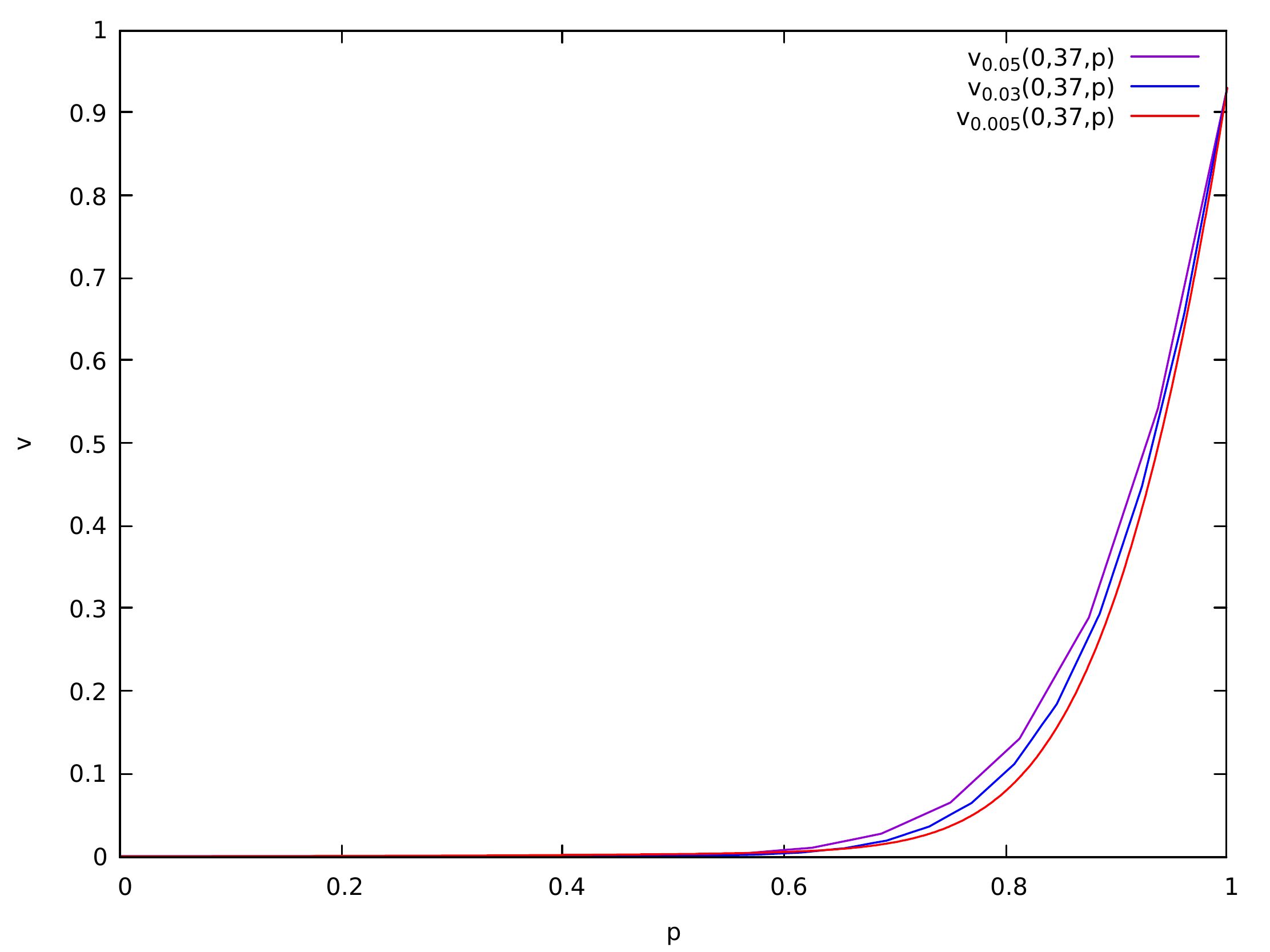}
  \caption{The functions $v_{\pi,\delta}(t,x,\cdot)$, $t = 0$ and $x \in \{30,37\}$.}
  \label{fig:1}
\end{figure}

We observe, while not proved, that the numerical approximation always gives an upper bound for $v_n$, which is itself greater than the quantile hedging price $v$. This is a practically useful feature of this numerical method.

The scheme preserves a key feature of the exact solution, namely that the quantile hedging price is $0$ exactly for $p$ below a certain threshold, depending on $t,x$. This is a consequence of the diffusion stencil $\Delta^a_\delta$ respecting the degeneracy of the diffusion operator $\Delta ^a$ in \eqref{diffOp}, which acts only in direction $(1,p)$, and by the specific construction of the meshes.

In Table \ref{tab:1}, we show some discretisation parameters obtained by this construction with selected values of $\delta$. Here, $N_x$ is the number of points for the $x$-variable, $N_c$ the number of controls, and $N_p$ the total number of points for the $p$ variable (i.e., for all meshes combined).
Moreover, $a_{\max}$ (resp.\ $a_{\min}$) is the greatest (resp.\ smallest) control obtained,
using the modification of the control set (\ref{contr-ex}) as described in Section \ref{sub-se de finite diff scheme}.
Also recall that different meshes are applied in each step of the PCPT algorithm for different $a$, hence we also report
$N_p(a_{\max})$ (resp.\ $N_p(a_{\min})$), the number of points for the $p$ variable for the control $a_{\max}$ (resp.\ $a_{\min}$).
With our choice of parameters, we have $h = \delta$, so the number of time steps is always $\frac 1 \delta$.

\begin{table}[!h]
  \begin{center}
    \begin{tabular}{|c|c|c|c|c|c|c|c|c|c|} 
      \hline
      $\delta$ & $N_t$ &$N_x$ & $N_c$ & $N_p$ & $a_{\max}$ & $N_p(a_{\max})$ & $a_{\min}$ & $N_p(a_{\min})$ & time (in seconds)\\
      \hline
      $0.1$ & $10$ & $18$ & $12$ & $15$ & $2.5$ & $2$ & $0.3125$ & $9$ & $0.04$ \\
      \hline
      $0.05$ & $20$ & $33$ & $15$ & $38$ & $2.5$ & $3$ & $0.3125$ & $17$ & $0.15$ \\
      \hline
      $0.01$ & $100$ & $153$ & $22$ & $255$ & $2.77$ & $10$ & $0.33$ & $76$ & $166$ \\
      \hline
      $0.005$ & $200$ & $304$ & $22$ & $585$ & $2.94$ & $18$ & $0.33$ & $151$ & $4608$ \\
      \hline
    \end{tabular}
    \caption{Parameters for selected values of $\delta$.}
    \label{tab:1}
  \end{center}
\end{table}

\subsection{CFL conditions}

Using the same discrete control set as above, we now fix $h = 0.1$ and compute $v_{\pi,\delta}$ for $\delta$ chosen as above. The conditions \eqref{CFL1} to \eqref{CFL3} are then not satisfied anymore. \\
First, while $\pi$ is coarse, we observe that the computational time to get $v_{\pi,\delta}(t_j,\cdot)$ from $v_{\pi,\delta}(t_{j+1},\cdot)$ is larger. In fact, since the conditions are not satisfied anymore, the results of Proposition \ref{uniqueness picard scheme} are not valid anymore. While convergence to a fixed point is still observed, many more Picard iterations are needed. For example, for $\delta = 0.005$ and $h = 0.1$, we observe that $3000$ Picard iterations are needed, while in the example where \eqref{CFL1} to \eqref{CFL3} were satisfied, $250$ iterations sufficed to obtain convergence (with a tolerance parameter of $10^{-5}$). \\
The second observation is that, while we observe convergence to some limit (at least with our choice of $\delta$: it might start to diverge for smaller $\delta$, as seen for the case $\delta$ fixed and varying $h$ below), it is not the limit observed in the previous subsection. We show in Figure \ref{fig:2} the difference between the solution obtained with $\delta = 0.005, h = 0.1$, and $\delta = 0.005, h = C\delta$. When the conditions are not met, we are dealing with a non-monotone scheme, and convergence to the unique viscosity solution of the PDE, which equals the value function of the stochastic target problem, is not guaranteed.

Conversely, when $\delta$ is fixed and we vary $h$, the situation is different. There is no issue with the Picard iterations, as the conditions needed for Proposition \ref{uniqueness picard scheme} are still satisfied. The issue here is that the consistency hypothesis is not satisfied, and convergence is not observed: when $h$ is too close to $0$, the value $v_{\pi,\delta}$ gets bigger, as seen in Figure \ref{fig:3}. Here, $\delta$ is fixed to $0.05$ and $h$ goes from $0.025$ to $1.2 \times 10^{-5}$.

\begin{figure}[!h]
  \begin{minipage}{.5\textwidth}
    \includegraphics[scale=0.33]{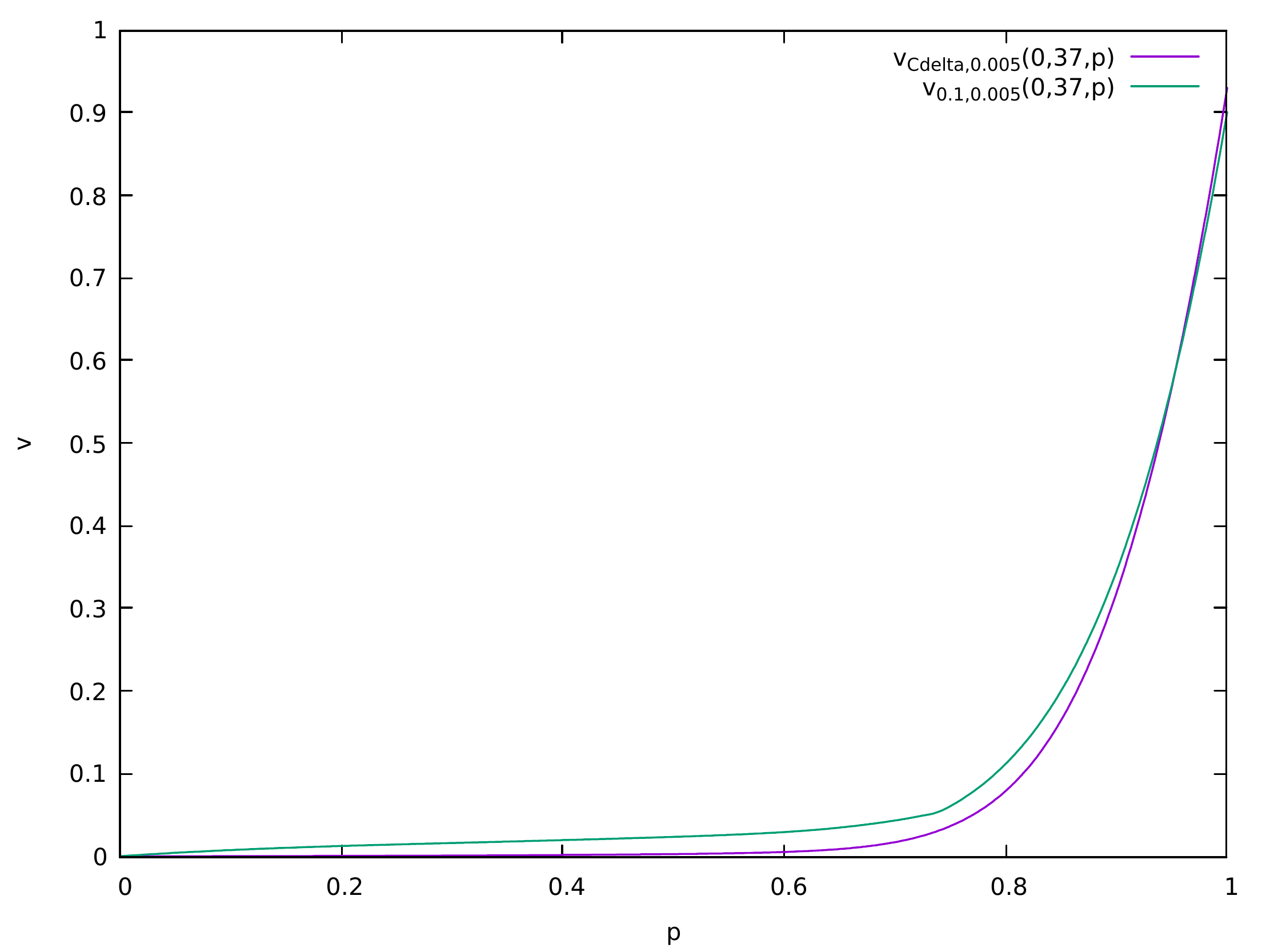}
    \caption{Comparison of $v_{0.1,0.005}(0,37,\cdot)$ and $v_{C\delta,0.005}(0,37,\cdot)$.}
    \label{fig:2}
  \end{minipage}%
  \begin{minipage}{.5\textwidth}
    \includegraphics[scale=0.33]{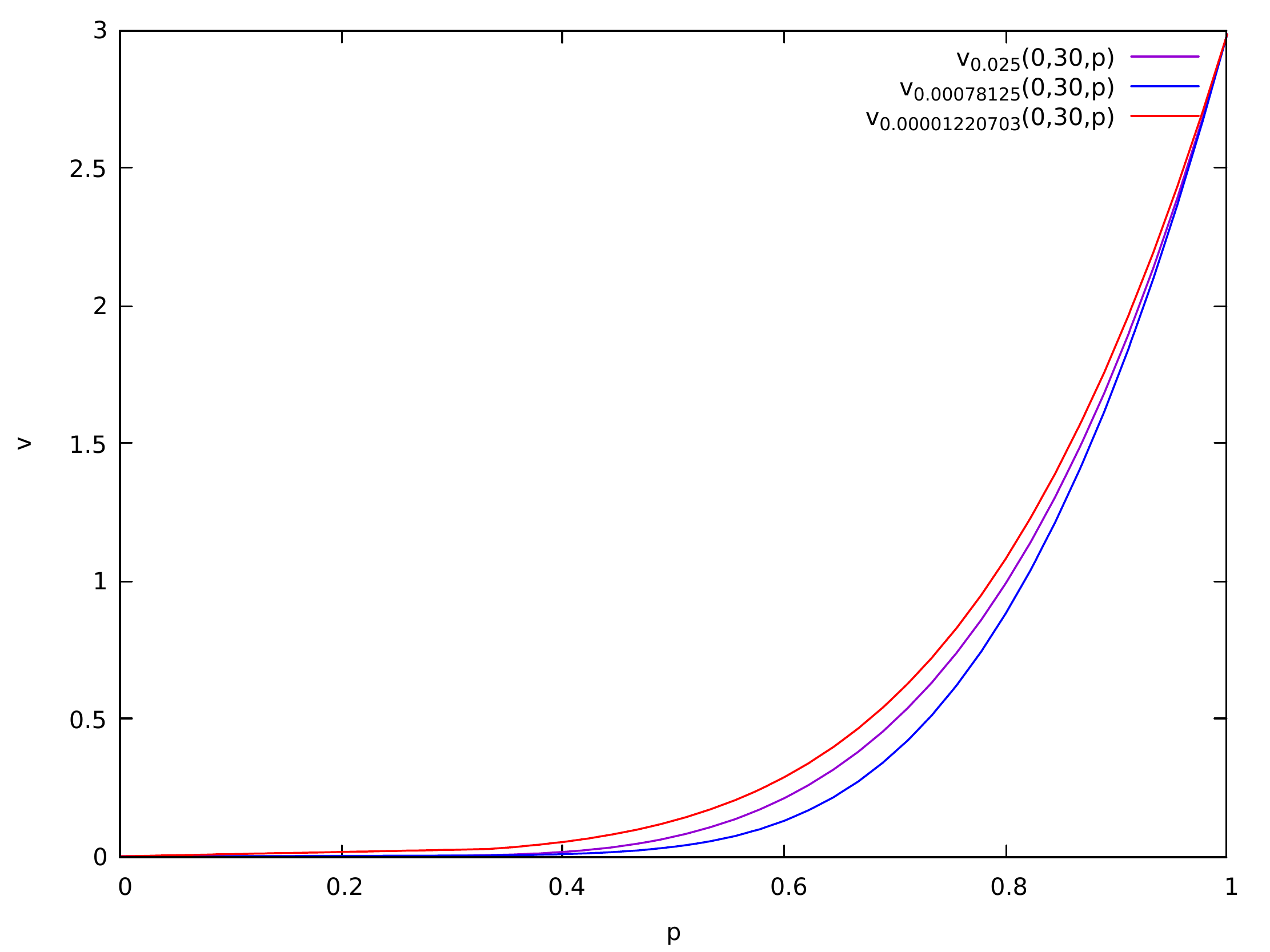}
    \caption{Comparison of $v_{h,0.05}(0,30,\cdot)$ for some $h$.}
    \label{fig:3}
  \end{minipage}
\end{figure}

% \begin{figure}[!h]
%   \centering
%   \includegraphics[scale=0.3]{hlowerandlower-n3-deltaissigmaover15}
%   \caption{Graphs of $v_h(0,37,p)$ with $\delta=0.05$ fixed.}
%   \label{fig:3}
% \end{figure}

\subsection{Convergence to the analytical solution with linear driver}

We now consider the linear driver $f_1$. In that case, the quantile hedging price can be found explicitly (see \cite{FL99}). For each $(t,x,p)$, the optimal control can also be computed explicitly:
\begin{align*}
  \alpha(t,x,p) = \frac{1}{\sqrt{2\pi(T-t)}} \exp \left( - \frac{N^{-1}(p)^2}{2} \right),
\end{align*}
where $N$ is the cumulative distribution function of the standard normal distribution.\\
In particular, if the uniform grid $\pi = \{0, h, \dots, \kappa h = T\}$ is fixed, one obtains that the optimal controls are contained in the interval $[0, \frac{1}{\sqrt{2\pi h}}]$.\\
On the other hand, if $\delta$ is fixed, one sees from \eqref{def a a delta} that the greatest control one can reach (with a non-trivial grid for the $p$ variable) is $\frac{\sigma}{2\delta}$.\\
We set our parameters as follows: we first choose $n \ge 2$, we pick $\delta$ such that $\frac{\sigma}{n\delta} \ge \frac{1}{\sqrt{2\pi C\delta}}$, and we set $h = C\delta$. It is easy to see that $\delta$ is proportional to $n^{-2}$.\\
We now pick the controls in $\{ \frac{\sigma}{m\delta}, m \ge n \}$ to obtain $K_n := \{ a_i := \frac{\sigma}{m_i \delta}, i = 1, \dots, M\}$ as follows: 
%We now construct the whole control set $K_n = \{ a_i := \frac{\sigma}{m_i\delta} \}_{i=1}^M$, with $n = m_1 < \dots < m_M$, such that $|a-b| \le n^{-1}$ for all $a,b \in K_n$.\\
let $m_1 = n$ so that $a_1 = a^n_{\max} = \frac{\sigma}{n\delta}$. If $m_1,\dots,m_i$ are constructed, we set $m_{i+1} = \inf \{ m \ge m_i,  \frac{\sigma}{m_i\delta} - \frac{\sigma}{m\delta} \ge \frac{1}{n} \}$ and $a_{i+1} = \frac{\sigma}{m_{i+1}\delta}$. If $m_{i+1} < n^{-1}$, then we set $M = i+1$ and we are done.\\
In Figure \ref{fig:4}, we observe convergence towards the quantile hedging price. 
Moreover, Figure \ref{fig:5} demonstrates that the pointwise error, here for $(t,x,p) = (0,30,0.8)$,
has a convergence rate of about $1.3$ with respect to $n$ in the construction described previously.
Last, in Table \ref{tab:2}, we report the values of $\delta$ and $a_{\max}$ obtained for different choices of $n$.

\begin{figure}[!h]
  \begin{minipage}{.5\textwidth}
    \includegraphics[scale=0.33]{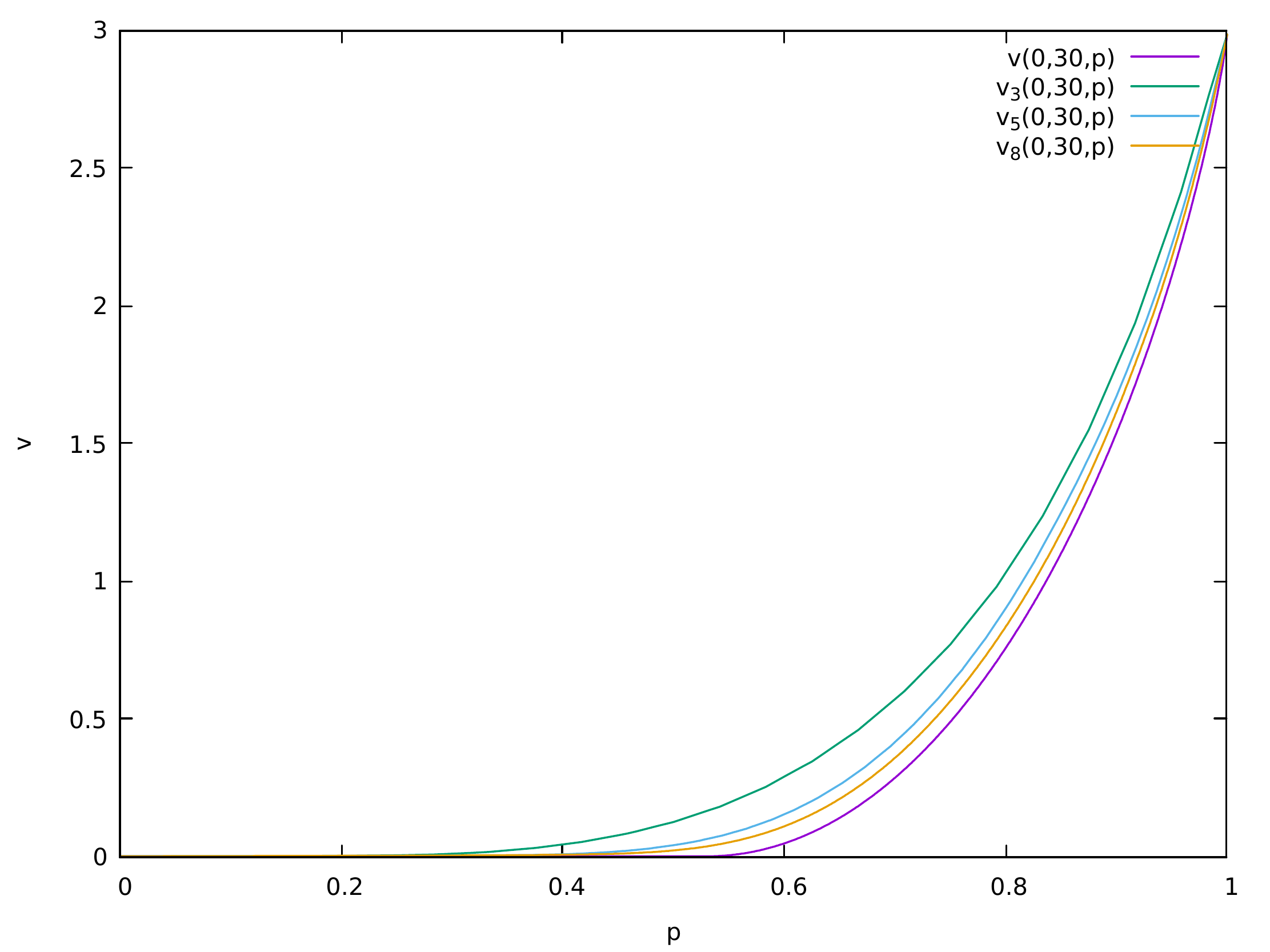}
    \caption{$v_n(0,30,\cdot)$ and $v(0,30,\cdot)$ \\ for $n = 3, 5, 8$.}
    \label{fig:4}
  \end{minipage}%
  \begin{minipage}{.5\textwidth}
    \includegraphics[scale=0.33]{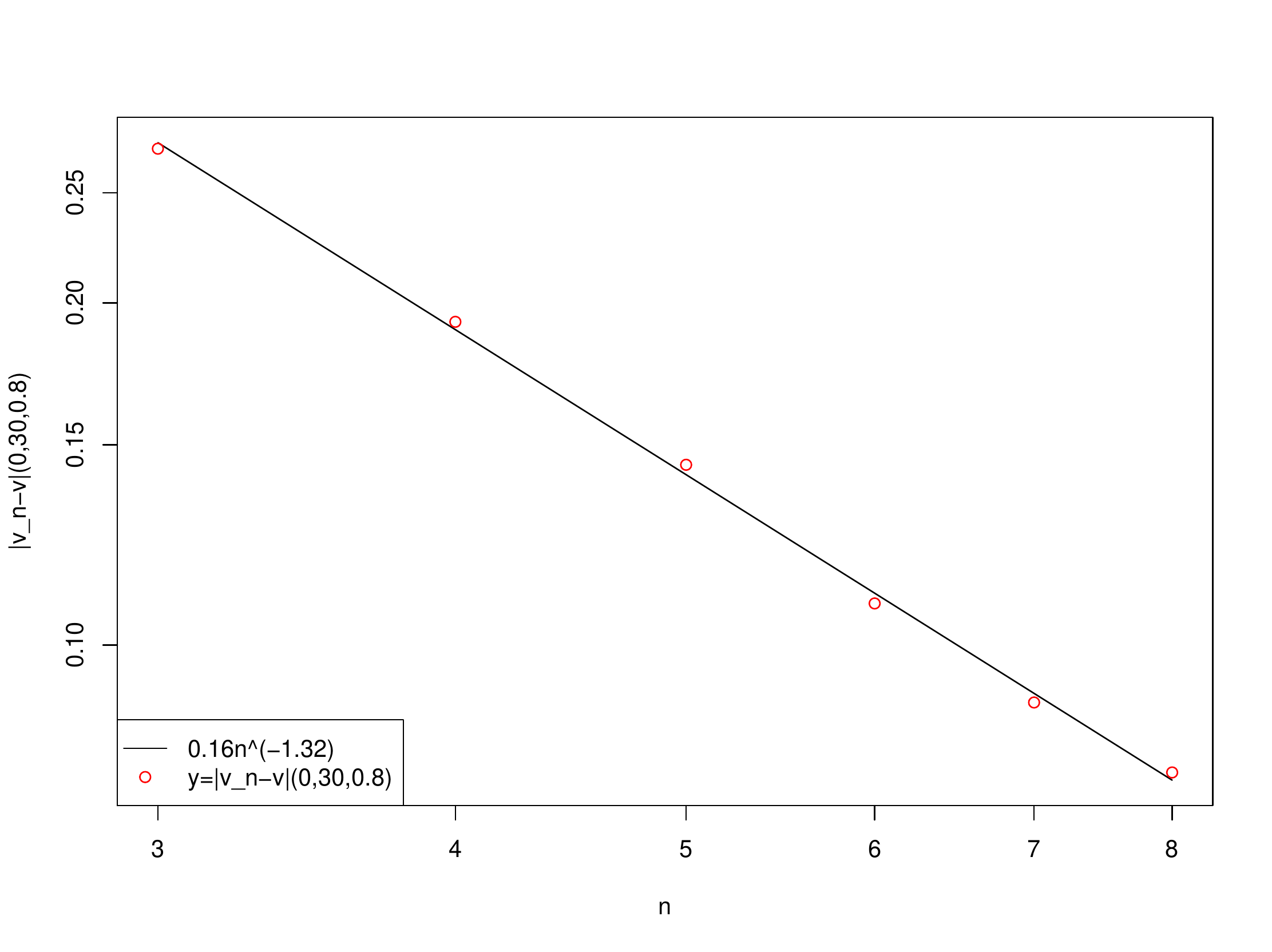}
    \caption{Convergence rate estimation of $v_n(0,30,0.8)$ to $v(0,30,0.8)$ and log-log plot.}
    \label{fig:5}
  \end{minipage}
\end{figure}

% \begin{figure}[!h]
%   \centering
%   \includegraphics[scale=0.5]{x30-lin}
%   \caption{$v_n(0,30,\cdot)$ and $v(0,30,\cdot)$ for $n \in \{3,5,8\}$.}
%   \label{fig:4}
% \end{figure}

% \begin{figure}[!h]
%   \centering
%   \includegraphics[scale=0.5]{30-08-NEW}
%   \caption{Convergence rate estimation of $v_n(0,30,0.8)$ to $v(0,30,0.8)$ and log-log plot.}
%   \label{fig:5}
% \end{figure}

\begin{table}[!h]
  \begin{center}
    \begin{tabular}{|c|c|c|c|c|c|} 
      \hline
      $n$ & $\delta$ & $a_{\max}$ & $N_x$ & $N_c$ & $N_p$\\
      \hline
      $3$ & $0.04$ & $1.91$ & $37$ & $5$ & $33$\\
      \hline
      $5$ & $0.01$ & $3.18$ & $97$ & $12$ & $244$\\
      \hline
      $7$ & $0.006$ & $5.09$ & $248$ & $26$ & $1138$\\
      \hline
    \end{tabular}
    \caption{Discretisation parameters for selected values of $n$.}
    \label{tab:2}
  \end{center}
\end{table}

\section{Conclusions}

We have introduced semi-discrete and discrete schemes for the quantile hedging problem, proven their convergence, and illustrated their behaviour in a numerical test.

The scheme, based on piecewise constant policy time-stepping, has the attractive feature that semi-linear PDEs for individual controls can be solved independently on adapted meshes. In the example of the Black-Scholes dynamics this had the effect that in spite of the degeneracy of the diffusion operator it was possible to construct on each mesh a local scheme, i.e.\ one where only neighbouring points are involved in the discretisation. This does not contradict known results on the necessity of non-local stencils for monotone consistent schemes in this degenerate situation (see e.g.\ \cite{reisinger2018non}), because of the superposition of different highly anisotropic meshes to arrive at a scheme which is consistent overall.

A more accurate scheme could be constructed by exploiting higher order, limited interpolation in the $p$-variable, such as in \cite{RF16}. It should be possible to deduce  convergence from the results of this paper and the properties of the interpolator using the techniques in \cite{warin2016some}.
  
%%% Local Variables:
%%% mode: latex
%%% TeX-master: "main"
%%% End:

%\newpage
% !TEX root = main.tex

\section{Appendix}
\label{se-app}

\subsection{Proofs}

\begin{proof}[Proposition \ref{diffsupersubsol}]
  For ease of notation, we set,
  \begin{align*}
    e(l) &:= e^{-4\frac{\mathfrak{a}(a,\delta)^2}{\sigma^2}C(h,\delta)l(N^a-l)},\\ 
    e^\star &:= \min_{x \in [0,N^a_\delta]} e(x) = e(N^a_\delta/2)  %e(\frac{N^a_\delta}{2}) 
    = e^{-\frac{\mathfrak{a}(a,\delta)^2}{\sigma^2}C(h,\delta)(N^a_\delta)^2} = e^{-\frac{C(h,\delta)}{\delta^2}}, \\
    B &:= |(v^1_{\cdot,0} - v^2_{\cdot,0})^+|_\infty + |(v^1_{\cdot,N^a_\delta} - v^2_{\cdot,N^a_\delta})^+|_\infty.
  \end{align*}
  By the comparison theorem, it is enough to show that
  $w \in \ell^\infty(\cG^a_\delta)$ defined by
  \begin{align}
    w_{k,l} := v^2_{k,l} + Be(l)
  \end{align}
  satisfies $w_{k,0} \ge v^1_{k,0}$, $w_{k,N^a_\delta} \ge v^1_{k,N^a_\delta}$ and $S(k,l,w_{k,l},\nabla^a_{\delta}w_{k,l},\nabla^a_{+,\delta}w_{k,l},\Delta^a_{\delta}w_{k,l},u) \ge 0$, for all $k \in \Z$ and $0 < l < N^a_\delta$.\\
  The boundary conditions are easily checked: if $k \in \Z$ and
  $l \in \{0,N_a\}$, we have, since $e(0) = e(N_a) = 1$:
  \begin{align*}
    w_{k,l} = v^2_{k,l} + B \ge v^2_{k,l} + (v^1_{k,l}-v^2_{k,l})^+ \ge v^1_{k,l}.
  \end{align*}
For $k \in \Z$, $1 \le l \le N^a_\delta - 1$, we prove
  $S(k,l,w_{k,l},\nabla^a_{\delta}w_{k,l},\nabla^a_{+,\delta}w_{k,l},\Delta^a_{\delta}w_{k,l},u) \ge 0$. By
  definition \eqref{S}, inserting
  $\pm hf(t^-, e^{k\delta}, v^2_{k,l},
  \frac{1}{2\delta}(w_{k+1,l+\sgn(a)} - w_{k-1,l-\sgn(a)}))$, since
  $S(k,l,v^2_{k,l},\nabla^a_{\delta}v^{2}_{k,l},\nabla^a_{+,\delta}v^{2}_{k,l},\Delta^a_{\delta}v^{2}_{k,l},u) \ge 0$
  and since $f$ is non-increasing with respect to its third variable
  and Lipschitz continuous with respect to its fourth variable, we
  have:
  \begin{align}
    &S(k,l,w_{k,l},\nabla^a_{\delta}w_{k,l},\nabla^a_{+,\delta}w_{k,l},\Delta^a_{\delta}w_{k,l},u) \\
    \nonumber &\ge B\left[\left(1 + \mu\frac{h}{\delta} + \sigma^2\frac{h}{\delta^2}+2\theta\right)e(l) - \left(\mu\frac{h}{\delta} + \frac{\sigma^2}{2}\frac{h}{\delta^2} + \theta\right)e(l+\sgn(a))\right.\\ 
    &\left. - \left(\frac{\sigma^2}{2}\frac{h}{\delta^2} + \theta\right)e(l-\sgn(a)) - \frac{hL}{2\delta}|e(l+\sgn(a))-e(l-\sgn(a))|\right].
  \end{align}
  We have $|e(l+\sgn(a)) - e(l-\sgn(a))| \le 1 - e^\star$, thus:
  \begin{align}
    &S(k,l,w_{k,l},\nabla^a_{\delta}w_{k,l},\nabla^a_{+,\delta}w_{k,l},\Delta^a_{\delta}w_{k,l},u) \\
    \nonumber &\ge B\left[\left(1+  \mu\frac{h}{\delta} + \sigma^2\frac{h}{\delta^2}+2\theta+\frac{hL}{2\delta}\right)e^\star - \mu\frac{h}{\delta} - \sigma^2\frac{h}{\delta^2} - 2\theta - \frac{hL}{2\delta}\right].
  \end{align}
  It is thus enough to have
  \begin{align}
    \left(1+ \mu\frac{h}{\delta} + \sigma^2\frac{h}{\delta^2}+2\theta+\frac{hL}{2\delta}\right)e^\star - \mu\frac{h}{\delta} - \sigma^2\frac{h}{\delta^2} - 2\theta - \frac{hL}{2\delta} \ge 0,
  \end{align}
  and one can easily check that this is the case with our choice of $C(h,\delta)$.\\
  It remains to prove \eqref{minconstC}. Since $\ln(1+x) > x - \frac{x^2}{2}$ for all $x > 0$, we have, by \eqref{CFL3}:
  \begin{align*}
    C(h,\delta) &> \frac{1}{\delta^2}\left(\frac{1}{\mu\frac{h}{\delta}+\sigma^2\frac{h}{\delta^2}+2\theta+\frac{hL}{2\delta}} - \frac12\frac{1}{\left(\mu\frac{h}{\delta}+\sigma^2\frac{h}{\delta^2}+2\theta+\frac{hL}{2\delta}\right)^2}\right) \\
                &= \frac{1}{(\mu + \frac{L}{2}) \delta h + \sigma^2 h + 2 \theta \delta^2} - \frac{1}{2\left( (\mu + \frac{L}{2})h + \sigma^2 \frac{h}{\delta} + 2 \delta \theta\right)^2} \\
                &\ge \frac{1}{\left((\mu+\frac{L}{2})M+2\theta M^2\right)h^2 + \sigma^2h} - \frac{1}{2\left(((1+2\theta)\mu+\frac{L}{2})h+\frac{\sigma^2}{M}\right)^2} \\
                &\ge \frac{1}{\left((\mu+\frac{L}{2})M+2\theta M^2\right)h^2 + \sigma^2h} - \frac{M^2}{2\sigma^4}.
  \end{align*}
\end{proof}

  \begin{proof}[Lemma \ref{lemsupersolum}]
    We show the result for $S^{a,+}_{\delta,\epsilon}$, the proof is similar for $S^{a,-}_{\delta,\epsilon}$.\\
    For $k \in \Z$ and $0 \le l \le N^a_\delta$, let
    \begin{align}
      \label{z}
      z_{k,l} := \varphi(x_k,\textcolor{black}{\mathfrak p^a(p_l)}) - h F_\epsilon(t,x_k,\varphi(x,\textcolor{black}{\mathfrak p^a(p_l)}),\nabla^{\mathfrak{a}(a,\delta)}\varphi(x_k,\textcolor{black}{\mathfrak p^a(p_l)}),\Delta^{\mathfrak{a}(a,\delta)}\varphi(x_k,\textcolor{black}{\mathfrak p^a(p_l)})).
    \end{align}
    It is enough to show that, for all $k \in \Z$ and $0 < l < N^a_\delta$,
    \begin{align*}
      S(k,l,z_{k,l},\nabla^a_{\delta}z_{k,l},\nabla^a_{+,\delta}z_{k,l},\Delta^a_{\delta}z_{k,l}, \varphi) \ge - C_{\varphi,n}(h,\epsilon).
    \end{align*}
    Then, since $f$ is non-increasing in its third variable, it is then easy to show that $w^+ = z+C_{\varphi,n}(h,\epsilon)$ satisfies \eqref{supersolinside}.\\
    Let $k \in \Z$ and $1 \le l \le N_a - 1$. We have, by definition
    \eqref{S}:
    \begin{align*}
      &S(k,l,z_{k,l},\nabla^a_{\delta}z_{k,l},\nabla^a_{+,\delta}z_{k,l},\Delta^a_{\delta}z_{k,l},\varphi)\\
      &= h\left( \widehat F(t,x_k,z_{k,l},\nabla^a_{\delta}z_{k,l},\nabla^a_{+,\delta}z_{k,l},\Delta^a_{\delta}z_{k,l})\right.\\
      & \left. -F_\epsilon(t,x_k,\varphi(x_k,\textcolor{black}{\mathfrak p^a(p_l)}),\nabla^{\mathfrak{a}(a,\delta)}\varphi(x_k,\textcolor{black}{\mathfrak p^a(p_l)}),\Delta^{\mathfrak{a}(a,\delta)}\varphi(x_k,\textcolor{black}{\mathfrak p^a(p_l)})) \right),
    \end{align*}
    so it is enough to show
    \begin{align*}
      \widehat F(t,x_k,&z_{k,l},\nabla^a_{\delta}z_{k,l},\nabla^a_{+,\delta}z_{k,l},\Delta^a_{\delta}z_{k,l}) \\ &- F_\epsilon(t,x_k,\varphi(x_k,\textcolor{black}{\mathfrak p^a(p_l)}),\nabla^{\mathfrak{a}(a,\delta)}\varphi(x_k,\textcolor{black}{\mathfrak p^a(p_l)}),\Delta^{\mathfrak{a}(a,\delta)}\varphi(x_k,\textcolor{black}{\mathfrak p^a(p_l)})) \ge - \frac{C_{\phi,n}(h,\epsilon)}{h}.
    \end{align*}
    We split the sum into three terms:
    \begin{align*}
      A &= \widehat F(t,x_k,z_{k,l},\nabla^a_{\delta}z_{k,l},\nabla^a_{+,\delta}z_{k,l},\Delta^a_{\delta}z_{k,l}) \\
        &\hspace{0.5cm}- \widehat F(t,x_k,\varphi(x_k,\textcolor{black}{\mathfrak p^a(p_l)}),\nabla^a_{\delta}\varphi(x_k,\textcolor{black}{\mathfrak p^a(p_l)}),\nabla^a_{+,\delta}\varphi(x_k,\textcolor{black}{\mathfrak p^a(p_l)}),\Delta^a_{\delta}\varphi(x_k,\textcolor{black}{\mathfrak p^a(p_l)})), \\
      B &= \widehat F(t,x_k,\varphi(x_k,\textcolor{black}{\mathfrak p^a(p_l)}),\nabla^a_{\delta}\varphi(x_k,\textcolor{black}{\mathfrak p^a(p_l)}),\nabla^a_{+,\delta}\varphi(x_k,\textcolor{black}{\mathfrak p^a(p_l)}),\Delta^a_{\delta}\varphi(x_k,\textcolor{black}{\mathfrak p^a(p_l)}))\\ &\hspace{0.5cm}- F(t,x_k,\varphi(x_k,\textcolor{black}{\mathfrak p^a(p_l)}),\nabla^{\mathfrak{a}(a,\delta)}\varphi(x_k,\textcolor{black}{\mathfrak p^a(p_l)}),\Delta^{\mathfrak{a}(a,\delta)}\varphi(x_k,\textcolor{black}{\mathfrak p^a(p_l)})),\\
      C &= F(t,x_k,\varphi(x_k,\textcolor{black}{\mathfrak p^a(p_l)}),\nabla^{\mathfrak{a}(a,\delta)}\varphi(x_k,\textcolor{black}{\mathfrak p^a(p_l)}),\Delta^{\mathfrak{a}(a,\delta)}\varphi(x_k,\textcolor{black}{\mathfrak p^a(p_l)}))\\ &\hspace{0.5cm} - F_\epsilon(t,x_k,\varphi(x_k,\textcolor{black}{\mathfrak p^a(p_l)}),\nabla^{\mathfrak{a}(a,\delta)}\varphi(x_k,\textcolor{black}{\mathfrak p^a(p_l)}),\Delta^{\mathfrak{a}(a,\delta)}\varphi(x_k,\textcolor{black}{\mathfrak p^a(p_l)})).
    \end{align*}
    First, we have
    \begin{align*}
      C &= f_\epsilon(t,x_k,\varphi(x_k,\textcolor{black}{\mathfrak p^a(p_l)}),\nabla^{\mathfrak{a}(a,\delta)}\varphi(x_k,\textcolor{black}{\mathfrak p^a(p_l)})) - f(t,x_k,\varphi(x_k,\textcolor{black}{\mathfrak p^a(p_l)}),\nabla^{\mathfrak{a}(a,\delta)}\varphi(x_k,\textcolor{black}{\mathfrak p^a(p_l)})) \\
      &\ge -|f_\epsilon-f|_\infty.
    \end{align*}
    Secondly, by \eqref{def H}-\eqref{def H hat}, we have,
    \begin{align*}
      B = &- \theta \frac{\delta^2}{h} \Delta^a_\delta\varphi(x_k,\textcolor{black}{\mathfrak p^a(p_l)}) \\
          &+ \mu( \nabla^{\mathfrak{a}(a,\delta)}\varphi(x_k,\textcolor{black}{\mathfrak p^a(p_l)}) - \nabla^a_{+,\delta}\varphi(x_k,\textcolor{black}{\mathfrak p^a(p_l)}) ) + \frac{\sigma^2}{2}( \Delta^{\mathfrak{a}(a,\delta)}\varphi(x_k,\textcolor{black}{\mathfrak p^a(p_l)}) - \Delta^a_{\delta}\varphi(x_k,\textcolor{black}{\mathfrak p^a(p_l)}) )\\
          &+ ( f(t,x_k,\varphi(x_k,\textcolor{black}{\mathfrak p^a(p_l)}),\sigma\nabla^{\mathfrak{a}(a,\delta)}\varphi(x_k,\textcolor{black}{\mathfrak p^a(p_l)})) - f(t,x_k,\varphi(x_k,\textcolor{black}{\mathfrak p^a(p_l)}),\sigma\nabla^a_{\delta}\varphi(x_k,\textcolor{black}{\mathfrak p^a(p_l)})) )\\
      \ge &- \theta \frac{\delta^2}{h} |\Delta^a_\delta\varphi(x_k,\textcolor{black}{\mathfrak p^a(p_l)})|\\ &-\mu| \nabla^{\mathfrak{a}(a,\delta)}\varphi(x_k,\textcolor{black}{\mathfrak p^a(p_l)}) - \nabla^a_{+,\delta}\varphi(x_k,\textcolor{black}{\mathfrak p^a(p_l)}) ) - \frac{\sigma^2}{2}| \Delta^{\mathfrak{a}(a,\delta)}\varphi(x_k,\textcolor{black}{\mathfrak p^a(p_l)}) - \Delta^a_{\delta}\varphi(x_k,\textcolor{black}{\mathfrak p^a(p_l)}) |\\
          &- \sigma L |\nabla^{\mathfrak{a}(a,\delta)}\varphi(x_k,\textcolor{black}{\mathfrak p^a(p_l)}) - \nabla^a_{\delta}\varphi(x_k,\textcolor{black}{\mathfrak p^a(p_l)})| 
    \end{align*}
    The first term goes to $0$ since $\frac{\delta^2}{h} \to 0$ as $h \to 0$ and $\Delta^a_\delta\varphi(x_k,\textcolor{black}{\mathfrak p^a(p_l)})$ is bounded. The last three terms go to $0$ by Taylor expansion and Lemma \ref{diffDelta}, since $\varphi$ is smooth.\\
    Finally, by \eqref{def H}-\eqref{def H hat}, using the linearity of the discrete differential operators and \eqref{z}, and since $f$ is Lipschitz-continuous, we have,
    \begin{align*}
      A \ge &-h\mu|\nabla^a_{+,\delta}F_\epsilon(t,x_k,\varphi(x_k,\textcolor{black}{\mathfrak p^a(p_l)}),\nabla^{\mathfrak{a}(a,\delta)}\varphi(x_k,\textcolor{black}{\mathfrak p^a(p_l)}),\Delta^{\mathfrak{a}(a,\delta)}\varphi(x_k,\textcolor{black}{\mathfrak p^a(p_l)}))|\\ &- h(\frac{\sigma^2}{2} + \theta \frac{\delta^2}{h})|\Delta^a_{\delta}F_\epsilon(t,x_k,\varphi(x_k,\textcolor{black}{\mathfrak p^a(p_l)}),\nabla^{\mathfrak{a}(a,\delta)}\varphi(x_k,\textcolor{black}{\mathfrak p^a(p_l)}),\Delta^{\mathfrak{a}(a,\delta)}\varphi(x_k,\textcolor{black}{\mathfrak p^a(p_l)}))| \\ &- Lh|F_\epsilon(t,x_k,\varphi(x_k,\textcolor{black}{\mathfrak p^a(p_l)}),\nabla^{\mathfrak{a}(a,\delta)}\varphi(x_k,\textcolor{black}{\mathfrak p^a(p_l)}),\Delta^{\mathfrak{a}(a,\delta)}\varphi(x_k,\textcolor{black}{\mathfrak p^a(p_l)}))|\\ &- L\sigma h|\nabla^a_{\delta}F_\epsilon(t,x_k,\varphi(x_k,\textcolor{black}{\mathfrak p^a(p_l)}),\nabla^{\mathfrak{a}(a,\delta)}\varphi(x_k,\textcolor{black}{\mathfrak p^a(p_l)}),\Delta^{\mathfrak{a}(a,\delta)}\varphi(x_k,\textcolor{black}{\mathfrak p^a(p_l)}))|.
    \end{align*}
    We can show that each term goes to $0$ as $h \to 0$. By example:
    \begin{align*}
      h\frac{\sigma^2}{2}&|\Delta^a_{\delta}F_\epsilon(t,x_k,\varphi(x_k,\textcolor{black}{\mathfrak p^a(p_l)}),\nabla^{\mathfrak{a}(a,\delta)}\varphi(x_k,\textcolor{black}{\mathfrak p^a(p_l)}),\Delta^{\mathfrak{a}(a,\delta)}\varphi(x_k,\textcolor{black}{\mathfrak p^a(p_l)}))|\\ &\ge h\frac{\sigma^2}{2}\mu|\Delta^a_{\delta}\nabla^{\mathfrak{a}(a,\delta)}\varphi(x_k,\textcolor{black}{\mathfrak p^a(p_l)})|\\ &- h\frac{\sigma^4}{4}|\Delta^a_{\delta}\Delta^{\mathfrak{a}(a,\delta)}\varphi(x_k,\textcolor{black}{\mathfrak p^a(p_l)})|\\ &- h\frac{\sigma^2}{2}|\Delta^a_{\delta}f_\epsilon(t,x_k,\varphi(x_k,\textcolor{black}{\mathfrak p^a(p_l)}),\sigma\nabla^a_{\delta}\varphi(x_k,\textcolor{black}{\mathfrak p^a(p_l)}))|.
    \end{align*}
    The first two terms go to $0$ with $h$ since $|\Delta^a_{\delta}\nabla^{\mathfrak{a}(a,\delta)}\varphi(x_k,\textcolor{black}{\mathfrak p^a(p_l)})|$ and $|\Delta^a_{\delta}\Delta^{\mathfrak{a}(a,\delta)}\varphi(x_k,\textcolor{black}{\mathfrak p^a(p_l)})|$ are bounded, by smoothness of $\varphi$ and by Lemma \ref{diffDelta}.\\
    We can control the derivatives of $\mathfrak f_\epsilon:(x,p) \mapsto f_\epsilon(t,x,\varphi(x,p),\sigma\varphi(x,p))$ with respect to $\epsilon$: for any $\alpha = (\alpha_1,\alpha_2) \in \mathbb N^2$, we have
    \begin{align}
      \label{control deriv epsilon} |D^\alpha \mathfrak f_\epsilon|_\infty \le \frac{C_{\varphi,\alpha}}{\epsilon^{\alpha_1+\alpha_2}},
    \end{align}
    for a constant $C_{\varphi,\alpha} > 0$. \\
    By the triangle inequality and Taylor expansion, we get:
    \begin{align*}
      -h\frac{\sigma^2}{2}&|\Delta^a_{\delta} f_\epsilon(t,x_k,\varphi(x_k,\textcolor{black}{\mathfrak p^a(p_l)}),\sigma\nabla^a_{\delta}\varphi(x_k,\textcolor{black}{\mathfrak p^a(p_l)}))|\\
      &\ge -h\frac{\sigma^2}{2}|(\Delta^a_{\delta}-\Delta^{\mathfrak{a}(a,\delta)})f_\epsilon(t,x_k,\varphi(x_k,\textcolor{black}{\mathfrak p^a(p_l)}),\sigma\nabla^a_{\delta}\varphi(x_k,\textcolor{black}{\mathfrak p^a(p_l)}))| \\
                          &- h\frac{\sigma^2}{2}|\Delta^{\mathfrak{a}(a,\delta)}f_\epsilon(t,x_k,\varphi(x_k,\textcolor{black}{\mathfrak p^a(p_l)}),\sigma\nabla^a_{\delta}\varphi(x_k,\textcolor{black}{\mathfrak p^a(p_l)}))|\\
      &\ge -C_1h\frac{\sigma^2}{2}\frac{\delta^2}{\epsilon^4} - C_2h\frac{\sigma^2}{2}\frac{1}{\epsilon^2},
    \end{align*}
    where $C_1,C_2 > 0$, and this quantity goes to $0$ by our choice of $\epsilon$.\\
    Last, the smoothness of $S^{a,\pm}_{\delta,\epsilon}$ is straightforward by \eqref{def S a pm} and the control on its second derivative with respect to $p$ is obtained by \eqref{control deriv epsilon}. \eproof
  \end{proof}

%We collect below some useful technical results.

\subsection{Representation and comparison results}

 For  $ (t,x,y)\in [0,T]\times\R^d\times\R^{+}$,
% times\R^d\times\R\times\s^d\times\R^{d}\times\R,\,
$q:=\left(\begin{matrix}q^{x}\\q^p\end{matrix}\right)\in\R^{d+1}$ and $A:=\left(\begin{matrix}A^{xx}&A^{xp}\\A^{xp^\top}&A^{pp}\end{matrix}\right)\in\mathbb{S}^{d+1}$, $A^{xx} \in \mathbb{S}^{d}$, denoting $\Xi :=(t,x,y,q,A)$, we define, recalling \eqref{eq de Fa}-\eqref{eq de za}-\eqref{eq de La},
\begin{align*}
\cF(\Xi) = \sup_{a \in \cR^\flat} F^a(\Xi) \text{ with }  F^a(\Xi) := - f(t,x,y,\mathfrak{z}(x,q,a)) - \cL(x,q,A,a)\;,
\end{align*}
where $\cR \subset \cS\setminus\cD$ with a \emph{finite} number of elements.

\begin{Proposition}\label{pr appendix comp princ}
Let $0\le \tau < \theta \le T$ and
 $u_1$ (resp. $u_2$) be a lower semi-continuous super-solution (resp. upper semi-continuous sub-solution)  with polynomial growth, of
\begin{align}\label{eq pde for comparison}
-\partial_t \varphi + \cF(t,x,\varphi,D\varphi, D^2\varphi) = 0\;\text{ on }\;[\tau,\theta)\times\R^d\times (0,1) 
\end{align}
%on $(\tau,\theta)\times\R^d \times [0,1]$
with $u_1 \ge u_2$ on $[\tau,\theta]\times\R^d\times \set{0,1} \bigcup \set{\theta}\times\R^d\times[0,1]$, then
$u_1 \ge u_2$ on $[\tau,\theta]\times\R^d\times [0,1]$.
\end{Proposition}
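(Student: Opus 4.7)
My plan is to adapt the classical doubling-of-variables comparison argument for second-order parabolic equations (Theorem 8.3 in \cite{crandall1992user}) to the present setting. The operator $\cF$ is a supremum of \emph{finitely many} operators $F^a$, which is the main simplification compared with the compactified formulation treated in \cite{BC17}.

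First I would record the structural properties that make the argument go through. Each $F^a$ is (i) Lipschitz in $(x,y,q)$ uniformly in $a \in \cR^\flat$, since $f,\mu,\sigma$ are Lipschitz and $\cR^\flat$ is finite; (ii) degenerate elliptic, because the matrix multiplying $A$ in $-\cL$ is, up to sign, the positive semi-definite
$$B(a) := \begin{pmatrix} \sigma(x)\sigma(x)^\top & \sigma(x) a \\ a^\top \sigma(x)^\top & |a|^2 \end{pmatrix} = \begin{pmatrix} \sigma(x) \\ a^\top \end{pmatrix}\begin{pmatrix} \sigma(x)^\top & a \end{pmatrix};$$
and (iii) nondecreasing in $y$, since $y \mapsto f(t,x,y,z)$ is nonincreasing by \HYP{}(ii). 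These properties pass to $\cF = \sup_{a \in \cR^\flat} F^a$.

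Second, I would argue by contradiction, assuming $M := \sup_{[\tau,\theta]\times\R^d\times[0,1]}(u_2 - u_1) > 0$. To cope with polynomial growth in $x$, I would subtract a smooth penalty $\kappa(1+|x|^2)^m$ from $u_2$ (and symmetrically modify $u_1$), with $m$ large enough to dominate the growth, and add $\eta/(\theta-t)$ to push the maximizer away from $t=\theta$. Then I would perform the standard doubling on $(t,x,p)$ with penalty $\frac{1}{2\varepsilon}(|x-y|^2+|p-q|^2+(t-s)^2)$. For $\varepsilon,\kappa,\eta$ small, the boundary hypothesis $u_1 \ge u_2$ on $[\tau,\theta]\times\R^d\times\{0,1\}$ and at $t=\theta$ forces the penalized maximizer $(\hat t,\hat s,\hat x,\hat y,\hat p,\hat q)$ to lie in the interior $[\tau,\theta)\times\R^d\times(0,1)$ of both copies of the domain, and the growth penalty prevents $(\hat x,\hat y)$ from escaping to infinity.

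Third, I would apply the parabolic Crandall--Ishii lemma at this point and plug the resulting jets into the viscosity inequalities. Because $\cF$ is a maximum over a \emph{finite} set, the subsolution inequality at $(\hat t,\hat x,\hat p)$ yields a single $a_\star \in \cR^\flat$ with $F^{a_\star}(\Xi_{\mathrm{sub}}) \ge 0$, and the supersolution inequality applied at the \emph{same} $a_\star$ gives $F^{a_\star}(\Xi_{\mathrm{sup}}) \le 0$. Subtracting and using the Lipschitz bound on $F^{a_\star}$ in $(x,q)$, together with the Crandall--Ishii matrix inequality absorbed by the positive semi-definiteness of $B(a_\star)$ (so the second-order terms contribute with a favourable sign), I would reach $\text{const}\cdot M \le \omega(\varepsilon) + \omega(\kappa) + \omega(\eta)$, and the contradiction follows upon letting $\varepsilon \to 0$, then $\kappa,\eta \to 0$.

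The hard part will be the bookkeeping: the Lipschitz-in-$y$ dependence of $\cF$ is not strictly monotone, so one must absorb it by a Gronwall-type estimate on the supremum $M(t) := \sup_{x,p}(u_2 - u_1)(t,\cdot)^+$, rather than concluding directly at a single penalized maximum point; and the polynomial penalty must be chosen of sufficiently high degree without disrupting the degenerate-elliptic and Lipschitz controls at the maximizer. The finiteness of $\cR^\flat$ makes every $a$-dependent constant uniform and keeps the selection of a common $a_\star$ in the sub/super inequalities entirely elementary.
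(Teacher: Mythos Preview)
The paper does not actually provide a proof of this proposition. It is stated in the appendix under ``Representation and comparison results'' as a background fact and is immediately followed by Corollary~\ref{co comp princ}, whose proof reads in full ``This is a direct application of the comparison principles.'' The authors evidently regard comparison for the finite-control operator $\cF = \sup_{a \in \cR^\flat} F^a$ as standard, the delicate work having been done in \cite{BC17}, Theorem~3.2, for the compactified operator $\cH$ with unbounded controls.

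Your outline is a correct and standard route to this result. Two remarks. First, the monotonicity hypothesis in \HYP{}(ii) already makes $\cF$ nondecreasing in $y$ (the equation is \emph{proper} in the Crandall--Ishii--Lions sense), so the Gronwall-type detour you flag in your last paragraph is unnecessary here; if one wanted strict monotonicity, the exponential-in-time change of unknown mentioned in Remark~2.1(i) of the paper delivers it. Second, the only genuinely non-routine step is pushing the penalized maximizer away from $p \in \{0,1\}$: after doubling you have two copies $(\hat p,\hat q)$ and both must land in $(0,1)$. This follows because the ordering $u_1 \ge u_2$ is assumed \emph{strongly} on $[\tau,\theta]\times\R^d\times\{0,1\}$ and the penalty forces $|\hat p - \hat q| \to 0$, but you should make the semi-continuity argument at the lateral boundary explicit. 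With those points in hand your plan goes through.
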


\begin{Corollary}\label{co comp princ}
There exists a unique continuous solution $w$ to
\eqref{eq pde for comparison} 
or
equivalently
\begin{align} \label{eq equiv pde for comparison}
\sup_{\eta \in \cR} H^\eta(t,x,\varphi,\partial_t \varphi, D\varphi, D^2\varphi) = 0 \;\text{ on }\;[\tau,\theta)\times\R^d\times (0,1) 
\end{align}
satisfying $w(\cdot) = \Psi(\cdot)$ on $ [\tau,\theta]\times\R^d\times\set{0,1}\bigcup\set{\theta}\times\R^d\times[0,1]$, 
where $\Psi \in \cC^0$ 
\end{Corollary}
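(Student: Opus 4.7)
The plan is to derive the Corollary from Proposition \ref{pr appendix comp princ} by first establishing the equivalence of the two PDE formulations, then invoking Perron's method for existence.

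For the equivalence, I would note that for $\eta \in \cS\setminus\cD$ one has
\begin{align*}
H^\eta(\Theta) = (\eta^1)^2\bigl(-b + F^{\eta^\flat}(\Xi)\bigr).
\end{align*}
Since $\cR \subset \cS\setminus\cD$ is finite, there exists $c > 0$ with $(\eta^1)^2 \ge c$ on $\cR$; the supremum over a finite family with strictly positive scaling factors preserves both the sign and the vanishing of the supremum, so $-b + \cF(\Xi) \ge 0$ (resp. $\le 0$, $= 0$) if and only if $\sup_{\eta \in \cR} H^\eta(\Theta) \ge 0$ (resp. $\le 0$, $= 0$), where $\cF(\Xi) := \sup_{a \in \cR^\flat} F^a(\Xi)$. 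Passing this elementary observation to test functions, the two formulations have the same viscosity sub- and supersolutions. Uniqueness then follows immediately: if $w_1,w_2$ are two continuous solutions sharing the datum $\Psi$ on the parabolic boundary $[\tau,\theta]\times\R^d\times\{0,1\} \cup \{\theta\}\times\R^d\times[0,1]$, applying Proposition \ref{pr appendix comp princ} in both directions yields $w_1 = w_2$.

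For existence I would apply Perron's method. Using $f(t,x,0,0) = 0$ and the monotonicity of $y \mapsto f(t,x,y,z)$ from Assumption (H)(ii), any constant $M \ge \|\Psi\|_\infty \vee 0$ is a classical supersolution, since $\cF(t,x,M,0,0) = -f(t,x,M,0) \ge -f(t,x,0,0) = 0$; symmetrically, $-M$ is a subsolution, and these provide global barriers. The Perron envelope $w := \sup\{v\ \text{subsolution} : -M \le v \le M\}$ is then a discontinuous viscosity solution, whose upper semicontinuous envelope $w^*$ is a subsolution and lower semicontinuous envelope $w_*$ a supersolution. To ensure that $w$ continuously attains $\Psi$ on the parabolic boundary, I would construct local barriers near each boundary point $\zeta_0 = (t_0,x_0,p_0)$ of the form $\Psi(\zeta_0) \pm \omega_\Psi(\|\zeta - \zeta_0\|) \pm C\|\zeta - \zeta_0\|^2$, where $\omega_\Psi$ is the modulus of continuity of $\Psi$; for $C$ large these are sub/supersolutions (the quadratic term dominates all bounded terms in the operator) that squeeze $w_*$ and $w^*$ to the value $\Psi(\zeta_0)$ at $\zeta_0$. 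The comparison principle applied to $w^*$ and $w_*$ then yields $w^* \le w_*$ on the whole domain, hence $w = w^* = w_*$ is continuous.

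The main obstacle will be the barrier construction at the corners $\{\theta\}\times\R^d\times\{0,1\}$, where boundary data from the terminal face and the lateral faces have to match up continuously; here one relies on the continuity of $\Psi$ together with an implicit polynomial growth control in $x$ so that the local barriers remain dominated by the global ones (this growth assumption is precisely what is required to invoke Proposition \ref{pr appendix comp princ}). Once the barriers are in place, the remainder is a standard application of the Perron--Ishii scheme.
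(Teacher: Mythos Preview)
Your equivalence and uniqueness arguments are exactly what the paper does: it observes that $H^\eta(\Theta)$ and $-b + F^{\eta^\flat}(\Xi)$ have the same sign (since $(\eta^1)^2>0$ on the finite set $\cR\subset\cS\setminus\cD$), so the two PDEs share the same viscosity sub- and supersolutions, and then uniqueness is read off from Proposition~\ref{pr appendix comp princ}. That part is identical.

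Where you diverge is on existence. The paper's proof is a single line (``direct application of the comparison principles'') and does not spell out any existence argument at all. The intended route, in keeping with the rest of the article, is almost certainly the probabilistic one: for the one-control case Lemma~\ref{le prob rep wa} exhibits the solution as the $Y$-component of a BSDE, and for finite $\cR$ one obtains a candidate as the infimum over $a\in\cR^\flat$ of these BSDE values (or, equivalently, via the controlled-BSDE representation analogous to \eqref{eq proba rep v}), which is then shown to be a viscosity solution by the dynamic-programming arguments of \cite{BER15, BC17}. Your Perron approach is a legitimate purely PDE alternative and has the advantage of being self-contained, but it is more work than the paper intends here.

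One technical remark on your barriers: the function $\zeta\mapsto\omega_\Psi(\|\zeta-\zeta_0\|)$ is typically not $\cC^2$, so $\Psi(\zeta_0)\pm\omega_\Psi(\|\zeta-\zeta_0\|)\pm C\|\zeta-\zeta_0\|^2$ is not directly usable as a classical sub/supersolution. The standard fix is to use $\Psi(\zeta_0)\pm\varepsilon\pm C_\varepsilon\|\zeta-\zeta_0\|^2$ for each $\varepsilon>0$ (choosing $C_\varepsilon$ so that the quadratic dominates $\omega_\Psi$ on a small ball and the operator), and then let $\varepsilon\downarrow 0$ after applying comparison. With that adjustment your outline goes through.
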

\begin{proof}
This is a direct application of the comparison principles. The equivalence between
\eqref{eq pde for comparison}  and  \eqref{eq equiv pde for comparison}, comes from the fact
that
$
H^\eta(\Theta)
$
and
$
-b - F^{\eta^\flat}(\Xi)
$
have the same sign. \eproof
\end{proof}

\begin{Lemma}\label{le prob rep wa}
(i) Let $a \in \R^d$ and $w^a$ be the unique solution to  
\begin{align}
-\partial_t \varphi + F^a(t,x,\varphi,D\varphi, D^2\varphi) = 0 \;\text{ on }\;[\tau,\theta)\times\R^d\times (0,1) 
\end{align}
satisfying $w^a(\cdot) = \Psi(\cdot)$ on $ [\tau,\theta]\times\R^d\times\set{0,1}\bigcup\set{\theta}\times\R^d\times[0,1]$, 
where $\Psi \in \cC^0$.
Then it admits the following probabilistic representation:
\begin{align} \label{eq rep bsde basic}
    w^a(t,x,p) = Y_t,
  \end{align}
  where $Y$ is the first component of the solution $(Y,Z)$ to the
  following BSDE with random terminal time
  \begin{align} \label{eq probrepmon} 
  Y_\cdot &= \Psi(\cT, X^{t,x}_\cT,P^{t,p,a}_\cT)  + \int_\cdot^\cT f(s,
    X^{t,x}_s, Y_s, Z_s) \ud s -
    \int_\cdot^\cT Z_s \ud W_s,
  \end{align}
  with
  $\cT := \inf \{s \ge t : P^{t,p,a}_s \in \{0, 1\} \} \wedge \theta$ and
  %{\color{magenta} recalling } $P^{t,p,a}_\cdot = p + a(W_\cdot - W_t)$ and
  \begin{align*}
  \begin{cases}
  P^{t,p,a}_\cdot = p + a(W_\cdot - W_t)
  \\
  X^{t,x}_\cdot = x + \int_t^\cdot \mu(X^{t,x}_s) \ud s + \int_t^\cdot \sigma(X^{t,x}_s)\ud W_s \;.
  \end{cases}
  \end{align*}
  (ii) Assume moreover, that $\Psi(T,\cdot) = \phi(\cdot)$ and $\Psi(\cdot,1)=B^1(\cdot,\phi)$,
  $\Psi(\cdot,0)=B^0(\cdot,\phi)$, with the notation of \eqref{eq pde zero control}. Then the solution $(\tilde Y, \tilde Z)$ to
  \begin{align} \label{eq probrepmon bis} 
    Y_\cdot &= \phi(X^{t,x}_\theta,\tilde{P}^{t,p,a}_\theta)   + \int_\cdot^\theta f(s,
    X^{t,x}_s, Y_s, Z_s) \ud s -
    \int_\cdot^\theta Z_s \ud W_s\;,
  \end{align}
  where $\tilde{P}^{t,p,a}_\cdot := {P}^{t,p,a}_{\cdot \wedge \cT}$, satisfies
  \begin{align*}
    Y = \tilde Y \mbox{ on } [t,\cT].
  \end{align*}
\end{Lemma}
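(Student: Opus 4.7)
The proof splits naturally into the two parts (i) and (ii), linked by the random terminal time $\cT$.

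For (i), the plan is to use the standard nonlinear Feynman--Kac correspondence between the semilinear PDE and the BSDE with random terminal time. I would define the candidate $u(t,x,p) := Y_t$, with $(Y,Z)$ the unique solution to \eqref{eq probrepmon}; well-posedness of the BSDE follows from the boundedness of $\cT \leq \theta$, the Lipschitz property of $f$, and standard results (e.g.\ \cite{el1997backward}). The main task is then to verify that $u$ is a continuous viscosity solution of $-\partial_t \varphi + F^a(\cdot) = 0$ on $[\tau,\theta)\times\R^d\times(0,1)$, which proceeds in the usual way: continuity of $u$ comes from $L^2$ stability of BSDEs combined with regularity of the forward flow $(X^{t,x}, P^{t,p,a})$; the viscosity sub/supersolution property is obtained via the flow/dynamic-programming property of the BSDE combined with Itô's formula applied to a smooth test function touching $u$ on a small time interval prior to $\cT$. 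One must also check that $u$ agrees with $\Psi$ on the parabolic boundary; this uses continuity of $\Psi$ together with an $L^2$ estimate of $Y_s - \Psi(\cT, X_\cT, P_\cT)$ as $s \uparrow \cT$. Once this is established, $u$ and $w^a$ solve the same PDE with the same boundary data, and Proposition \ref{pr appendix comp princ} applied to the singleton control set $\set{a}$ yields $u = w^a$, which is \eqref{eq rep bsde basic}.

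For (ii), the plan is to split the BSDE \eqref{eq probrepmon bis} at the stopping time $\cT$ and to identify $\tilde Y_\cT$ with $\Psi(\cT, X^{t,x}_\cT, P^{t,p,a}_\cT)$. First, on $[\cT, \theta]$ the process $\tilde P^{t,p,a}$ is frozen at some $\cF_\cT$-measurable $P^\star \in \set{0,1}$ on $\set{\cT < \theta}$, while $X^{t,x}$ continues to evolve. Applying the classical linear Feynman--Kac representation to the solutions $B^0$ and $B^1$ of \eqref{eq pde zero control} — whose representation is a direct instance of part (i) with $a$ replaced by $0$ and no $p$-variable — I would obtain, on $\set{\cT < \theta}$,
\begin{align*}
\tilde Y_\cT \;=\; B^{P^\star}(\cT, X^{t,x}_\cT) \;=\; \Psi(\cT, X^{t,x}_\cT, P^{t,p,a}_\cT),
\end{align*}
where the second equality uses the definition of $\Psi$ at $p\in\set{0,1}$. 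On $\set{\cT = \theta}$ the same identity is immediate from $\Psi(\theta,\cdot) = \phi(\cdot)$. Hence, restricted to $[t, \cT]$, the pair $(\tilde Y, \tilde Z)$ solves exactly the BSDE \eqref{eq probrepmon} with terminal time $\cT$ and terminal value $\Psi(\cT, X_\cT, P_\cT)$, because $\tilde P = P$ on that interval. Uniqueness of the BSDE solution then yields $Y = \tilde Y$ on $[t, \cT]$, as claimed.

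The main obstacle is in part (i): since $w^a$ is only a viscosity solution, Itô's formula cannot be applied directly to $w^a(s, X_s, P_s)$, so I would avoid this by instead proving that the BSDE value $u$ is itself a viscosity solution and concluding by comparison. The second delicate point, shared by both parts, is the handling of the random boundary hit: continuity of $u$ up to $\set{p \in \set{0,1}}$ and the identification $\tilde Y_\cT = \Psi(\cT, X_\cT, P_\cT)$ both require careful $L^2$ control of the BSDE near $\cT$, exploiting continuity of $\Psi$ and the fact that $\cT$ is an exit time of a bounded martingale rather than merely a deterministic time.
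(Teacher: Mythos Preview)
Your approach to (i) is exactly what the paper has in mind: it simply cites the literature (Darling--Pardoux) for the nonlinear Feynman--Kac correspondence you sketch, and invokes the comparison principle with a singleton control set for uniqueness. One small terminological slip: $B^0$ and $B^1$ solve the \emph{semilinear} equation \eqref{eq pde zero control} with driver $f$, so the representation you apply in (ii) is the nonlinear Feynman--Kac, not the linear one.

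For (ii) your route is valid but genuinely different from the paper's. The paper goes in the opposite direction: starting from the random-terminal-time solution $(Y,Z)$, it \emph{extends} it past $\cT$ by gluing on auxiliary BSDEs ${}^0Y, {}^1Y$ defined on all of $[t,\theta]$ with terminal data $\phi(X_\theta,0)$ and $\phi(X_\theta,1)$, and then verifies by direct computation that the pasted process $(\check Y,\check Z)$ solves \eqref{eq probrepmon bis} on $[t,\theta]$; uniqueness then forces $\check Y=\tilde Y$, and since $\check Y=Y$ on $[t,\cT]$ by construction, the claim follows. Your ``restrict and identify'' argument is more conceptual, while the paper's ``extend and verify'' argument is more explicit and avoids invoking a strong Markov property for BSDEs at $\cT$. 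In your approach, the key step $\tilde Y_\cT = B^{P^\star}(\cT,X^{t,x}_\cT)$ on $\{\cT<\theta\}$ is correct, but to make it rigorous you still need the auxiliary processes ${}^eY$ on $[t,\theta]$, the deterministic-time identity ${}^eY_s = B^e(s,X^{t,x}_s)$, and the observation that $\tilde Y$ and ${}^{P^\star}Y$ coincide on $[\cT,\theta]$ on the relevant $\cF_\cT$-event --- which is precisely the paper's construction in disguise.
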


\begin{proof}
(i) The probabilistic representation is proved in \cite{DP97}. Note that uniqueness to the PDE comes from the previous lemma in the special case where $\cR$ is reduced to one element.
\\
(ii) Let $A := \{\cT = \theta\}, B := \{\cT < \theta, P^{p,a}_{\cT} = 1\}$, and $C := \{\cT < \theta, P^{p,a}_{\cT} = 0\}$, so that $\Omega = A\cup B\cup C$. For $e \in \set{0,1}$, let $({}^eY^{t,x},{}^eZ^{t,x})$ the solution to
\begin{align*}
  Y_\cdot &= \phi(X^{t,x}_\theta,e) + \int_\cdot^\theta f(s,X^{t,x}_s,Y_s,Z_s) \ud s - \int^\theta_\cdot Z_s \ud W_s\;.
\end{align*}
By \eqref{eq pde zero control}, we have $B^e(\tau,X^{t,x}_\tau,\phi) = {}^e Y_\tau$ for $e \in \set{0,1}$.\\
We introduce the following auxiliary processes, for $s \in [t,\theta]$,
\begin{align*}
  \check Y_s &:= Y_s 1_{t \le s \le \cT} + {}^1 Y_s 1_{s > \cT} 1_B + {}^0 Y_s 1_{s > \cT} 1_C,\\
  \check Z_s &:= Z_t 1_{t \le s \le \cT} + {}^1 Z_t 1_{s > \cT} 1_B + {}^0 Z_t 1_{s > \cT} 1_C.
\end{align*}
First, by construction, we have $Y = \check Y$ on $[t,\cT]$. To prove the proposition it is thus sufficient to show that $\tilde Y = \check Y$ on $[t,\theta]$. To this effect, we show that $(\check Y, \check Z)$ is solution of \eqref{eq probrepmon bis}.\\
We have, for all $s \in [t,\theta]$,
\begin{align*}
  \check Y_s = &\left[\Psi(\cT,X^{t,x}_\cT,P^{t,p,a}_\cT) + \int_s^\cT f(u,X^{t,x}_u,Y_u,Z_u) \ud u - \int_s^\cT Z_u \ud W_u \right] 1_{s \le \cT} \\
               &+ \left[\phi(X^{t,x}_\theta,1) + \int_s^\theta f(u,X^{t,x}_u,{}^1Y_u,{}^1Z_u) \ud u - \int_s^\theta {}^1Z_u \ud W_u \right] 1_{\cT < s}1_B \\
               &+ \left[\phi(X^{t,x}_\theta,0) + \int_s^\theta f(u,X^{t,x}_u,{}^0Y_u,{}^0Z_u) \ud u - \int_s^\theta {}^0Z_u \ud W_u \right] 1_{\cT < s}1_C.
\end{align*}
By our hypotheses and by the definition of ${}^p Y_\cdot$, since $\tilde P^{t,p,a}_\theta = 1$ on $B$ and $\tilde P^{t,p,a}_\theta = 0$ on $C$, we have
\begin{align*}
  \Psi(\cT,X^{t,x}_\cT,P^{t,p,a}_\cT) &= \phi(X^{t,x}_\theta,P^{t,p,a}_\theta) 1_A + B^1(\cT,X^{t,x}_\cT,\phi) 1_B + B^0(\cT,X^{t,x}_\cT,\phi) 1_C \\
                                         &= \phi(X^{t,x}_\theta,P^{t,p,a}_\theta) 1_A + {}^1 Y_\cT 1_B + {}^0 Y_\cT 1_C \\
                                         % &= \phi(X^{t,x}_\theta,P^{t,p,a}_\theta) 1_A \\
                                         % &\hspace{0.2cm} + \left[ \phi(X_\theta,1) + \int_\cT^\theta f(u,X^{t,x}_u,{}^1 Y_u,{}^1 Z_u) \ud u - \int_\cT^\theta {}^1 Z_u \ud W_u \right] 1_B \\
                                         % &\hspace{0.2cm} + \left[ \phi(X_\theta,0) + \int_\cT^\theta f(u,X^{t,x}_u,{}^0 Y_u,{}^0 Z_u) \ud u - \int_\cT^\theta {}^0 Z_u \ud W_u \right] 1_C \\
                                         &= \phi(X^{t,x}_\theta,\tilde P^{t,p,a}_\theta) \\
                                         &\hspace{0.2cm} + \left[ \int_\cT^\theta f(u,X^{t,x}_u,{}^1Y_u,{}^1Z_u)\ud u - \int_\cT^\theta {}^1Z_u \ud W_u \right] 1_B \\
                                         &\hspace{0.2cm} + \left[ \int_\cT^\theta f(u,X^{t,x}_u,{}^0Y_u,{}^0Z_u)\ud u - \int_\cT^\theta {}^0Z_u \ud W_u \right] 1_C.
\end{align*}
Thus, since $A \cap \{\cT < s\} = \emptyset$, we deduce
\begin{align*}
  \check Y_s = &\phi(X^{t,x}_\theta,\tilde P^{t,p,a}_\theta) \\
               &+ \left[ \int_\cT^\theta f(u,X^{t,x}_u,{}^1Y_u,{}^1Z_u)\ud u 1_{s \le \cT} 1_B + \int_\cT^\theta f(u,X^{t,x}_u,{}^0Y_u,{}^0Z_u)\ud u 1_{s \le \cT} 1_C \right] \\
               &+ \left[ \int_s^\cT f(u,X^{t,x}_u,Y_u,Z_u) \ud u 1_{s \le \tau} + \int_s^\cT f(u,X^{t,x}_u,{}^1Y_u,{}^1Z_u) \ud u 1_{\cT < s} 1_B \right. \\
               &\left. \hspace{5.4cm}+ \int_s^\cT f(u,X^{t,x}_u,{}^0Y_u,{}^0Z_u) \ud u 1_{\cT < s} 1_C \right] \\
               &- \left[ \int_\cT^\theta {}^1Z_u \ud W_u 1_{s \le \cT} 1_B + \int_\cT^\theta {}^0 Z_u \ud W_u 1_{s \le \cT} 1_C \right] \\
               &- \left[ \int_s^\cT Z_u \ud W_u 1_{s \le \tau} + \int_s^\cT {}^1 Z_u \ud W_u 1_{\cT < s} 1_B + \int_s^\cT {}^0 Z_u \ud W_u 1_{\cT < s} 1_C \right].
\end{align*}
Now, since $({}^1Y_u,{}^1Z_u) = (\check Y_u,\check Z_u)$ on $(\cT,\theta] \cap B$ and $({}^0Y_u,{}^0Z_u) = (\check Y_u,\check Z_u)$ on $(\cT,\theta] \cap B$, and $\int_\cT^\theta f(u,X^{t,x}_u,\check Y_u,\check Z_u) \ud u 1_A = 0$, we get
\begin{align*}
  \int_\cT^\theta f(u,X^{t,x}_u,{}^1Y_u,{}^1Z_u)&\ud u 1_{s \le \cT} 1_B + \int_\cT^\theta f(u,X^{t,x}_u,{}^0Y_u,{}^0Z_u)\ud u 1_{s \le \cT} 1_C\\ &= \int_\cT^\theta f(u,X^{t,x}_u,\check Y_u,\check Z_u) \ud u 1_{s \le \cT}.
\end{align*}
A similar analysis for the other terms shows that $(\check Y, \check Z)$ is a solution to \eqref{eq probrepmon bis} and concludes the proof. \eproof
\end{proof}

%\begin{Lemma}
%La representation peut etre amelioree car $V$ et $0$ sont des solutions qui correspondent au controle $a=0$ et peuvent donc s'etendre jsuqu'en $T$ ... soit \eqref{eq rep bsde basic} se ré-ecrit en modifiant \eqref{eq probrepmon},:
%\begin{align} \label{eq rep bsde useful}
%    w^a(t,x,p) = \cY^{t,x,p,a}_t,
%  \end{align}
%  where $\cY^{t,x,p,a}$ is the first component of the solution $(\cY,\cZ)$ to the
%  following BSDE with random terminal time
%  \begin{align} \label{eq probrepmon} 
%  \cY_\cdot &= \phi( X^{t,x}_T, P^{t,p,a}_\cT)  - \int_\cdot^\cT \textcolor{red}{f(s,X^{t,x}_s, \cY_s, \cZ_s)} \ud s -
%    \int_\cdot^T \cZ_s \ud W_s,
%  \end{align}
%  with
%  $\cT := \inf \{s \ge t : P^{t,p,a}_s \in \{0, 1\} \} \wedge \theta$ and
%  %{\color{magenta} recalling } $P^{t,p,a}_\cdot = p + a(W_\cdot - W_t)$ and
%  \begin{align*}
%  \begin{cases}
%  P^{t,p,a}_\cdot = p + a(W_\cdot - W_t)
%  \\
%  X^{t,x}_\cdot = x + \int_t^\cdot \mu(X^{t,x}_s) \ud s + \int_t^\cdot \sigma(X^{t,x}_s)\ud W_s \;.
%  \end{cases}
%  \end{align*}
%\end{Lemma}

\subsection{Finite differences operator $S_b$}

We recall here the main results about the finite difference approximation defined by the operator $S_b$, see \eqref{Sb}.

\begin{Proposition}[Comparison theorem] \label{comp thm bound}
  Let $0 \le t < s \le T, \delta > 0, h = s-t$ such that \eqref{CFL1}-\eqref{CFL2}-\eqref{CFL3} is satisfied.\\
  Let $(u^1, u^2, v^1, v^2) \in \ell^\infty(\delta\Z)^4$ such that $u^1_k \le u^2_k$ for all $k \in \Z$.
  \begin{enumerate}
  \item For all $k \in \Z, (v, \nabla,\nabla_+,\Delta) \in \R^4$:
    \begin{align*}
      S_b(k,v,\nabla,\nabla_+,\Delta,u^2_k) \le S_b(k,v,\nabla,\nabla_+,\Delta,u^1_k).
    \end{align*}
  \item Assume that, for all $k \in \Z$:
    \begin{align*}
      S_b(k,v^1_k,\nabla_\delta v^1_k,\nabla_{+,\delta}v^1_k,\Delta_\delta v^1_k,u^1_k) \le 0,
      S_b(k,v^2_k,\nabla_\delta v^2_k,\nabla_{+,\delta}v^2_k,\Delta_\delta v^2_k,u^1_k) \ge 0.
    \end{align*}
    Then $v^1_k \le v^2_k$ for all $k \in \Z$.
  \item Assume that, for all $k \in \Z$:
    \begin{align*}
      S_b(k,v^1_k,\nabla_\delta v^1_k,\nabla_{+,\delta}v^1_k,\Delta_\delta v^1_k,u^1_k) = 0,
    S_b(k,v^2_k,\nabla_\delta v^2_k,\nabla_{+,\delta}v^2_k,\Delta_\delta v^2_k,u^2_k) = 0.
    \end{align*}
    Then $v^1_k \le v^2_k$ for all $k \in \Z$.
  \end{enumerate}
\end{Proposition}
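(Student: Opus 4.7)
\medskip

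The plan is to establish the three items in order: Part 1 is immediate from the very structure of $S_b$, Part 2 is the substantive discrete maximum principle, and Part 3 follows at once by combining the first two. For Part 1, the boundary datum $u$ enters $S_b$ only through the additive term $-u_k$ in \eqref{Sb}, so
\begin{align*}
S_b(k,v,q,q_+,A,u^2_k) - S_b(k,v,q,q_+,A,u^1_k) \;=\; u^1_k - u^2_k \;\le\; 0.
\end{align*}

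For Part 2, I would set $M_k := v^1_k - v^2_k$ and $M := \sup_{k\in\Z} M_k$, which is finite thanks to the $\ell^\infty$ bounds, and argue by contradiction assuming $M > 0$. Expanding the inequality $S_b(k,v^1_k,\ldots,u^1_k) - S_b(k,v^2_k,\ldots,u^1_k) \le 0$ via \eqref{Sb} and \eqref{def H hat} and collecting the linear contributions of $v^1 - v^2$ yields a discrete maximum-principle inequality
\begin{align*}
A_0 M_k \;\le\; A_+ M_{k+1} + A_- M_{k-1} + h\bigl[f(t,k\delta,v^1_k,\sigma\nabla_\delta v^1_k) - f(t,k\delta,v^2_k,\sigma\nabla_\delta v^2_k)\bigr],
\end{align*}
with the same coefficients already computed in the proof of Proposition \ref{uniqueness picard scheme}: $A_0 = 1 + 2\theta + h\mu/\delta + \sigma^2 h/\delta^2$, $A_+ = \theta + h\mu/\delta + \sigma^2 h/(2\delta^2)$, $A_- = \theta + \sigma^2 h/(2\delta^2)$, in particular $A_0 = 1 + A_+ + A_-$ and $A_\pm \ge 0$. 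Splitting the nonlinear bracket via $\pm f(t,k\delta,v^2_k,\sigma\nabla_\delta v^1_k)$, the $y$-piece is non-positive at any $k$ with $M_k > 0$ thanks to the monotonicity of $f$ in $y$ in (H)(ii), while the $z$-piece is controlled by Lipschitz continuity as $|z\text{-piece}| \le (hL/(2\delta))|M_{k+1}-M_{k-1}|$.

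Next I would select a maximising sequence $(k_n)$ with $M_{k_n} \to M$; eventually $M_{k_n} > 0$ and the above inequality applies. Setting $\alpha_n := M - M_{k_n+1} \ge 0$ and $\beta_n := M - M_{k_n-1} \ge 0$, the elementary estimate $|M_{k_n+1}-M_{k_n-1}| \le \alpha_n + \beta_n$ and a rearrangement based on $A_+ + A_- = A_0 - 1$ give
\begin{align*}
(M - M_{k_n}) A_0 \;\ge\; M + \alpha_n\Bigl(A_+ - \tfrac{hL}{2\delta}\Bigr) + \beta_n\Bigl(A_- - \tfrac{hL}{2\delta}\Bigr).
\end{align*}
The CFL condition \eqref{CFL2} gives $hL/(2\delta) \le \theta \le A_\pm$, so the right-hand side is $\ge M$; sending $n \to \infty$ then forces $0 \ge M$, a contradiction. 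Part 3 is now immediate: since $u^1 \le u^2$, Part 1 applied to $S_b(k,v^2_k,\ldots,u^2_k) = 0$ yields $S_b(k,v^2_k,\ldots,u^1_k) \ge 0$, and combined with $S_b(k,v^1_k,\ldots,u^1_k) = 0$, Part 2 delivers $v^1 \le v^2$.

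The main obstacle will be the fact that the supremum of $M_k$ over the unbounded lattice $\Z$ is not generally attained, so the one-point maximum-principle argument familiar from Proposition \ref{prop comp thm scheme} must be replaced by the maximising-sequence analysis outlined above; the auxiliary quantities $\alpha_n,\beta_n$ make the two-sided absorption of the Lipschitz error transparent. Structurally, the key point is that the $z$-Lipschitz loss from the driver is absorbed by the $\sigma^2 h/(2\delta^2)$ diffusion contribution together with the Lax--Friedrichs viscosity $\theta$ inside $A_\pm$, which is precisely the reason \eqref{CFL2} demands $hL/(2\delta) \le \theta$.
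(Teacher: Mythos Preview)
Your proof is correct and follows essentially the same route the paper intends: the paper omits the argument, referring to the proof of Proposition \ref{prop comp thm scheme}, and your Parts 1 and 3 mirror that structure exactly, while Part 2 reproduces the same coefficient computation and CFL-based absorption of the Lipschitz error. The one genuine refinement you add is the maximising-sequence device to cope with the fact that the supremum over the unbounded lattice $\Z$ need not be attained, whereas in Proposition \ref{prop comp thm scheme} the maximum is taken over a finite diagonal $\{0,\dots,N^a_\delta\}$ and is therefore realised at some index; your $\alpha_n,\beta_n$ bookkeeping handles this cleanly and is precisely the adaptation needed to make the ``similar'' proof go through here.
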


\begin{proof} The proof is similar to the proof of Proposition \ref{prop comp thm scheme} and is ommited.\eproof \end{proof}

\begin{Proposition} \label{bound}
  Let $\pi,\delta > 0$ such that \eqref{CFL1}-\eqref{CFL2}-\eqref{CFL3} is satisfied for all $h = t_{j+1}-t_j, j = 0, \dots, \kappa-1$.\\
  Let $V_{\pi,\delta} : \pi \times \delta\Z \to \R$ the solution to:
  \begin{align*}
    v^\kappa_k &= g(x_k), k \in \Z,\\
    S_b(k, V^j_k, \nabla_\delta v^j_k, \nabla_{+,\delta}v^j_k, \Delta_\delta v^j_k, v^{j+1}_k) &= 0, k \in \Z, 0 \le j < \kappa.
  \end{align*}
  For all $k \in \Z$, let $U_{\pi,\delta})_k := \frac{(V_{\pi,\delta})_{k+1}-(V_{\pi,\delta})_{k-1}}{2\delta}$.
  Then:
  \begin{enumerate}
  \item $(V_{\pi,\delta}, U_{\pi,\delta}) \in \ell^\infty(\delta\Z)^2$ and their bound is independant of $\pi, \delta$.
  \item $V_{\pi,\delta}$ converges uniformly on compacts sets to $V$, the super-replication price of the contingent claim with payoff $g$.
  \end{enumerate}
\end{Proposition}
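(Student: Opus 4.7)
The proposition combines a uniform stability bound for the scheme $S_b=0$ with its convergence to the super-replication price $V$, which is characterised as the unique bounded continuous viscosity solution of the semi-linear PDE \eqref{eq pde zero control} with terminal condition $g$ (a special case of Corollary \ref{co comp princ} with the single-control ``$a=0$'' situation). Since the scheme is a monotone finite-difference discretisation of that PDE, my plan is to establish the two uniform bounds (the stability ingredient) and then invoke Barles--Souganidis for convergence. The $L^\infty$ bound on $V_{\pi,\delta}$ is the easy part: using $f(t,x,0,0)=0$ and monotonicity of $f$ in $y$, the constants $\pm|g|_\infty$ are respectively a super- and a sub-solution of the one-step problem $S_b=0$, so Proposition \ref{comp thm bound}(3) combined with a backward induction started from $V^\kappa=g$ yields $|V_{\pi,\delta}|_\infty\le |g|_\infty$ uniformly in $\pi$ and $\delta$.

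The bound on $U_{\pi,\delta}$, i.e.\ a uniform discrete Lipschitz estimate for $V_{\pi,\delta}$ in $x$, is the main obstacle. One cannot simply translate the scheme by one grid point in $x$ and invoke comparison, because $f$ depends on $x$ and on the discrete gradient, so the translated sequence $V^j_{\cdot+1}$ does not satisfy the same scheme as $V^j_\cdot$. My plan is to write the scheme in its explicit Picard form, subtract the equations at adjacent points $k$ and $k+1$, and analyse the differences $W^j_k:=V^j_{k+1}-V^j_k$ via a discrete maximum principle on $\|W^j\|_\infty$. The Lipschitz continuity of $f$ in $x$ produces a forcing term of order $hL\delta$ at each time step, while its Lipschitz dependence on $z$ combined with the identity $\nabla_\delta V^j_{k+1}-\nabla_\delta V^j_k=(W^j_{k+1}-W^j_{k-1})/(2\delta)$ produces an additional term controlled by $(hL/\delta)\|W^j\|_\infty$, which by \eqref{CFL2} is bounded by $2\theta\|W^j\|_\infty$ with $2\theta<1/2$; this is precisely where the CFL conditions enter quantitatively to absorb the nonlinear coupling. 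A discrete Gronwall argument starting from $\|W^\kappa\|_\infty\le L_g\delta$ then delivers $\|W^j\|_\infty\le C\delta$ uniformly in $\pi,\delta$, and dividing by $\delta$ gives the desired $L^\infty$ bound on $U_{\pi,\delta}$.

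For the second item, I plan to apply the Barles--Souganidis theorem (Theorem 2.1 of \cite{BS91}) to the scheme $S_b=0$: monotonicity is exactly Proposition \ref{comp thm bound}(1); stability is the $L^\infty$ bound on $V_{\pi,\delta}$ just established; and consistency with \eqref{eq pde zero control} follows from a standard Taylor expansion of $S_b$ around a smooth test function, in which the Lax--Friedrichs artificial diffusion coefficient $\theta\delta^2/h$ vanishes under \eqref{CFL3} (since $\delta\le Mh$ yields $\delta^2/h\le M\delta\to 0$), and the upwind first-order difference $\nabla_{+,\delta}$ reproduces $\mu\partial_x$ to first order. Together with the comparison principle for \eqref{eq pde zero control} (a special case of Proposition \ref{pr appendix comp princ}), the half-relaxed-limit argument of Barles--Souganidis then yields the locally uniform convergence of $V_{\pi,\delta}$ to $V$ as $|\pi|,\delta\to 0$ under \eqref{CFL1}--\eqref{CFL3}.
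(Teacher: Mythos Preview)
Your treatment of the $L^\infty$ bound on $V_{\pi,\delta}$ and of the convergence item is correct and matches the paper: comparison with the constants $0$ and $|g|_\infty$ for stability, and the Barles--Souganidis machinery (monotonicity, stability, consistency, comparison for \eqref{eq pde zero control}) for convergence.

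The gap is in your argument for the bound on $U_{\pi,\delta}$. After subtracting the scheme at $k+1$ and $k$, you estimate the $z$-contribution of $f$ crudely by $(hL/\delta)\|W^j\|_\infty\le 2\theta\|W^j\|_\infty$ and then appeal to a discrete Gronwall. But this yields, per time step,
\[
\|W^j\|_\infty \le \frac{1}{1-2\theta-hL}\big(\|W^{j+1}\|_\infty + hL\delta\big),
\]
and since $(1-2\theta)^{-1}$ is a fixed constant strictly larger than $1$ (not $1+O(h)$), iterating over $\kappa\sim T/h$ steps gives a factor $(1-2\theta)^{-\kappa}\to\infty$ as $h\to 0$. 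The bound is therefore not uniform in $\pi,\delta$, and the Gronwall does not close.

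The correct use of \eqref{CFL2} is pointwise, not in sup-norm: writing the $z$-contribution as $\frac{hc\sigma}{2\delta}(W^j_{k+1}-W^j_{k-1})$ with $|c|\le L$, the coefficients of $W^j_{k\pm1}$ on the right-hand side of the scheme for $W^j$ are $\frac{h\sigma^2}{2\delta^2}+\theta\pm\frac{hc\sigma}{2\delta}$, which remain non-negative precisely by \eqref{CFL2}. At an (approximate) maximum point $k^*$ of $W^j$ this monotonicity lets you replace $W^j_{k^*\pm1}$ by $W^j_{k^*}$, and the $\theta$- and $z$-terms cancel exactly, leaving only $\|W^j\|_\infty(1-hL)\le \|W^{j+1}\|_\infty + hL\delta$, which does iterate uniformly. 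The paper packages the same idea differently: it first derives that $U_{\pi,\delta}=\nabla_\delta V_{\pi,\delta}$ itself satisfies a monotone difference inequality of $S_b$ type with driver perturbed by $\pm(L+L|u|+L|\nabla_\delta u|)$, and then compares with explicit barriers that are \emph{constant in $k$} (so that the troublesome $|\nabla_\delta u|$ term vanishes on the barrier), namely $\underline u^j=1-(L+1)\prod_{i>j}(1-h_iL)^{-1}$ and the analogous upper barrier.
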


\begin{proof}
    We only show the first point, the second one is obtained by applying the arguments of \cite{BS91}, after proving monotonicity, stability and consistency following the steps of Subsection \ref{subsec proof}.\\
  Since $g$ is bounded, it is easy to show that $V_{\pi,\delta}$ is also bounded independently of $\pi,\delta$, and the proof is similar to the proof of Proposition \ref{stable scheme}.\\
  Since $g$ is Lipschitz-continuous, we get that $U_{\pi,\delta}(T,\cdot)$ is bounded. Using the Lipschitz-continuity of $f$, one deduces easily that $U_{\pi,\delta}$ is a solution of
  \begin{align*}
    u^j_k &- u^{j+1}_{k} - h \left( - \mu \nabla_{+,\delta}u^j_k - (\frac{\sigma^2}{2}+\theta\frac{\delta^2}{h})\Delta_\delta u^j_k - L - L|u^j_k| - L|\nabla_\delta u^j_k| \right) \ge 0, k \in \Z, 0 \le j < \kappa, \\
    u^j_k &- u^{j+1}_{k} - h \left( - \mu \nabla_{+,\delta}u^j_k - (\frac{\sigma^2}{2}+\theta\frac{\delta^2}{h})\Delta_\delta u^j_k + L + L|u^j_k| + L|\nabla_\delta u^j_k| \right) \le 0, k \in \Z, 0 \le j < \kappa, \\
    u^\kappa_k &= \frac{g(x_{k+1}) - g(x_{k-1})}{2\delta} \in [-L,L], k \in \Z.
  \end{align*}
  Again, comparison theorems can be proved, and it is now enough to show that there exists $(\underline u, \overline u) \in \ell^\infty(\pi \times \delta\Z)^2$ which are bounded uniformly in $\pi,\delta$ such that
  \begin{align*}
    \underline u^j_k &- \underline u^{j+1}_{k} - h \left( - \mu \nabla_{+,\delta} \underline u^j_k - (\frac{\sigma^2}{2}+\theta\frac{\delta^2}{h})\Delta_\delta \underline u^j_k - L - L| \underline u^j_k| - L|\nabla_\delta \underline u^j_k| \right) \le 0, k \in \Z, 0 \le j < \kappa, \\
    \overline u^j_k &- \overline u^{j+1}_{k} - h \left( - \mu \nabla_{+,\delta}\overline u^j_k - (\frac{\sigma^2}{2}+\theta\frac{\delta^2}{h})\Delta_\delta\overline u^j_k + L + L|\overline u^j_k| + L|\nabla_\delta\overline u^j_k| \right) \ge 0, k \in \Z, 0 \le j < \kappa, \\
    \underline u^\kappa_k &\le -L, \overline u^\kappa_k \ge L, k \in \Z.
  \end{align*}
  We deal with $\underline u$ only, we obtain similar results for $\overline u$.\\
  One can easily show that $\underline u^j := 1 - (L+1)\prod_{k=j+1}^\kappa \frac{1}{1-h_kL}$, where $h_k := t_k - t_{k-1}$, satisfies the requirements. Furthermore, one gets $\underline u^j \ge \underline u^0 \ge 1 - (L+1)2^{\frac{T}{2L}}$, thus one gets that $u$ is lower bounded by $1 - (L+1)2^{\frac{T}{2L}}$.
  \eproof
\end{proof}

%%% Local Variables:
%%% mode: latex
%%% TeX-master: "main"
%%% End:

\section*{Acknowledgements}
This work was partially funded in the scope of the research project ``Advanced techniques for non-
linear pricing and risk management of derivatives'' under the aegis of the Europlace Institute of Finance, with
the support of AXA Research Fund.

\newpage

\bibliographystyle{plain}
\bibliography{bib}

\end{document}